\documentclass[aps, pra, reprint, superscriptaddress, onecolumn, notitlepage, tightenlines, 11pt]{revtex4-2}

\pdfoutput=1

\usepackage{header}

\graphicspath{{graphics/}}

\begin{document}

\bibliographystyle{apsrev4-2}

\title{Quantum Parameterized Complexity}

\author{Michael J. Bremner}
\email{michael.bremner@uts.edu.au}
\affiliation{Centre for Quantum Computation and Communication Technology}
\affiliation{Centre for Quantum Software and Information}
\affiliation{School of Computer Science, Faculty of Engineering \& Information Technology, University of Technology Sydney, NSW 2007, Australia}

\author{Zhengfeng Ji}
\email{jizhengfeng@tsinghua.edu.cn}
\affiliation{Department of Computer Science and Technology, Tsinghua University, Beijing, China}
\affiliation{Centre for Quantum Software and Information}
\affiliation{School of Computer Science, Faculty of Engineering \& Information Technology, University of Technology Sydney, NSW 2007, Australia}

\author{Ryan L. Mann}
\email{mail@ryanmann.org}
\homepage{http://www.ryanmann.org}
\affiliation{Centre for Quantum Computation and Communication Technology}
\affiliation{Centre for Quantum Software and Information}
\affiliation{School of Computer Science, Faculty of Engineering \& Information Technology, University of Technology Sydney, NSW 2007, Australia}
\affiliation{School of Mathematics, University of Bristol, Bristol, BS8 1UG, United Kingdom}

\author{Luke Mathieson}
\email{luke.mathieson@uts.edu.au}
\affiliation{School of Computer Science, Faculty of Engineering \& Information Technology, University of Technology Sydney, NSW 2007, Australia}

\author{Mauro E.S. Morales}
\email{mauricio.moralessoler@student.uts.edu.au}
\affiliation{Centre for Quantum Computation and Communication Technology}
\affiliation{Centre for Quantum Software and Information}
\affiliation{School of Computer Science, Faculty of Engineering \& Information Technology, University of Technology Sydney, NSW 2007, Australia}

\author{Alexis T.E. Shaw}
\email{alexis@alexisshaw.com}
\affiliation{Centre for Quantum Computation and Communication Technology}
\affiliation{Centre for Quantum Software and Information}
\affiliation{School of Computer Science, Faculty of Engineering \& Information Technology, University of Technology Sydney, NSW 2007, Australia}

\begin{abstract}
    Parameterized complexity theory was developed in the 1990s to enrich the complexity-theoretic analysis of problems that depend on a range of parameters. In this paper we establish a quantum equivalent of classical parameterized complexity theory, motivated by the need for new tools for the classifications of the complexity of real-world problems. We introduce the quantum analogues of a range of parameterized complexity classes and examine the relationship between these classes, their classical counterparts, and well-studied problems. This framework exposes a rich classification of the complexity of parameterized versions of QMA-hard problems, demonstrating, for example, a clear separation between the Quantum Circuit Satisfiability problem and the Local Hamiltonian problem.
\end{abstract}

\maketitle

{
\hypersetup{linkcolor=black}
\tableofcontents
}

\section{Introduction}
\label{section:Introduction}

Quantum and classical complexity theory provide an essential tool for establishing potential benchmarks by which we can classify the cost of computing a given problem. It allows us to group applications by complexity classes that determine the asymptotic tractability of a problem. However, standard complexity-theoretic classification only considers a single parameter – usually the size of the input to the problem. In practice there may be considerably more structure to a problem than can be quantified by a single variable. This limitation can lead to characterisations that are too broad, such as grouping problems as typical for notoriously difficult complexity classes, e.g., \NP (non-deterministic polynomial time), \QMA (quantum Merlin Arthur), and \QCMA (quantum classical Merlin Arthur), and subsequently might suggest problems are intractable for a given set of parameters, when in realistic scenarios these problems could be practically solvable. For example the $k$-\textsc{Colouring} problem is trivial on trees and $3$-\textsc{SAT} is polynomial-time solvable when the number of variables or the number of clauses is fixed. By varying the parameters considered, we can gain a more nuanced understanding of the conditions that contribute to the tractability of problems.

Parameterized complexity theory was developed in the 1990s to enrich the complexity-theoretic analysis of problems that depend on a range of parameters~\cite{downey1999parameterized, flum2006parameterized, downey2013fundamentals, cygan2015parameterized}. The key motivation was to develop tools that more closely mirror the real-world use cases of heuristics. Central to the theory is an examination of how the complexity of a problem might change if a certain parameter $k$ varies independently of the instance size $n$. The parameterized analogue of the complexity class \PT (polynomial time) is the class \FPT (fixed-parameter tractable). Informally, it is the class of problems that can be solved in time $f(k) \cdot n^{O(1)}$ for some computable function $f$. A classic example of this is \textsc{Vertex Cover}, which is \NP-complete in its usual form but has an $O(n)$ time algorithm if we consider constant-size covers~\cite{chen2010improved} and hence is in \FPT. Another important application of parameterized complexity is its use in determining the potential intractability of a problem. Parameterized intractability classes include \paraNP (parameterized non-deterministic polynomial time), \XP (slice-wise polynomial time), \WP, and the \W hierarchy (weft hierarchy).

Parameterized complexity has been applied to the classical simulation of quantum systems. In particular, fixed-parameter tractable algorithms have been established for simulating quantum systems when parameterized by the treewidth~\cite{markov2008simulating} and by the number of non-Clifford gates~\cite{bravyi2016trading, bravyi2016improved, bravyi2019simulation}. This was further extended to quantum Merlin Arthur problems when parameterized by the number of non-Clifford gates in the verification circuit~\cite{arunachalam2022parameterized}.

\subsection{Quantum Parameterized Complexity Framework}
\label{section:QuantumParameterizedComplexityFramework}

We establish a quantum equivalent of classical parameterized
complexity theory, motivated by the need to establish new tools for the complexity-theoretic classification of real-world problems. We introduce the quantum analogues of a range of parameterized complexity classes (see Table~\ref{table:ClassicalvsQuantumParameterizedComplexityClasses}), and examine the relationship between these classes, their classical counterparts, and well-studied problems (see Fig.~\ref{figure:ComplexityClassesProblemsRelations}). In some cases, the quantum generalisations and their relationships follow almost directly from the equivalent classical definitions and their relationships. However, in other cases, the quantum generalisations are much less straightforward, for example, in the case of the quantum generalisation of the \W hierarchy.

\begin{table}[ht]
\centering
\begin{tabularx}{\textwidth}{ccX}
    \hline
    \textbf{Classical Class} & \textbf{Quantum Class} & \textbf{Description} \\
    \hline
    \FPT & \FPQT & Fixed-Parameter Quantum Tractable. \\
    \hline
    \paraNP & \paraQMA & Parameterized Quantum Merlin Arthur. \\
    & \paraQCMA & Parameterized Quantum Classical Merlin Arthur. \\
    \hline
    \XP & \XQP & Slice-Wise Quantum Polynomial Time. \\
    \hline
    \WP & \QWP & FPQT reducible to \textsc{Weight-$k$ Quantum Circuit Satisfiability}. \\
    & \QCWP & FPQT reducible to \textsc{Classical Weight-$k$ Quantum Circuit Satisfiability}. \\
    \hline
    $\W[t]$ & $\QW[t]$ & FPQT reducible to \textsc{Weight-$k$ Weft-$t$ Depth-$d$ Quantum Circuit Satisfiability}. \\
    & $\QCW[t]$ & FPQT reducible to \textsc{Classical Weight-$k$ Weft-$t$ Depth-$d$ Quantum Circuit Satisfiability}. \\
    & $\QCWc[t]$ & FPQT reducible to \textsc{Hamming Weight-$k$ Weft-$t$ Depth-$d$ Quantum Circuit Satisfiability}. \\
    \hline
\end{tabularx}
\caption{Classical parameterized complexity classes and their quantum analogues.}
\label{table:ClassicalvsQuantumParameterizedComplexityClasses}
\end{table}

\begin{figure}[ht]
    \centering
    \includegraphics[width=\textwidth]{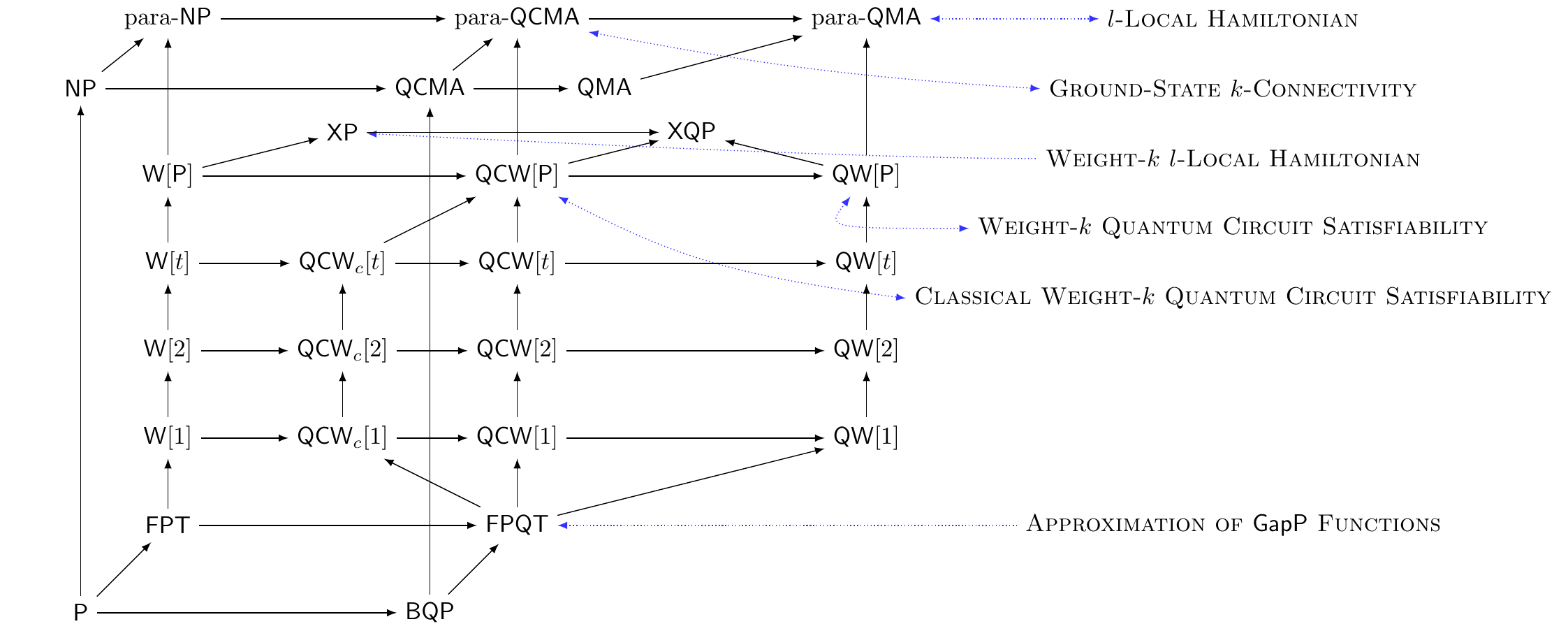}
    \caption{Complexity classes and problems discussed in this paper.}
    \label{figure:ComplexityClassesProblemsRelations}
\end{figure}

For quantum parameterized tractability, we introduce the complexity class \FPQT (fixed-parameter quantum tractable; Definition~\ref{definition:FixedParameterQuantumTractable}). Informally, \FPQT is the class of parameterized decision problems that are tractable on a quantum computer. We discuss the relationship of \FPQT with other well-studied complexity classes. While the generalisation of \FPT to \FPQT is relatively straightforward, the quantum generalisations of parameterized intractability classes is more complicated. As is standard in quantum complexity theory, we consider \QMA and \QCMA as the natural generalisations of the class \NP. This gives the parameterized generalisations of \paraNP as \paraQMA (parameterized quantum Merlin Arthur; Definition~\ref{definition:paraQMA}) and \paraQCMA (parameterized quantum classical Merlin Arthur; Definition~\ref{definition:paraQCMA}). The parameterized class \XP also readily generalises to the quantum case as \XQP (slice-wise quantum polynomial time; Definition~\ref{definition:XQP}). In classical parameterized complexity theory the \W hierarchy, its finite levels $\W[t]$, and their asymptotic limit \WP play an important role in bounding \FPT. We introduce quantum generalisations of the \W hierarchy via the \textsc{Weight-$k$ Quantum Circuit Satisfiability} problem to establish \QWP (Definition~\ref{definition:QWP}) and the \QW hierarchy (quantum weft hierarchy; Definition~\ref{definition:QWT}) and the \textsc{Classical Weight-$k$ Quantum Circuit Satisfiability} problem to establish \QCWP (quantum classical weft hierarchy; Definition~\ref{definition:QCWP}) and the \QCW hierarchy (Definition~\ref{definition:QCWT}). Further, we introduce a variation of the \QCW hierarchy via the \textsc{Hamming Weight-$k$ Quantum Circuit Satisfiability} problem to establish the \QCWc hierarchy (Definition~\ref{definition:QCWCT}).

We establish several structural results concerning quantum and classical parameterized complexity classes, for example, we show that $\FPT=\FPQT$ if and only if $\PT=\BQP$ (Proposition~\ref{proposition:FPTvsFPQT}). We also establish key technical components of quantum parameterized complexity such as FPQT reductions (Section~\ref{section:FixedParameterQuantumTractableReductions}). Further, we apply the notion of fixed-parameter quantum tractability to the problem of approximate counting (Section~\ref{section:ApproximateCounting}) and approximating quantum circuit probability amplitudes (Theorem~\ref{theorem:AdditiveErrorApproximationAlgorithmAmplitudes} and Corollary~\ref{corollary:MultiplicativeErrorApproximationAlgorithmAmplitudes}).

One of our most important observations concerns the complexity of weighted quantum Merlin Arthur problems, i.e., where the witness state is constrained to be a superposition of $n$-bit strings of Hamming weight $k$. We show that the \textsc{Weight-$k$ Quantum Circuit Satisfiability} problem is \QWP-complete under FPQT reductions (Proposition~\ref{proposition:WeightKQuantumCircuitSatisfiabilityQWPComplete}) and the \textsc{Weight-$k$ $l$-Local Hamiltonian} problem is in \XP (Proposition~\ref{proposition:WeightLocalHamiltonianinXP}). Since \textsc{Weight-$k$ Quantum Circuit Satisfiability} cannot be in \XP unless $\PT=\BQP$, this demonstrates a clear separation between the two problems.

\subsection{Discussion and Open Problems}
\label{section:Discussion and Open Problems}

There are several important open problems that still remain. The most compelling of these is the question of how to saturate the power of \FPQT in a natural way. While it is possible to construct problems for FPT-sized quantum circuits, this does not seem to naturally capture the role of the parameter. Further, it seems that many known \QMA-complete problems do not have parameterizations that are known to be in \FPQT. This is in contrast to the classical case, where there are several \NP-complete problems whose parameterizations are in \FPT.

Another important open problem is to identify natural complete problems for finite levels of the \QW hierarchy. While we establish natural complete problems for \paraQMA (\textsc{$l$-Local Hamiltonian}; Corollary~\ref{corollary:LLocalHamiltonianParaQMAComplete}) and \paraQCMA (\textsc{Ground State $k$-connectivity}; Corollary~\ref{corollary:GroundStateKConnectivityParaQCMAComplete}), we have been unable to establish natural complete problems for finite levels of the \QW hierarchy. This is contrast to the \W hierarchy, where there are several examples of natural complete problems, e.g., \textsc{$k$-Independent Set} for $\W[1]$ and \textsc{$k$-Dominating Set} for $\W[2]$~\cite{downey2013fundamentals}.

\section{Quantum Parameterized Tractability}
\label{section:QuantumParameterizedTractability}

We begin by introducing the theory encompassing tractability for parameterized problems. A \emph{parameterized language} is a language equipped with an additional specified input, the \emph{parameter}. Historically, parameterized problems have been defined using either an explicit parameter~\cite{downey1999parameterized, downey2013fundamentals}, or by defining a \emph{parameterization}~\cite{flum2006parameterized}. While there are subtle differences between these two approaches, for practical purposes they are equivalent, and we use the explicit parameterization.
\begin{definition}[Parameterization]
    A parameterization of a finite alphabet $\Sigma$ is a mapping $\kappa:\Sigma^*\to\mathbb{Z}^+$ that is polynomial-time computable. The trivial parameterization $\kappa_\text{trivial}$ is the parameterization with $\kappa_\text{trivial}(x)=1$ for all $x\in\Sigma^*$.
\end{definition}
  
We now define a \emph{parameterized problem}.
\begin{definition}[Parameterized problem]
    A parameterized problem over a finite alphabet $\Sigma$ is a pair $(L,\kappa)$ where $L\subseteq\Sigma^*$ is a set of strings over $\Sigma$ and $\kappa$ is a parameterization of $\Sigma$. We say that a parameterized problem $(L,\kappa)$ over the alphabet $\Sigma$ is \emph{trivial} if either $L=\varnothing$ or $L=\Sigma^*$.
\end{definition}

It is sometimes useful to consider parameterized problems for fixed values of the parameter. To do so the formal definition of a \emph{slice} of a parameterized problem is used.
\begin{definition}[Slice]
    Let $(L,\kappa)$ be a parameterized problem over the finite alphabet $\Sigma$ and let $l\in\mathbb{Z}^+$ be a positive integer. The $l^{\mathrm{th}}$ slice of $(L,\kappa)$ is the standard problem
    \begin{equation}
        (L,\kappa)_l \coloneqq \{x \in L \mid \kappa(x)=l\}. \notag
    \end{equation}
\end{definition}

We now describe the foundational definitions for tractability of quantum parameterized problems.

\subsection{Fixed-Parameter Quantum Tractable}
\label{section:FixedParameterQuantumTractable}

The central complexity class for establishing tractability in the quantum parameterized framework is Fixed-Parameter Quantum Tractable (\FPQT). Informally, an FPQT algorithm is a quantum algorithm that, for a parameterized problem $(L,\kappa)$, decides if $x$ is a member of $L$ with error probability at most $1/3$ in time $f(\kappa(x))\cdot\abs{x}^{O(1)}$ for some computable function $f$.
\begin{definition}[FPQT algorithm]
    Let $(L,\kappa)$ be a parameterized problem over the alphabet $\Sigma$. An algorithm $\mathcal{A}$ is a \emph{FPQT algorithm for $(L,\kappa)$} if the following conditions are satisfied.
    \begin{enumerate}
        \item There is a computable function $f:\mathbb{Z}^+\to\mathbb{Z}^+$ and a polynomial $p\in\mathbb{N}[X]$, such that, for every $x\in\Sigma^*$, the size of an FPT-uniform quantum circuit that computes $\mathcal{A}$ on input $x$ is at most $f(\kappa(x)) \cdot p(\abs{x})$.
        \item For every $x\in\Sigma^*$,
        \begin{itemize}
            \item If $x \in L$, then $\textbf{Pr}[\text{$\mathcal{A}(x)$ accepts}]\geq\frac{2}{3}$.
            \item If $x \notin L$, then $\textbf{Pr}[\text{$\mathcal{A}(x)$ accepts}]\leq\frac{1}{3}$.
        \end{itemize}
    \end{enumerate}
\end{definition}
The error probability of $1/3$ is completely arbitrary and can be replaced by any constant non-zero probability less than $1/2$. Note that we have adopted a uniform notion of an FPQT algorithm, however, it is also possible to adopt a non-uniform notion. We shall now introduce the complexity class \FPQT, which consists of all parameterized problems with an FPQT algorithm.
\begin{definition}[\FPQT]
    \label{definition:FixedParameterQuantumTractable}
    The class \FPQT consists of all parameterized problems that have an FPQT algorithm.
\end{definition}

\FPQT is the quantum analogue of the classical complexity class Fixed-Parameter Polynomial time (\FPT)~\cite{downey1999parameterized} and the parameterized version of the quantum complexity class Bounded-error Quantum Polynomial time (\BQP)~\cite{bernstein1997quantum}. It is easy to see that the slices of any problem in \FPQT are in \BQP.

\begin{proposition}
    Let $(L,\kappa)$ be a parameterized problem and let $l\in\mathbb{Z}^+$ be a positive integer. If $(L,\kappa)$ is in \FPQT, then $(L,\kappa)_l$ is in \BQP.
\end{proposition}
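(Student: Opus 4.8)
The plan is to unfold the definition of \FPQT and specialise it to the fixed parameter value $l$. First I would fix an FPQT algorithm $\mathcal{A}$ for $(L,\kappa)$, together with the computable function $f\colon\mathbb{Z}^+\to\mathbb{Z}^+$ and the polynomial $p$ witnessing that the FPT-uniform quantum circuit computing $\mathcal{A}$ on input $x$ has size at most $f(\kappa(x))\cdot p(\abs{x})$, and that the completeness and soundness bounds of $2/3$ and $1/3$ hold.

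Next I would describe a \BQP algorithm $\mathcal{B}$ for the slice $(L,\kappa)_l$: on input $x$, first compute $\kappa(x)$ — which takes polynomial time since $\kappa$ is, by definition, a polynomial-time computable parameterization — and reject immediately if $\kappa(x)\neq l$; otherwise run $\mathcal{A}(x)$ and output its answer. Since $(L,\kappa)_l = \{x\in L \mid \kappa(x)=l\}$ is a language rather than a promise problem, this is the correct behaviour: inputs with $\kappa(x)\neq l$ are not in the slice and are rejected, while on inputs with $\kappa(x)=l$ the algorithm $\mathcal{B}$ simply inherits the acceptance probabilities of $\mathcal{A}$. Then I would check the resources. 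On inputs with $\kappa(x)=l$ the circuit realising $\mathcal{A}(x)$ has size at most $f(l)\cdot p(\abs{x})$, and as $l$ is fixed the quantity $f(l)$ is a constant, so this bound is $O(p(\abs{x}))$; the extra polynomial-size circuitry computing $\kappa$ and comparing it to $l$ keeps the total size polynomial in $\abs{x}$. For uniformity, FPT-uniformity of $\mathcal{A}$'s circuit family means a description of the circuit for input $x$ is producible in time $f(\kappa(x))\cdot\abs{x}^{O(1)}$; restricted to $\kappa(x)=l$ this is $f(l)\cdot\abs{x}^{O(1)}=\abs{x}^{O(1)}$, i.e.\ polynomial-time uniform, which is what membership in \BQP requires.

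I do not expect any genuine obstacle: the proposition is essentially the observation that a bound of the form $f(k)\cdot n^{O(1)}$ collapses to $n^{O(1)}$ once $k$ is fixed, so that $f(l)$ may be absorbed as a constant. The only two points deserving a line of care are (i) that the slice is a language, so $\mathcal{B}$ must actively reject inputs carrying the wrong parameter value, and (ii) that one should verify FPT-uniformity specialises to ordinary polynomial-time uniformity for a fixed slice, not merely that the resulting circuit size is polynomial.
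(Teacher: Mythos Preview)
Your proposal is correct and is essentially the same approach as the paper's: the paper simply remarks that the proof follows from the fact that $\kappa$ is polynomial-time computable, and your argument is a careful unfolding of precisely that observation (compute $\kappa(x)$, reject if $\kappa(x)\neq l$, and otherwise note that $f(l)$ is a constant so the FPQT bound and FPT-uniformity collapse to polynomial bounds and polynomial-time uniformity).
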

\begin{proof}
    The proof follows from the fact that $\kappa$ is polynomial-time computable.
\end{proof}

We shall now present some alternative characterisations of \FPQT.
\begin{theorem}
    \label{theorem:FPQTAlternativeCharacterisations}
    Let $(L,\kappa)$ be a parameterized problem over the alphabet $\Sigma$. Then the following statements are equivalent.
    \begin{enumerate}
        \item \label{theorem:FPQTAlternativeCharacterisations1} $(L,\kappa)$ is in \FPQT.
        \item \label{theorem:FPQTAlternativeCharacterisations2} $(L,\kappa)$ is in \BQP after a precomputation on the parameter. That is, there exists an alphabet $\Pi$, a computable function $\pi:\mathbb{Z}^+\to\Pi^*$, and a problem $X\subseteq\Sigma^*\times\Pi^*$ such that $X$ is in \BQP and, for all instances $x$ of $L$, we have $x \in L$ if and only if $(x,\pi(\kappa(x))) \in X$.
        \item \label{theorem:FPQTAlternativeCharacterisations3} $L$ is decidable and $(L,\kappa)$ is eventually in \BQP. That is, there exists a computable function $\rho:\mathbb{Z}^+\to\mathbb{Z}^+$ and a polynomial-time quantum algorithm that on input $x\in\Sigma^*$ with $\abs{x}\geq\rho(\kappa(x))$, decides if $x$ is a member of $L$ with error probability at most $1/3$.
    \end{enumerate}
\end{theorem}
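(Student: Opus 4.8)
The plan is to establish the three-way equivalence by proving the cycle $\ref{theorem:FPQTAlternativeCharacterisations1} \Rightarrow \ref{theorem:FPQTAlternativeCharacterisations2} \Rightarrow \ref{theorem:FPQTAlternativeCharacterisations3} \Rightarrow \ref{theorem:FPQTAlternativeCharacterisations1}$, mirroring the standard classical argument for \FPT\ (see e.g.~\cite{flum2006parameterized}) but being careful about the quantum resources involved. Throughout, write $k = \kappa(x)$, $n = \abs{x}$, and recall that $\kappa$ is polynomial-time computable, so $k \le \poly(n)$.

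For $\ref{theorem:FPQTAlternativeCharacterisations1} \Rightarrow \ref{theorem:FPQTAlternativeCharacterisations2}$, suppose $\mathcal{A}$ is an FPQT algorithm with circuit size bounded by $f(k) \cdot p(n)$. The idea is to move the (uncomputable-rate) dependence on $k$ into the precomputation $\pi$. I would let $\Pi$ encode (a description of) the FPT-uniform circuit family together with the value $f(k)$, and set $\pi(k)$ to be this advice string; then define $X$ to be the problem that, given $(x, \pi(k))$, simulates $\mathcal{A}$ on $x$ using the provided data. The point is that once $f(k)$ is supplied as input, the running time of this simulation is polynomial in $\abs{x} + \abs{\pi(k)}$, so $X \in \BQP$; and $x \in L \iff (x,\pi(\kappa(x))) \in X$ by correctness of $\mathcal{A}$. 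One subtlety to handle carefully: $\pi$ must be a genuine function of $k$ alone, so the FPT-uniformity machine must be invoked on $k$ (not on $x$) to produce the circuit description — this is exactly why FPT-uniformity rather than plain uniformity is the right notion.

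For $\ref{theorem:FPQTAlternativeCharacterisations2} \Rightarrow \ref{theorem:FPQTAlternativeCharacterisations3}$, first note $L$ is decidable since we can compute $k$, compute $\pi(k)$, and run the \BQP\ algorithm for $X$ (decidability, not efficiency, is all that is claimed). For the ``eventually in \BQP'' part, define $\rho(k) = \abs{\pi(k)}$ (or any computable upper bound on it). On input $x$ with $n \ge \rho(k)$, the string $\pi(k)$ has length at most $n$, so a polynomial-time quantum algorithm can afford to write down $\pi(k)$ — here I would be slightly careful and instead have $\rho(k)$ dominate the \emph{time} to compute $\pi(k)$ and the cost of the \BQP\ machine's polynomial in $\abs{\pi(k)}$, so that the whole simulation of $X$ on $(x,\pi(k))$ runs in time polynomial in $n$. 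Then decide $x \in L$ by this simulation, with error $\le 1/3$ inherited from the \BQP\ algorithm for $X$.

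For $\ref{theorem:FPQTAlternativeCharacterisations3} \Rightarrow \ref{theorem:FPQTAlternativeCharacterisations1}$, build the FPQT algorithm by casework on the size of $n$ relative to $\rho(k)$. If $n \ge \rho(k)$, run the given polynomial-time quantum algorithm, which already meets the error bound. If $n < \rho(k)$, then $n$ is bounded by a computable function of $k$, so the instance belongs to a finite set determined by $k$; since $L$ is decidable we may, as a precomputed function of $k$, tabulate the correct answers for all such $x$ (or simply decide $x$ outright by brute force), at a cost bounded by some computable $f(k)$ independent of $n$. Combining the two branches gives total circuit size $f(k) \cdot \poly(n)$ with error $\le 1/3$, hence $(L,\kappa) \in \FPQT$. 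The main obstacle is not conceptual but bookkeeping: one must ensure at each step that the genuinely $k$-dependent (and possibly astronomically growing or merely computable-but-not-efficient) quantities — the value of $f(k)$, the circuit description from FPT-uniformity, the string $\pi(k)$, the threshold $\rho(k)$ — are only ever invoked either inside the allowed $f(k)$ budget or under the guarantee $n \ge \rho(k)$ that makes them polynomial in $n$; conflating ``computable'' with ``polynomial-time'' anywhere would break the argument.
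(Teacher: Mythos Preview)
Your proposal is correct and follows essentially the same route as the paper, which simply defers to the classical argument of \cite[Theorem~1.37]{flum2006parameterized} and notes that it carries over to the quantum setting. One small aside: polynomial-time computability of $\kappa$ only bounds the bit-length of $k$, not $k$ itself, but you never actually rely on the claimed bound $k \le \mathrm{poly}(n)$, so the argument is unaffected.
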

\begin{proof}
    The proof follows similarly to that of the equivalent classical theorem~\cite[Theorem 1.37]{flum2006parameterized}.
\end{proof}

The following containment is straightforward.
\begin{proposition}
    $\FPT\subseteq\FPQT$.
\end{proposition}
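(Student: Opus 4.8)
The plan is to show that any classical fixed-parameter tractable algorithm can be simulated by an FPQT algorithm with only polynomial overhead and with no error. Let $(L,\kappa)$ be a parameterized problem in \FPT. By definition there is a deterministic algorithm $\mathcal{M}$, a computable function $f\colon\mathbb{Z}^+\to\mathbb{Z}^+$, and a polynomial $p$ such that $\mathcal{M}$ decides, for every $x\in\Sigma^*$, whether $x\in L$ within $f(\kappa(x))\cdot p(\abs{x})$ steps.

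First I would pass from the Turing-machine formulation to a circuit formulation. The standard simulation of a deterministic time-$T$ Turing machine by a Boolean circuit produces a circuit of size polynomial in $T$, and this transformation is itself computable in time polynomial in $T$. Applied with $T=f(\kappa(x))\cdot p(\abs{x})$, this yields an FPT-uniform family of Boolean circuits $\{C_x\}$ deciding $L$ of size bounded by $g(\kappa(x))\cdot q(\abs{x})$ for some computable $g$ and polynomial $q$. Next I would make each $C_x$ reversible in the usual way, replacing every \textsc{and}, \textsc{or}, \textsc{not}, and fan-out gate by its Toffoli/\textsc{cnot} counterpart on fresh ancillas and uncomputing the garbage, and then reinterpret the resulting reversible circuit as a quantum circuit acting on computational-basis states. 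On the input encoding $x$ this quantum circuit outputs the bit $C_x(x)$ deterministically; measuring the output qubit therefore accepts with probability exactly $1$ if $x\in L$ and exactly $0$ otherwise, so the success-probability conditions of the definition of an FPQT algorithm hold trivially. The size of the quantum circuit is still of the form $h(\kappa(x))\cdot r(\abs{x})$ for a computable $h$ and polynomial $r$, and since every step of the translation is effective, FPT-uniformity is preserved throughout. Hence $(L,\kappa)\in\FPQT$.

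There is no real obstacle here; the work is purely bookkeeping, namely checking that each translation (Turing machine to Boolean circuit, Boolean circuit to reversible circuit, reversible circuit to quantum circuit) multiplies the instance-size factor by at most a polynomial while affecting the parameter-dependent factor only through a computable function, and that FPT-uniformity survives each step. Alternatively, one can shortcut the argument using the precomputation characterisation of Theorem~\ref{theorem:FPQTAlternativeCharacterisations}: the classical analogue of that theorem characterises \FPT as the problems lying in \PT after a precomputation on the parameter, and since $\PT\subseteq\BQP$, such a problem lies in \BQP after the same precomputation, placing $(L,\kappa)$ in \FPQT.
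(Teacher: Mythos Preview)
Your argument is correct; the paper does not supply a proof at all, merely asserting the containment as ``straightforward,'' so your direct simulation (Turing machine $\to$ Boolean circuit $\to$ reversible circuit $\to$ quantum circuit, preserving the $f(\kappa(x))\cdot p(\abs{x})$ cost bound and FPT-uniformity) and your alternative via the precomputation characterisation of Theorem~\ref{theorem:FPQTAlternativeCharacterisations} combined with $\PT\subseteq\BQP$ both fill in exactly what the paper leaves implicit.
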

We conjecture that this containment is strict, i.e., $\FPT\neq\FPQT$. However, proving a separation between \FPT and \FPQT is as difficult as proving a separation between \PT and \BQP. Therefore, resolving this conjecture is a hard open problem.
\begin{proposition}
    \label{proposition:FPTvsFPQT}
    $\FPT=\FPQT$ if and only if $\PT=\BQP$.
\end{proposition}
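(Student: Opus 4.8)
The plan is to prove the two implications separately: for the direction $\FPT=\FPQT \Rightarrow \PT=\BQP$ I would restrict attention to trivially parameterized problems, and for $\PT=\BQP \Rightarrow \FPT=\FPQT$ I would use the ``precomputation on the parameter'' characterisation of \FPQT\ from Theorem~\ref{theorem:FPQTAlternativeCharacterisations}.

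\emph{The direction $\FPT=\FPQT\Rightarrow\PT=\BQP$.} Since $\PT\subseteq\BQP$ holds unconditionally, it suffices to show $\BQP\subseteq\PT$. Given $L\in\BQP$, consider the trivially parameterized problem $(L,\kappa_\text{trivial})$. Because $\kappa_\text{trivial}(x)=1$ for all $x$, a \BQP\ algorithm for $L$ is, with the constant function $f\equiv 1$, exactly an FPQT algorithm for $(L,\kappa_\text{trivial})$ (FPT-uniformity collapses to ordinary polynomial-time uniformity when the parameter is constant); hence $(L,\kappa_\text{trivial})\in\FPQT=\FPT$. An \FPT\ algorithm for a trivially parameterized problem runs in time $f(1)\cdot\abs{x}^{O(1)}=\abs{x}^{O(1)}$, so $L\in\PT$. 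In fact this argument shows $(L,\kappa_\text{trivial})\in\FPT\iff L\in\PT$ and $(L,\kappa_\text{trivial})\in\FPQT\iff L\in\BQP$, so equality of the parameterized classes on trivially parameterized problems already forces equality of the unparameterized ones.

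\emph{The direction $\PT=\BQP\Rightarrow\FPT=\FPQT$.} The inclusion $\FPT\subseteq\FPQT$ was established above, so I only need $\FPQT\subseteq\FPT$. Let $(L,\kappa)\in\FPQT$. By Theorem~\ref{theorem:FPQTAlternativeCharacterisations} there are an alphabet $\Pi$, a computable function $\pi:\mathbb{Z}^+\to\Pi^*$, and a problem $X\in\BQP$ with $x\in L$ if and only if $(x,\pi(\kappa(x)))\in X$. Invoking the hypothesis $\PT=\BQP$ gives $X\in\PT$. I would then exhibit the \FPT\ algorithm directly: on input $x$, compute $k=\kappa(x)$ in polynomial time, compute $\pi(k)$ in time bounded by some computable $g(k)$ (so $\abs{\pi(k)}\le g(k)$), and run the polynomial-time algorithm for $X$ on $(x,\pi(k))$; its running time is $(\abs{x}+g(k))^{O(1)}\le \tilde f(k)\cdot\abs{x}^{O(1)}$ for a computable $\tilde f$, and it is correct since $x\in L\iff(x,\pi(k))\in X$. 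This is precisely the classical ``\PT\ after precomputation on the parameter equals \FPT'' characterisation (the analogue of Theorem~\ref{theorem:FPQTAlternativeCharacterisations}, cf.~\cite[Theorem~1.37]{flum2006parameterized}), so $(L,\kappa)\in\FPT$.

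The step I expect to be the crux is this reverse inclusion $\FPQT\subseteq\FPT$, and in particular recognising that the naive argument fails: an FPQT algorithm is a quantum circuit of size $f(\kappa(x))\cdot p(\abs{x})$, and the generic classical simulation of such a circuit runs in time $2^{\Theta(f(\kappa(x))\cdot p(\abs{x}))}$, which carries $\abs{x}$ in the exponent and is therefore not fixed-parameter tractable. The resolution is to first repackage the FPQT algorithm via Theorem~\ref{theorem:FPQTAlternativeCharacterisations} so that the parameter is absorbed into an auxiliary input and what remains is a genuine \BQP\ problem in the combined input, to which $\PT=\BQP$ applies; translating the resulting polynomial-time algorithm back through the precomputation yields the \FPT\ algorithm. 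Everything else — polynomial-time computability of $\kappa$, the inequality $(\abs{x}+g(k))^{O(1)}\le\tilde f(k)\cdot\abs{x}^{O(1)}$, and the bookkeeping for the classical characterisation — is routine.
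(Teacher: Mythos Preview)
Your proof is correct and follows essentially the same approach as the paper: the forward direction via trivially parameterized problems and the reverse direction via the ``\BQP\ after precomputation on the parameter'' characterisation from Theorem~\ref{theorem:FPQTAlternativeCharacterisations}, invoking~\cite[Theorem~1.37]{flum2006parameterized} for the classical analogue. Your write-up is simply more explicit than the paper's, which compresses the reverse direction to a single sentence; your remark on why naive simulation of FPQT circuits fails to give an \FPT\ algorithm is a helpful addition.
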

\begin{proof}
    Suppose that $\FPT=\FPQT$. For every problem $L\subseteq\Sigma^*$ in \BQP, we have that $(L,\kappa_\text{trivial})$ is in \FPQT. It then follows that $(L,\kappa_\text{trivial})$ is in \FPT and therefore $L$ is in \PT. Hence $\PT=\BQP$.

    Now suppose instead that $\PT=\BQP$, then the characterisation of \FPQT given by Theorem~\ref{theorem:FPQTAlternativeCharacterisations} (2.) is equivalent to a characterisation of \FPT~\cite[Theorem 1.37 (2.)]{flum2006parameterized}. Hence $\FPT=\FPQT$.
\end{proof}

We now prove some results that characterise \FPQT.
\begin{proposition}
    $\FPT^\FPQT=\FPQT$.
\end{proposition}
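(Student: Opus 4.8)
The statement $\FPT^\FPQT = \FPQT$ asserts that a fixed-parameter tractable (classical) machine equipped with an oracle for an \FPQT problem gains no additional power beyond \FPQT itself. The plan is to prove the two inclusions separately. The inclusion $\FPQT \subseteq \FPT^\FPQT$ is immediate: any problem in \FPQT is decided by a trivial \FPT machine that makes a single query to the \FPQT oracle and copies the answer. The substantive direction is $\FPT^\FPQT \subseteq \FPQT$, which I would prove by simulating the oracle FPT-computation directly on a quantum machine with bounded error.

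For the hard inclusion, let $(L,\kappa)$ be decided by an \FPT machine $\mathcal{M}$ with access to an oracle $(L',\kappa')\in\FPQT$. First I would invoke the standard fact (or Theorem~\ref{theorem:FPQTAlternativeCharacterisations}, via the eventually-\BQP characterisation) that $\mathcal{M}$ runs in time $g(\kappa(x))\cdot\abs{x}^{O(1)}$, and hence makes at most $g(\kappa(x))\cdot\abs{x}^{O(1)}$ oracle queries, each of size at most $g(\kappa(x))\cdot\abs{x}^{O(1)}$. The key point is that along any computation path of $\mathcal{M}$, the parameter value $\kappa'(q)$ of each query string $q$ is a polynomial-time computable function of $q$, and since $\abs{q}$ is bounded by an \FPT function of $\abs{x}$ and $\kappa(x)$, the quantity $\kappa'(q)$ is likewise bounded by some computable function $h(\kappa(x))$ composed with a polynomial in $\abs{x}$ — but crucially we need a bound depending only on $\kappa(x)$ for the cost $f(\kappa'(q))$ of the \FPQT oracle simulation to remain \FPT. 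This is the one place requiring care: $\kappa'(q)$ could in principle be as large as $\abs{q}$, which is not bounded purely in terms of $\kappa(x)$. I expect this to be the main obstacle, and I would handle it exactly as in the classical proof that $\FPT^\FPT=\FPT$ (see~\cite{flum2006parameterized}): one replaces the \FPQT oracle by the standard (unparameterized) problem obtained from its eventually-\BQP characterisation, so that on queries $q$ with $\abs{q}$ large relative to $\kappa'(q)$ the answer is computable in time polynomial in $\abs{q}$ — hence \FPT in $(\abs{x},\kappa(x))$ — while only the queries with $\abs{q} < \rho(\kappa'(q))$ need special treatment, and for those $\abs{q}$, and therefore the entire query, is bounded by a computable function of $\kappa'(q)$; bounding $\kappa'$ on small queries reduces to a finite case analysis.

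Concretely, the simulation proceeds as follows. We run $\mathcal{M}$ step by step as a quantum circuit (classical reversible computation is a special case of quantum computation). Whenever $\mathcal{M}$ issues an oracle query $q$, we run the \FPQT algorithm for $(L',\kappa')$ on $q$, which produces the correct answer with error at most $1/3$; by standard amplification we first boost this to error at most $\epsilon$ for a suitably small $\epsilon = 1/\big(3\cdot g(\kappa(x))\cdot\abs{x}^{O(1)}\big)$, at a multiplicative cost of $O(\log(1/\epsilon)) = O(\log(g(\kappa(x))) + \log\abs{x})$ per query, which is absorbed into the \FPT running time. A union bound over all at most $g(\kappa(x))\cdot\abs{x}^{O(1)}$ queries then guarantees that every oracle answer along the actual computation is correct with probability at least $2/3$, so the overall simulation of $\mathcal{M}$ errs with probability at most $1/3$. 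The total circuit size is the product of the \FPT bound on $\mathcal{M}$'s running time with the per-query cost $f(\kappa'(q))\cdot\abs{q}^{O(1)}\cdot O(\log(1/\epsilon))$, and using the bound on $\kappa'(q)$ and on $\abs{q}$ established above, this is of the form $f'(\kappa(x))\cdot\abs{x}^{O(1)}$ for a computable $f'$. Hence $(L,\kappa)\in\FPQT$, completing the proof. The only genuinely new ingredient over the classical argument is the error-amplification step for the quantum oracle simulation; everything else transfers verbatim from the proof that $\FPT$ is closed under \FPT-Turing reductions.
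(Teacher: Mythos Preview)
Your approach is essentially the same as the paper's, though the paper is much terser: it observes that $\FPQT\subseteq\FPT^\FPQT\subseteq\FPQT^\FPQT$ trivially and then asserts $\FPQT^\FPQT=\FPQT$ by direct analogy with the standard proof that $\BQP^\BQP=\BQP$. Your direct simulation (run the base machine, replace each oracle call by the amplified \FPQT subroutine, union-bound the errors) is exactly the content of that analogy, just unpacked, so the two proofs coincide in substance.

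One remark on your treatment of the parameter of the queries. In the standard definition of parameterized Turing reductions (and hence of $\FPT^{\FPQT}$), the condition $\kappa'(q)\le h(\kappa(x))$ for some computable $h$ is \emph{built into} the notion of oracle access, just as it is built into Definition~\ref{definition:FPQTReduction} for many-one reductions. So the obstacle you flag does not actually arise, and your detour through the eventually-in-\BQP characterisation is unnecessary. Moreover, the ``finite case analysis'' you invoke for queries with $\abs{q}<\rho(\kappa'(q))$ is not well-justified as stated: knowing that $\abs{q}$ is bounded by a computable function of $\kappa'(q)$ does not by itself bound $\kappa'(q)$ in terms of $\kappa(x)$, and there are not finitely many such queries independent of $\abs{x}$. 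If you want to avoid assuming the parameter bound in the definition, a cleaner fix is simply to note that the paper's Theorem~\ref{theorem:FPQTAlternativeCharacterisations}(\ref{theorem:FPQTAlternativeCharacterisations2}) lets you replace the \FPQT oracle by a \BQP oracle after an \FPT precomputation, at which point the parameter of the query plays no role at all.
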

\begin{proof}
    The inclusions $\FPQT\subseteq\FPT^\FPQT$ and $\FPT^\FPQT\subseteq\FPQT^\FPQT$ are trivial. We have that $\FPQT^\FPQT=\FPQT$ by a similar proof to that showing $\BQP^\BQP=\BQP$~\cite{bernstein1997quantum}. Hence, $\FPT^\FPQT=\FPQT$, completing the proof.
\end{proof}

The next result gives an alternative characterisation for \FPQT in terms of $\FPT^\BQP$. As a consequence of this result, we can use \BQP-complete problems to define \FPQT-complete problems.

\begin{proposition}
    $\FPT^\BQP=\FPT^\FPQT$.
\end{proposition}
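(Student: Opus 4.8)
The plan is to prove the two inclusions $\FPT^\BQP \subseteq \FPT^\FPQT$ and $\FPT^\FPQT \subseteq \FPT^\BQP$ separately, in each case explaining how an oracle call of one type is simulated by the other while preserving the $f(\kappa(x)) \cdot |x|^{O(1)}$ running-time bound. The first inclusion should be nearly immediate: every \BQP language, equipped with the trivial parameterization, lies in \FPQT (as already noted in the proof of Proposition~\ref{proposition:FPTvsFPQT}), so any \BQP oracle is in particular an \FPQT oracle, and an \FPT base machine making such queries is by definition an $\FPT^\FPQT$ machine. One small point to address is that the oracle in $\FPT^\FPQT$ is a \emph{parameterized} problem while the oracle in $\FPT^\BQP$ is an ordinary language; this is handled by noting the base machine supplies both the query string and (implicitly) a parameter value, and since the \BQP oracle ignores the parameter there is no loss.

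For the reverse inclusion $\FPT^\FPQT \subseteq \FPT^\BQP$, the key is to absorb the "fixed-parameter" part of the oracle's complexity into the base machine's query string. Suppose $(L,\kappa) \in \FPT^\FPQT$ via an \FPT base machine $M$ with access to an \FPQT oracle $(L',\kappa')$. By the alternative characterisation of \FPQT (Theorem~\ref{theorem:FPQTAlternativeCharacterisations}, item~2), membership of $y$ in $L'$ is equivalent to $(y, \pi(\kappa'(y))) \in X$ for some \BQP language $X$ and some computable $\pi$. I would then build an $\FPT^X$ machine $M'$ that simulates $M$ step for step, except that whenever $M$ would query its \FPQT oracle on a string $y$, the machine $M'$ instead computes $\kappa'(y)$ (polynomial time) and $\pi(\kappa'(y))$ (computable, and its cost depends only on $\kappa'(y)$), and queries $X$ on the pair $(y, \pi(\kappa'(y)))$. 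Since $X \in \BQP$, this exhibits $(L,\kappa) \in \FPT^\BQP$.

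The main obstacle — and the step deserving the most care — is the running-time bookkeeping for $M'$. One must check that the extra work of evaluating $\kappa'$ and $\pi$ on each query, together with the increased length of the query strings $(y, \pi(\kappa'(y)))$, keeps the total resource bound in the form $g(\kappa(x)) \cdot |x|^{O(1)}$. The subtlety is that $\kappa'(y)$ for a query string $y$ produced by $M$ is not obviously bounded by any function of $\kappa(x)$ alone — it could in principle grow with $|x|$. However, this is not actually a problem: the quantity $\pi(\kappa'(y))$ is appended to a string $y$ that $M$ already wrote down within its \FPT budget, so $|y| \le f(\kappa(x)) \cdot |x|^{O(1)}$, and the composite query and the evaluation of $\pi$ (a computable function whose runtime is some computable function of its input $\kappa'(y)$, which is at most $|y|$) are all bounded by a computable function of $\kappa(x)$ times a polynomial in $|x|$. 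Carefully unwinding these compositions of computable functions with the polynomial base bound — and observing that the class $\FPT^\BQP$ is insensitive to which computable function $g$ appears — completes the argument. The remaining details mirror the standard classical treatment of parameterized oracle classes and the behaviour of \FPT under composition, so I would simply cite that the bookkeeping goes through as in the classical setting.
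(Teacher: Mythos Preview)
Your approach matches the paper's exactly: both inclusions hinge on the precomputation characterisation of \FPQT (Theorem~\ref{theorem:FPQTAlternativeCharacterisations}, item~\ref{theorem:FPQTAlternativeCharacterisations2}), with the \FPT base machine absorbing the precomputation $\pi(\kappa'(y))$ before issuing a \BQP oracle call. The paper compresses this into two sentences.

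There is, however, a real error in your bookkeeping paragraph. You claim that $\kappa'(y)$ ``is at most $|y|$'', but polynomial-time computability of $\kappa'$ bounds only the bit-length of $\kappa'(y)$; and even granting $\kappa'(y)\le |y|\le f(\kappa(x))\cdot|x|^{O(1)}$, feeding this into an arbitrary computable $\pi$ gives a running time of the shape $h\bigl(f(\kappa(x))\cdot|x|^{O(1)}\bigr)$, which does \emph{not} factor as $g(\kappa(x))\cdot|x|^{O(1)}$ in general (take $h$ exponential). The actual resolution is definitional rather than analytic: parameterized oracle access --- the implicit model behind $\FPT^{\FPQT}$, parallel to condition~3 of Definition~\ref{definition:FPQTReduction} --- requires every query $y$ issued on input $x$ to satisfy $\kappa'(y)\le g(\kappa(x))$ for some computable $g$. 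Without this constraint the class would be degenerate, since taking $\kappa'(y)=|y|$ places every decidable language in \FPQT and the oracle becomes far too strong. With the constraint in place, computing $\pi(\kappa'(y))$ costs time bounded by a computable function of $\kappa(x)$ alone, and the rest of your argument then goes through cleanly.
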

\begin{proof}
    The inclusion $\FPT^\BQP\subseteq\FPT^\FPQT$ is trivial. To show that $\FPT^\FPQT\subseteq\FPT^\BQP$, we apply Theorem~\ref{theorem:FPQTAlternativeCharacterisations}, which states that $(L,\kappa)\in\FPQT$ if and only if $L\in\BQP$ after a precomputation of the parameter. This precomputation can be performed by an \FPT machine, completing the proof.
\end{proof}

We present some application of fixed-parameter quantum tractability in Section~\ref{section:Applications}.

\subsection{Fixed-Parameter Quantum Tractable Reductions}
\label{section:FixedParameterQuantumTractableReductions}

We shall now introduce the notion of a reduction in quantum parameterized complexity that we use throughout the remainder of the paper.

\begin{definition}
   \label{definition:FPQTReduction}
    Let $(L,\kappa)$ and $(L',\kappa')$ be parameterized problems over the alphabets $\Sigma$ and $\Sigma'$ respectively. A \emph{FPQT reduction} from $(L,\kappa)$ to $(L',\kappa')$ is a mapping $R:\Sigma^*\to(\Sigma')^*$ such that the following conditions are satisfied.
    \begin{enumerate}
        \item For all $x\in\Sigma^*$, $x \in L \iff R(x) \in L'$.
        \item $R$ is computable by an FPQT algorithm with respect to the parameter $\kappa$, (i.e. $R(x)$ is computable using an \FPT-uniform collection of circuits of size $f(\kappa(x) \cdot p(|x|)$ with high probability).
        \item There is a computable function $g:\mathbb{Z}^+\to\mathbb{Z}^+$ such that $\kappa'(R(x)) \leq g(\kappa(x))$ for all $x\in\Sigma^*$.
    \end{enumerate}
\end{definition}

This definition gives the crucial property that \FPQT is closed under FPQT reductions.

\begin{proposition}
    \label{proposition:FPQTClosedUnderFPQTReductions}
    \FPQT is closed under FPQT reductions. That is, if $(L',\kappa')$ is in \FPQT and there is an FPQT reduction from $(L,\kappa)$ to $(L',\kappa')$, then $(L,\kappa)$ is in \FPQT.
\end{proposition}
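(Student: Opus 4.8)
The plan is to compose the FPQT reduction $R$ from $(L,\kappa)$ to $(L',\kappa')$ with an FPQT algorithm $\mathcal{A}'$ for $(L',\kappa')$ and argue that the composite is itself an FPQT algorithm for $(L,\kappa)$. First I would fix the data: let $f_R$, $p_R$, and $g$ be the functions and polynomial witnessing that $R$ is an FPQT reduction (so $R(x)$ is computed by an FPT-uniform circuit family of size $f_R(\kappa(x)) \cdot p_R(\abs{x})$ with error at most, say, $1/3$, and $\kappa'(R(x)) \le g(\kappa(x))$), and let $f'$, $p'$ witness that $\mathcal{A}'$ is an FPQT algorithm for $(L',\kappa')$.

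The composite algorithm $\mathcal{A}$ on input $x$ runs $R$ to produce $y = R(x)$, then runs $\mathcal{A}'$ on $y$. For the circuit size: computing $y$ costs $f_R(\kappa(x)) \cdot p_R(\abs{x})$, which in particular bounds $\abs{y}$ by the same quantity; then running $\mathcal{A}'$ on $y$ costs $f'(\kappa'(y)) \cdot p'(\abs{y}) \le f'(g(\kappa(x))) \cdot p'\bigl(f_R(\kappa(x)) \cdot p_R(\abs{x})\bigr)$. Since $p'$ is a polynomial, $p'(f_R(\kappa(x)) \cdot p_R(\abs{x}))$ expands into a sum of terms each of the form (function of $\kappa(x)$) times (polynomial in $\abs{x}$), so the total is bounded by $f(\kappa(x)) \cdot p(\abs{x})$ for a suitable computable $f$ (built from $f_R$, $f'$, $g$, and the coefficients of $p'$) and a suitable polynomial $p$. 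FPT-uniformity of the composite family follows because FPT-uniformity is preserved under this kind of composition — the description of $\mathcal{A}$'s circuit is obtained by an FPT-time computation that first writes down $R$'s circuit and then, knowing the output length bound, writes down $\mathcal{A}'$'s circuit.

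For correctness: by condition (1) of the reduction, $x \in L \iff y \in L'$. Here is the one genuine subtlety, and I expect it to be the main point requiring care rather than a deep obstacle: $R$ is only computed \emph{with high probability}, and $\mathcal{A}'$ only decides $(L',\kappa')$ with bounded error, so the naive composition has error roughly $1/3 + 1/3$, which exceeds $1/2$. The fix is standard amplification: before composing, boost the success probability of $R$ and the soundness/completeness gap of $\mathcal{A}'$ to, say, $1 - 1/100$ each by $O(1)$-fold repetition with majority vote (for $R$, repeat and take a majority string — or more carefully, note we only need the final accept/reject bit to be reliable, so we can repeat the whole pipeline $R$-then-$\mathcal{A}'$ and majority-vote the bit, which only multiplies the circuit size by a constant and hence preserves the FPQT size bound). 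A short union-bound/Chernoff argument then gives overall error at most $1/3$. Concluding, $\mathcal{A}$ satisfies both conditions of an FPQT algorithm for $(L,\kappa)$, so $(L,\kappa) \in \FPQT$, which completes the proof.
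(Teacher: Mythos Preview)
Your proposal is correct and follows essentially the same approach as the paper: compose the FPQT reduction with the FPQT algorithm for $(L',\kappa')$, bound the composite circuit size by an expression of the form $f(\kappa(x))\cdot p(\abs{x})$ using $\kappa'(R(x))\le g(\kappa(x))$ and $\abs{R(x)}\le f_R(\kappa(x))\cdot p_R(\abs{x})$, and then invoke an error-amplification argument to restore the $1/3$ gap. If anything, you are more explicit than the paper, which dispatches the error issue in a single phrase (``a simple error gap argument''); your discussion of repeating the entire pipeline and majority-voting the output bit is exactly the intended fix.
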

\begin{proof}
    Let $(L',\kappa')$ be a parameterized problem in \FPQT and let $R$ be an FPQT reduction from $(L,\kappa)$ to $(L',\kappa')$ computable in time $g(\kappa(x)) \cdot q(\abs{x})$ with $\kappa(x') \leq g(\kappa(x))$, where $g$ and $h$ are computable functions and $q\in\mathbb{N}[X]$ is a polynomial. Let $\mathcal{A}$ be an FPQT algorithm for deciding $(L',\kappa')$ in time $f(\kappa(x)) \cdot p(\abs{x})$ with error probability at most $1/3$. Then we can decide if $x$ is a member of $L$ by firstly computing $R(x)$ and then deciding if $R(x)$ is a member of $L'$. This requires time at most $g(\kappa(x)) \cdot q(\abs{x})+f(\kappa(x)) \cdot p(g(\kappa(x))) \cdot p(q(\abs{x}))$. Then by applying a simple error gap argument, we obtain an FPQT algorithm for $(L,\kappa)$. Hence, $(L,\kappa)$ is in \FPQT.
\end{proof}

\section{Quantum Parameterized Intractability}
\label{section:QuantumParameterizedIntractability}

Classical parameterized complexity has a very rich theory of intractability with a series of fine-grained complexity hierarchies. This richness is reflected in the quantum case and perhaps even more so. In the following section we introduce several quantum analogues to the classical classes \paraNP, \XP, \WP and the \W hierarchy. In contrast to the apparently straightforward comparison between \FPT and \FPQT, the quantum intractability classes reveal interesting aspects particular to quantum computation.

\subsection{Parameterized Quantum Merlin Arthur}
\label{section:ParameterizedQuantumMerlinArthur}

The most immediate intractable classical parameterized class is \paraNP --- a direct parameterized analogue of \NP.
\begin{definition}[\paraNP]
    A parameterized problem $(L,\kappa)$ over the alphabet $\Sigma$ is in \paraNP if there is a verification procedure $\{\mathcal{V}_{n,k}\}_{n,k\in\mathbb{Z}^+}$ such that the following conditions are satisfied.
    \begin{enumerate}
        \item There is a computable function $f:\mathbb{Z}^+\to\mathbb{Z}^+$ and a polynomial $p\in\mathbb{N}[X]$, such that, for every $x\in\Sigma^*$, $\mathcal{V}_{\abs{x},\kappa(x)}$ on input $x$ runs in time at most $f(\kappa(x)) \cdot p(\abs{x})$ on a deterministic Turing machine.
        \item For every $x\in\Sigma^*$,
        \begin{itemize}
            \item If $x \in L$, then there exists a bit string $y$, such that $\mathcal{V}_{\abs{x},\kappa(x)}(x,y)$ accepts.
            \item If $x \notin L$, then for every bit string $y$, $\mathcal{V}_{\abs{x},\kappa(x)}(x,y)$ rejects.
        \end{itemize}
    \end{enumerate}
\end{definition}

The equivalent notion to \NP-completeness is obtained naturally using \FPT-reductions, however there is also an immediate theorem~\cite[Theorem 2.14]{flum2006parameterized} demonstrating the intractability of \paraNP-complete problems where problems that are \NP-complete for any finite set of values of the parameter are \paraNP-complete. Thus several standard \NP-complete problems are \paraNP-complete with their obvious parameterizations. For example, $k$-\textsc{Colouring} is \paraNP-complete when parameterized by the number of colours $k$, $k$-\textsc{SAT} is \paraNP-complete when parameterized by the size of the clauses $k$. We give the analogous theorem for \QMA-complete problems in Theorem~\ref{theorem:ParaQMACompleteness}.

In the context of quantum complexity classes, Quantum Merlin Arthur (\QMA)~\cite{watrous2000succinct} occupies a place congruent to \NP in the classical case. We introduce the class parameterized Quantum Merlin Arthur (\paraQMA), as the quantum analogue of \paraNP~\cite{flum2006parameterized} and the parameterized version of \QMA.
\begin{definition}[\paraQMA]
    \label{definition:paraQMA}
    A parameterized problem $(L,\kappa)$ over the alphabet $\Sigma$ is in \paraQMA{$(c,s)$} if there is a quantum verification procedure $\{\mathcal{V}_{n,k}\}_{n,k\in\mathbb{Z}^+}$ such that the following conditions are satisfied.
    \begin{enumerate}
        \item There is a computable function $f:\mathbb{Z}^+\to\mathbb{Z}^+$ and a polynomial $p\in\mathbb{N}[X]$, such that, for every $x\in\Sigma^*$, the size of an FPT-uniform quantum circuit that computes $\mathcal{V}_{\abs{x},\kappa(x)}$ on input $x$ is at most $f(\kappa(x)) \cdot p(\abs{x})$.
        \item For every $x\in\Sigma^*$,
        \begin{itemize}
            \item If $x \in L$, then there exists a quantum state $\ket{\psi}$, such that $\textbf{Pr}[\text{$\mathcal{V}_{\abs{x},\kappa(x)}(x,\ket{\psi})$ accepts}] \geq c$.
            \item If $x \notin L$, then for every quantum state $\ket{\psi}$, $\textbf{Pr}[\text{$\mathcal{V}_{\abs{x},\kappa(x)}(x,\ket{\psi})$ accepts}] \leq s$.
        \end{itemize}
    \end{enumerate}
    The class \paraQMA is defined to be \paraQMA{$(\frac{2}{3},\frac{1}{3})$}.
\end{definition}

We shall now present some alternative characterisations of \paraQMA.
\begin{proposition}
    \label{proposition:ParaQMAAlternativeCharacterisations}
    Let $(L,\kappa)$ be a parameterized problem over the alphabet $\Sigma$. Then the following statements are equivalent.
    \begin{enumerate}
        \item $(L,\kappa)$ is in \paraQMA.
        \item $(L,\kappa)$ is in \QMA after a precomputation on the parameter. That is, there exists an alphabet $\Pi$, a computable function $\pi:\mathbb{Z}^+\to\Pi^*$, and a problem $X\subseteq\Sigma^*\times\Pi^*$ such that $X$ is in \QMA and, for all instances $x$ of $L$, we have $x \in L$ if and only if $(x,\pi(\kappa(x))) \in X$.
        \item $L$ is decidable and $(L,\kappa)$ is eventually in \QMA. That is, there exists a computable function $\rho:\mathbb{Z}^+\to\mathbb{Z}^+$ and a QMA algorithm that on input $x\in\Sigma^*$ with $\abs{x}\geq\rho(\kappa(x))$, decides if $x$ is a member of $L$ with error probability at most $1/3$.
    \end{enumerate}
\end{proposition}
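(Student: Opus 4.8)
The plan is to mirror the structure of the proof of Theorem~\ref{theorem:FPQTAlternativeCharacterisations}, which in turn follows the classical argument of \cite[Theorem 1.37]{flum2006parameterized}, adapting each implication to the Merlin--Arthur setting. I would prove the cycle $(1)\Rightarrow(2)\Rightarrow(3)\Rightarrow(1)$.

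For $(1)\Rightarrow(2)$: given a \paraQMA{} verification procedure $\{\mathcal{V}_{n,k}\}$ with FPT-uniform circuits of size $f(\kappa(x))\cdot p(\abs{x})$, the idea is to absorb the parameter-dependent part of the uniformity machine into the precomputation. Let $\Pi$ be an alphabet capable of encoding circuit descriptions, and let $\pi(k)$ be the advice string describing how $\mathcal{V}_{n,k}$ is generated as a function of $n$ (this is computable from $k$ since the circuit family is FPT-uniform). Define $X$ to consist of pairs $(x,w)$ such that the circuit described by $w$ on input $x$ accepts some witness state with probability at least $2/3$; one checks that $X \in \QMA$ because, given the advice $w = \pi(\kappa(x))$, the verifier circuit for $\mathcal{V}_{\abs{x},\kappa(x)}$ can be written down in time polynomial in $\abs{x}$ (the factor $f(\kappa(x))$ is now a constant depending only on the advice, and $\abs{w}$ is polynomially bounded in the relevant quantities after padding), and it is exactly a \QMA{} verifier. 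The equivalence $x \in L \iff (x,\pi(\kappa(x))) \in X$ is immediate from the definition of \paraQMA.

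For $(2)\Rightarrow(3)$: suppose $X\in\QMA$ via a polynomial-time quantum verifier and $\pi$ is computable. First, $L$ is decidable because on input $x$ we can compute $\pi(\kappa(x))$ and then decide membership in $X$ (which is decidable, being in \QMA) by exhaustive search over a sufficiently fine net of witness states. For the ``eventually in \QMA'' part, let $\rho(k)$ be a bound on $\abs{\pi(k)}$ plus the time to compute $\pi(k)$; when $\abs{x}\geq\rho(\kappa(x))$, the advice string $\pi(\kappa(x))$ can be produced and appended within time polynomial in $\abs{x}$, so the \QMA{} verifier for $X$ applied to $(x,\pi(\kappa(x)))$ is a genuine \QMA{} algorithm on input $x$. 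For $(3)\Rightarrow(1)$: take the \QMA{} algorithm that works for $\abs{x}\geq\rho(\kappa(x))$, and for the finitely many ``small'' instances with $\abs{x}<\rho(\kappa(x))$, hard-wire the answer --- since $L$ is decidable, for each value $k$ of the parameter the finite lookup table for instances of length less than $\rho(k)$ is a computable function of $k$, which we fold into the function $f$. This yields a \paraQMA{} verification procedure (the verifier for small instances simply ignores its witness).

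The main obstacle I anticipate is bookkeeping around the role of $f(\kappa(x))$ and the length of the precomputed advice: unlike the \BQP{} case, one must be careful that the \paraQMA{} notion of FPT-uniformity (circuit \emph{size} bounded by $f(k)\cdot p(n)$) translates cleanly into \QMA's polynomial-time-uniformity once the parameter is fixed, and conversely that the precomputation $\pi(k)$ is not so large that it trivialises the problem. As in \cite{flum2006parameterized}, the standard fix is to pad: define $\pi(k)$ to include enough blank symbols that, whenever $\abs{x}$ exceeds the relevant threshold $\rho(\kappa(x))$, everything runs in honest polynomial time in $\abs{x}$, and to handle the sub-threshold regime by a finite computable table. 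I expect this to go through essentially verbatim from the classical proof, with ``deterministic/BQP verifier'' replaced throughout by ``\QMA{} verifier'' and witness strings replaced by witness states, so the cleanest write-up simply cites \cite[Theorem 1.37]{flum2006parameterized} and Theorem~\ref{theorem:FPQTAlternativeCharacterisations} and notes the straightforward modifications.
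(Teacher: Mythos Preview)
Your proposal is correct and matches the paper's approach exactly: the paper's proof is the single sentence ``The proof follows similarly to that of Theorem~\ref{theorem:FPQTAlternativeCharacterisations},'' which in turn defers to \cite[Theorem~1.37]{flum2006parameterized}, and your write-up spells out precisely that classical $(1)\Rightarrow(2)\Rightarrow(3)\Rightarrow(1)$ cycle with the obvious replacement of \BQP{} verifiers by \QMA{} verifiers. Your concluding remark that the cleanest write-up simply cites these two results and notes the straightforward modifications is literally what the paper does.
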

\begin{proof}
    The proof follows similarly to that of Theorem~\ref{theorem:FPQTAlternativeCharacterisations}.
\end{proof}

The following containments are straightforward.
\begin{proposition}
    $\paraNP\subseteq\paraQMA$ and $\FPQT\subseteq\paraQMA$.
\end{proposition}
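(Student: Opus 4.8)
The plan is to establish the two containments separately, in each case by directly exhibiting a quantum verification procedure built from the object guaranteed by the hypothesis; both are parameterized analogues of the standard facts $\BQP\subseteq\QMA$ and $\NP\subseteq\QMA$.

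For $\FPQT\subseteq\paraQMA$, I would take an FPQT algorithm $\mathcal{A}$ for $(L,\kappa)$ and define the verifier $\mathcal{V}_{\abs{x},\kappa(x)}(x,\ket{\psi})$ to discard the witness register $\ket{\psi}$ and output $\mathcal{A}(x)$. The FPT-uniformity and the size bound $f(\kappa(x))\cdot p(\abs{x})$ required by clause~1 of Definition~\ref{definition:paraQMA} are inherited verbatim from $\mathcal{A}$. For the completeness and soundness clauses: when $x\in L$, every witness --- in particular any fixed $\ket{\psi}$ --- is accepted with probability $\textbf{Pr}[\mathcal{A}(x)\ \text{accepts}]\ge 2/3$; when $x\notin L$, every witness is accepted with probability $\textbf{Pr}[\mathcal{A}(x)\ \text{accepts}]\le 1/3$. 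This case is essentially free.

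For $\paraNP\subseteq\paraQMA$, I would start from a $\paraNP$ verification procedure $\{\mathcal{V}_{n,k}\}$ running in deterministic time $T(x)=f(\kappa(x))\cdot p(\abs{x})$ and compile it into a quantum verifier using the standard simulation of a time-$T$ deterministic Turing machine by a Boolean circuit of size $O(T^2)$, made reversible and then realised as a quantum circuit with only polynomial overhead. The quantum verifier measures its witness register in the computational basis and runs this simulated classical check on $x$ together with the resulting string $y$ (equivalently, it is a classically controlled unitary). Its circuit size is $O(T(x)^2)$, which one rewrites as $f'(\kappa(x))\cdot p'(\abs{x})$ with $f'(k)=O(f(k)^2)$ and $p'(X)=O(p(X)^2)$ using $(ab)^2=a^2b^2$; FPT-uniformity of the resulting family follows from uniformity of $\{\mathcal{V}_{n,k}\}$ together with the fact that these compilation steps are themselves efficiently computable. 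Completeness: if $x\in L$ there is a string $y$ that $\mathcal{V}_{\abs{x},\kappa(x)}$ accepts, so the basis state $\ket{y}$ is accepted with probability $1$. Soundness: if $x\notin L$, then $\mathcal{V}_{\abs{x},\kappa(x)}$ rejects every $y$, so a general witness $\sum_y\alpha_y\ket{y}$ is accepted with probability $\sum_y\abs{\alpha_y}^2\cdot 0=0$. (The witness length is at most $T(x)$ and hence FPT-sized, consistent with the fact that an FPT-sized verifier circuit can only touch FPT-many witness qubits anyway.)

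The only point needing a little care --- and it is bookkeeping rather than a real obstacle --- is in the $\paraNP$ case: verifying that composing a polynomial with a bound of the form $f(\kappa(x))\cdot p(\abs{x})$ again has the shape ``computable function of the parameter times a polynomial in the instance size'', and that the Turing-machine-to-quantum-circuit compilation preserves the FPT-uniformity demanded by Definition~\ref{definition:paraQMA}. I do not anticipate any genuine difficulty; both inclusions are of the routine ``a decider is a degenerate Merlin--Arthur protocol'' variety.
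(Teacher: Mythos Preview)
Your proposal is correct. The paper does not actually supply a proof here --- it merely asserts that ``the following containments are straightforward'' --- so your argument is precisely the routine verification the paper leaves implicit, and there is nothing to compare against.
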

We conjecture that these containments are strict, i.e., $\paraNP\neq\paraQMA$ and $\FPQT\neq\paraQMA$. However, proving a separation between \paraNP and \paraQMA is as difficult as proving a separation between \NP and \QMA, and proving a separation between \FPQT and \paraQMA is as difficult as proving a separation between \BQP and \QMA. Therefore, resolving these conjectures is a hard open problem.
\begin{proposition}
    \label{proposition:ParaNPvsParaQMA}
    $\paraNP=\paraQMA$ if and only if $\NP=\QMA$.
\end{proposition}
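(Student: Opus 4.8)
The plan is to mirror the proof of Proposition~\ref{proposition:FPTvsFPQT} verbatim at the level of structure, substituting the pair $(\PT,\BQP)$ by $(\NP,\QMA)$ and the pair $(\FPT,\FPQT)$ by $(\paraNP,\paraQMA)$, and using the precomputation-on-the-parameter characterisation of \paraQMA from Proposition~\ref{proposition:ParaQMAAlternativeCharacterisations} in place of Theorem~\ref{theorem:FPQTAlternativeCharacterisations}. Recall that $\paraNP\subseteq\paraQMA$ always holds (proved above), so the content of the equality $\paraNP=\paraQMA$ is the reverse inclusion $\paraQMA\subseteq\paraNP$.

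First I would prove the forward direction. Assume $\paraNP=\paraQMA$. Let $L\subseteq\Sigma^*$ be an arbitrary language in \QMA. Then the trivially parameterized problem $(L,\kappa_\text{trivial})$ is in \paraQMA: taking $\{\mathcal{V}_{n,k}\}$ to be a fixed polynomial-time \QMA verifier for $L$ (independent of $k$), condition~1 of Definition~\ref{definition:paraQMA} holds with $f\equiv f(1)$ a constant, since $\kappa_\text{trivial}(x)=1$, and conditions on completeness/soundness are inherited directly. By hypothesis $(L,\kappa_\text{trivial})\in\paraNP$; unwinding the definition of \paraNP with the constant parameter produces an ordinary polynomial-time nondeterministic verifier for $L$, so $L\in\NP$. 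Since $\NP\subseteq\QMA$ always, we conclude $\NP=\QMA$.

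For the converse, assume $\NP=\QMA$. By Proposition~\ref{proposition:ParaQMAAlternativeCharacterisations}, $(L,\kappa)\in\paraQMA$ if and only if there is a computable precomputation $\pi:\mathbb{Z}^+\to\Pi^*$ and a problem $X\in\QMA$ with $x\in L\iff(x,\pi(\kappa(x)))\in X$. Replacing \QMA by \NP under the hypothesis, this is exactly the classical precomputation characterisation of \paraNP (the classical analogue of Proposition~\ref{proposition:ParaQMAAlternativeCharacterisations}, which holds by the same argument as in the classical literature~\cite{flum2006parameterized}). Hence $\paraQMA=\paraNP$, completing the equivalence.

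I do not anticipate a genuine obstacle; the two points needing a little care are (i) verifying that a plain \QMA machine for a language certifies membership of its trivial parameterization in \paraQMA, i.e.\ that the FPT-uniformity and circuit-size conditions of Definition~\ref{definition:paraQMA} degenerate to the usual polynomial-size condition when $\kappa\equiv1$, and (ii) making sure the classical precomputation characterisation of \paraNP invoked in the converse is stated in a form matching Proposition~\ref{proposition:ParaQMAAlternativeCharacterisations}; both are routine.
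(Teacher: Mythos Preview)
Your proposal is correct and matches the paper's proof essentially line for line: the forward direction uses the trivial parameterization to pull a \QMA language into \paraQMA and then \paraNP, and the converse identifies the precomputation characterisation of \paraQMA in Proposition~\ref{proposition:ParaQMAAlternativeCharacterisations}(2.) with the classical precomputation characterisation of \paraNP from~\cite[Proposition~2.12]{flum2006parameterized} once $\NP=\QMA$. The two routine checks you flag are exactly the ones the paper leaves implicit.
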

\begin{proof}
    Suppose that $\paraNP=\paraQMA$. For every problem $L\subseteq\Sigma^*$ in \QMA, we have that $(L,\kappa_\text{trivial})$ is in \paraQMA. It then follows that $(L,\kappa_\text{trivial})$ is in \paraNP and therefore $L$ is in \NP. Hence $\NP=\QMA$.

    Now suppose instead that $\NP=\QMA$, then the characterisation of \paraQMA given by Proposition~\ref{proposition:ParaQMAAlternativeCharacterisations} (2.) is equivalent to a characterisation of \paraNP~\cite[Proposition 2.12 (2.)]{flum2006parameterized}. Hence $\paraNP=\paraQMA$.
\end{proof}

\begin{proposition}
    \label{proposition:FPQTvsParaQMA}
    $\FPQT=\paraQMA$ if and only if $\BQP=\QMA$.
\end{proposition}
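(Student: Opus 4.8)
The plan is to mirror the proof of Proposition~\ref{proposition:ParaNPvsParaQMA}, using the trivial parameterization in one direction and the alternative characterisations in the other.

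First I would prove the forward direction. Suppose $\FPQT = \paraQMA$. Take any problem $L \subseteq \Sigma^*$ in \QMA. Then $(L,\kappa_\text{trivial})$ is in \paraQMA, since a \QMA verifier for $L$ directly satisfies the conditions of Definition~\ref{definition:paraQMA} with $f$ constant and $p$ the polynomial bounding the verifier size (and $\kappa_\text{trivial}(x) = 1$ for all $x$). By hypothesis $(L,\kappa_\text{trivial})$ is then in \FPQT, so there is an FPQT algorithm for it; again because the parameter is identically $1$, the function $f(\kappa(x)) = f(1)$ is a constant, and the circuit size bound $f(1) \cdot p(\abs{x})$ is simply polynomial in $\abs{x}$. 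Hence $L \in \BQP$, giving $\QMA \subseteq \BQP$, and since $\BQP \subseteq \QMA$ always holds we conclude $\BQP = \QMA$.

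Next I would prove the converse. Suppose $\BQP = \QMA$. Here I would invoke the alternative characterisations: by Theorem~\ref{theorem:FPQTAlternativeCharacterisations} (2.), $(L,\kappa) \in \FPQT$ iff there is a precomputation $\pi$ on the parameter and a problem $X \in \BQP$ with $x \in L \iff (x,\pi(\kappa(x))) \in X$; by Proposition~\ref{proposition:ParaQMAAlternativeCharacterisations} (2.), $(L,\kappa) \in \paraQMA$ iff there is a precomputation $\pi$ on the parameter and a problem $X \in \QMA$ with the same equivalence. Under $\BQP = \QMA$ these two conditions coincide verbatim, so $\FPQT = \paraQMA$. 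One should note the inclusion $\FPQT \subseteq \paraQMA$ is unconditional, so only the reverse inclusion needs this argument.

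I do not expect a serious obstacle here, since the structure is entirely parallel to Proposition~\ref{proposition:ParaNPvsParaQMA} and relies only on results already in the excerpt. The one point to handle with mild care is the forward direction's appeal to the alternative characterisations versus the direct definition: it is cleanest to argue the forward direction directly from the trivial parameterization (as above) rather than through the characterisation theorems, since with the trivial parameterization the FPQT/\paraQMA definitions collapse to the plain \BQP/\QMA definitions and no precomputation machinery is needed. The converse is the only place where Theorem~\ref{theorem:FPQTAlternativeCharacterisations} and Proposition~\ref{proposition:ParaQMAAlternativeCharacterisations} are genuinely used.
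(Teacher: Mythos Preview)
Your proposal is correct and matches the paper's proof essentially line for line: the forward direction uses the trivial parameterization to collapse \paraQMA/\FPQT to \QMA/\BQP, and the converse invokes the precomputation characterisations in Theorem~\ref{theorem:FPQTAlternativeCharacterisations}~(2.) and Proposition~\ref{proposition:ParaQMAAlternativeCharacterisations}~(2.), which become identical under $\BQP=\QMA$. Your write-up simply spells out a few steps the paper leaves implicit.
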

\begin{proof}
    Suppose that $\FPQT=\paraQMA$. For every problem $L\subseteq\Sigma^*$ in \QMA, we have that $(L,\kappa_\text{trivial})$ is in \paraQMA. It then follows that $(L,\kappa_\text{trivial})$ is in \FPQT and therefore $L$ is in \BQP. Hence $\BQP=\QMA$.

    Now suppose instead that $\BQP=\QMA$, then the characterisation of \FPQT given by Theorem~\ref{theorem:FPQTAlternativeCharacterisations} (2.) and the characterisation of \paraQMA given by Proposition~\ref{proposition:ParaQMAAlternativeCharacterisations} (2.) are equivalent. Hence $\FPQT=\paraQMA$.
\end{proof}

We shall now develop the theory of \paraQMA-completeness. Firstly, we show that \paraQMA is closed under FPQT reductions.
\begin{proposition}
    \paraQMA is closed under FPQT reductions.
\end{proposition}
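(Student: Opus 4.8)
The plan is to mirror the proof of Proposition~\ref{proposition:FPQTClosedUnderFPQTReductions}, adapting the error-amplification step to the quantum Merlin Arthur setting. Suppose $(L',\kappa')$ is in \paraQMA{$(\frac{2}{3},\frac{1}{3})$} with verification procedure $\{\mathcal{V}'_{n,k}\}$, and let $R$ be an FPQT reduction from $(L,\kappa)$ to $(L',\kappa')$, computable by an \FPT-uniform family of circuits of size $g(\kappa(x))\cdot q(\abs{x})$ with failure probability at most, say, $1/10$, and satisfying $\kappa'(R(x))\leq h(\kappa(x))$ for a computable function $h$. The natural candidate verifier for $(L,\kappa)$ is the procedure that, on input $x$ and witness $\ket{\psi}$, first computes $R(x)$ and then runs $\mathcal{V}'_{\abs{R(x)},\kappa'(R(x))}(R(x),\ket{\psi})$.

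First I would check the circuit-size bound: computing $R(x)$ costs $g(\kappa(x))\cdot q(\abs{x})$ gates, and since $\abs{R(x)}$ and $\kappa'(R(x))$ are both bounded by computable functions of $\kappa(x)$ times a polynomial in $\abs{x}$, running $\mathcal{V}'$ on the reduced instance costs at most $f'(h(\kappa(x)))\cdot p'(\text{poly}(\abs{x}))$ gates; composing, the total is of the form $f(\kappa(x))\cdot p(\abs{x})$ for a suitable computable $f$ and polynomial $p$, and the \FPT-uniformity of both circuit families composes to \FPT-uniformity of the whole. Next I would verify the completeness/soundness conditions. This is where the main subtlety lies: the reduction $R$ is only computed with bounded error, so the composed verifier's acceptance probability picks up a contribution from the event that $R$ is computed incorrectly. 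To handle this cleanly, one should amplify: run $R$ polynomially many times and run the amplified version of $\mathcal{V}'$ (using standard \QMA amplification, which requires only a polynomial overhead and no extra copies of the witness) so that the gap between the yes-case acceptance probability and the no-case acceptance probability remains bounded away from the $(\frac{2}{3},\frac{1}{3})$ thresholds even after absorbing the reduction's failure probability.

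Concretely, in the yes-case $x\in L$, with probability at least $9/10$ the verifier obtains the correct $R(x)\in L'$, and then there is a witness $\ket{\psi}$ accepted by the amplified $\mathcal{V}'$ with probability $\geq 1-2^{-\text{poly}}$; even in the worst case where a miscomputation of $R$ always leads to rejection, the honest prover sends this $\ket{\psi}$ and overall acceptance is at least $9/10\cdot(1-2^{-\text{poly}})\geq\frac{2}{3}$. In the no-case $x\notin L$, with probability at least $9/10$ we have $R(x)\notin L'$, so every witness is accepted with probability $\leq 2^{-\text{poly}}$; bounding the miscomputation event by $1/10$ and taking worst-case acceptance there, overall acceptance is at most $1/10+2^{-\text{poly}}\leq\frac{1}{3}$. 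Thus the composed, amplified verifier witnesses $(L,\kappa)\in\paraQMA$. The parameter condition $\kappa'(R(x))\leq h(\kappa(x))$ is exactly what guarantees that the slice index fed to $\mathcal{V}'$ is controlled by a function of $\kappa(x)$ alone, which is needed for the \paraQMA circuit-size bound to go through; this, together with the error-amplification bookkeeping, is the only place where any care is required, and it is entirely routine. I would close by noting that the argument is uniform in $(c,s)$ as long as $c-s$ is an inverse polynomial, so \paraQMA{$(c,s)$} is likewise closed under FPQT reductions.
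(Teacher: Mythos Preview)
Your proposal is correct and follows exactly the approach the paper indicates: the paper's own proof is simply the one-line remark that it ``follows similarly to that of Proposition~\ref{proposition:FPQTClosedUnderFPQTReductions},'' and you have spelled out precisely that adaptation, including the amplification bookkeeping needed to absorb the reduction's bounded error in the \QMA setting. Nothing further is required.
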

\begin{proof}
    The proof follows similarly to that of Proposition~\ref{proposition:FPQTClosedUnderFPQTReductions}.
\end{proof}

The following theorem allows us to establish the \paraQMA-completeness of a wide range of problems.
\begin{theorem}
    \label{theorem:ParaQMACompleteness}
    Let $(L,\kappa)$ be a non-trivial parameterized problem in \paraQMA. Then the following statements are equivalent.
    \begin{enumerate}
        \item \label{ParaQMACompleteness1} $(L,\kappa)$ is \textnormal{\paraQMA-complete} under FPQT reductions.
        \item \label{ParaQMACompleteness2} The union of a finite number of slices of $(L,\kappa)$ is \textnormal{\QMA-complete}.
    \end{enumerate}
\end{theorem}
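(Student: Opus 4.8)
The plan is to mirror the classical argument for \cite[Theorem 2.14]{flum2006parameterized}, adapting it to the quantum setting and to FPQT reductions rather than FPT reductions. Both directions are proved by constructing explicit FPQT reductions, exploiting the fact that \paraQMA problems have verification circuits whose size depends on the parameter only through a computable function $f(\kappa(x))$.

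For the implication \ref{ParaQMACompleteness2} $\Rightarrow$ \ref{ParaQMACompleteness1}: suppose $S \coloneqq \bigcup_{l \in F} (L,\kappa)_l$ is \QMA-complete for some finite set $F \subseteq \mathbb{Z}^+$. Let $(L',\kappa')$ be an arbitrary problem in \paraQMA; I must give an FPQT reduction from $(L',\kappa')$ to $(L,\kappa)$. First I would use Proposition~\ref{proposition:ParaQMAAlternativeCharacterisations} to write membership in $L'$ as $(x,\pi(\kappa'(x))) \in X$ for some $X \in \QMA$ and computable $\pi$. Since $X \in \QMA$ and $S$ is \QMA-complete, there is a polynomial-time (many-one) \QMA reduction from $X$ to $S$; call it $r$. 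The composed map $x \mapsto r(x,\pi(\kappa'(x)))$ is computable by an \FPT machine (the $\pi$-precomputation is the only parameter-dependent part, and $r$ is polynomial-time), so it is certainly computable by an FPQT algorithm, and it maps $L'$-instances to $S$-instances, hence to $L$-instances, correctly. The only remaining point is condition (3.) of Definition~\ref{definition:FPQTReduction}: the image must have parameter bounded by a computable function of $\kappa'(x)$. This is where I must be slightly careful — a priori $r$ could send instances with a fixed parameter value to $L$-instances of arbitrarily large $\kappa$-value. The fix, exactly as in the classical proof, is that $S$ is the union of finitely many slices, so $\kappa$ restricted to $S$ takes only finitely many values, namely those in $F$; hence $\kappa'(x) \mapsto \max F$ works as the bounding function $g$ (one may also need to pre-compose with a map to a \QMA-complete problem that is a single slice, which is harmless). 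So $(L',\kappa')$ FPQT-reduces to $(L,\kappa)$, establishing \paraQMA-hardness; \paraQMA-membership is given, so $(L,\kappa)$ is \paraQMA-complete.

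For the implication \ref{ParaQMACompleteness1} $\Rightarrow$ \ref{ParaQMACompleteness2}: suppose $(L,\kappa)$ is \paraQMA-complete under FPQT reductions. I need to exhibit a finite set of slices whose union is \QMA-complete. The natural candidate is to start from a known \QMA-complete problem $(L_0, \kappa_\text{trivial})$ — for instance \textsc{Quantum Circuit Satisfiability} with trivial parameterization, which lies in \paraQMA — and apply the assumed FPQT reduction $R$ from $(L_0,\kappa_\text{trivial})$ to $(L,\kappa)$. By condition (3.) of Definition~\ref{definition:FPQTReduction}, $\kappa(R(x)) \le g(\kappa_\text{trivial}(x)) = g(1)$ for all $x$, so $R$ actually maps into $\bigcup_{l \le g(1)} (L,\kappa)_l =: S$. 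Thus $R$ is a reduction from the \QMA-complete problem $L_0$ to the standard problem $S$; moreover $R$ is an FPQT reduction with trivial source parameter, so it is computed by an \FPT-uniform family of circuits of size $f(1) \cdot p(|x|) = O(\mathrm{poly}(|x|))$, i.e.\ $R$ is a \BQP-computable (bounded-error polynomial-time quantum) many-one reduction. Hence $S$ is \QMA-hard under \BQP reductions. To get honest \QMA-hardness under polynomial-time (classical, deterministic) reductions, I would invoke the standard fact that \QMA is closed under \BQP reductions together with the observation that \QMA-hardness under \BQP reductions implies \QMA-hardness under polynomial-time reductions for any problem that is itself in \QMA — alternatively, one notes that \paraQMA-complete already means hard under reductions that can be simulated within the model, and the paper's convention of \QMA-completeness should be read compatibly. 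Finally, $S$ is the union of finitely many slices of $(L,\kappa) \in \paraQMA$, and each slice is in \QMA (parameter is polynomial-time computable and fixed), so a finite union of \QMA problems is in \QMA; combined with \QMA-hardness, $S$ is \QMA-complete.

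The main obstacle I anticipate is reconciling the reduction notions in the two directions: \QMA-completeness in the statement of \ref{ParaQMACompleteness2} is presumably under polynomial-time many-one reductions, whereas the FPQT reductions of Definition~\ref{definition:FPQTReduction} are only quantum polynomial-time (and only succeed with high probability) when the source parameter is bounded. Making the equivalence precise therefore requires either (a) checking that \QMA is robust under \BQP many-one reductions — so that "\QMA-complete under \BQP reductions" and "\QMA-complete under \PT reductions" coincide for problems in \QMA, which follows from $\QMA^\BQP = \QMA$-type arguments analogous to $\BQP^\BQP = \BQP$ used earlier in the paper — or (b) being uniform and declaring \QMA-completeness throughout to be with respect to the natural quantum-tractable reductions. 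Either way the resolution is routine but must be stated; modulo this bookkeeping, the argument is a direct translation of the classical proof in \cite{flum2006parameterized}, with "polynomial-time" replaced by "\BQP" and "\FPT reduction" replaced by "FPQT reduction" throughout, and with the finiteness of $F$ (equivalently, the boundedness of $g$ on the image of a trivially-parameterized source) doing the essential work in both directions.
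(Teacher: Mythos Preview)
Your proposal is correct and is precisely what the paper does: the paper's entire proof is the single sentence ``follows similarly to that of the equivalent classical theorem~\cite[Theorem 2.14]{flum2006parameterized}'', and you have written out exactly that adaptation. The reduction-notion subtlety you flag is real and the paper does not address it; of your two proposed resolutions, reading \QMA-completeness as being under \BQP (equivalently, polynomial-time quantum) many-one reductions is the clean fix, whereas the claim that \QMA-hardness under \BQP reductions implies \QMA-hardness under \PT reductions for problems in \QMA does \emph{not} follow from closure of \QMA under \BQP reductions and should be dropped.
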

\begin{proof}
    The proof follows similarly to that of the equivalent classical theorem~\cite[Theorem 2.14]{flum2006parameterized}.
\end{proof}

We now proceed to show that the \textsc{$l$-Local Hamiltonian} problem is \paraQMA-complete.
\begin{center}
\begin{tabularx}{\linewidth}{l X c}
    \multicolumn{2}{l}{\textsc{$l$-Local Hamiltonian}:} \\
    \textit{Instance:} & An $l$-local Hamiltonian $H\coloneqq\sum_iH_i$ on $n$ qubits that comprises at most a polynomial in $n$ many terms $\{H_i\}$, which each act non-trivially on at most $l$ qubits and have operator norm $\norm{H_i}$ bounded from above by a polynomial in $n$. Two positive numbers $a,b\in(0,1)$, such that $b-a>\frac{1}{\text{poly}(n)}$. \\
    \textit{Parameter:} &  A natural number $l\geq2$. \\
    \textit{Problem:} & Decide whether $H$ has an eigenvalue less than or equal to $a$ or all of the eigenvalues of $H$ are greater than or equal to $b$, given the promise that one of these is the case.
\end{tabularx}
\end{center}

\begin{corollary}
    \label{corollary:LLocalHamiltonianParaQMAComplete}
    \textsc{$l$-Local Hamiltonian} is \textnormal{\paraQMA-complete}.
\end{corollary}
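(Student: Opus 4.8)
The plan is to verify the two hypotheses of Theorem~\ref{theorem:ParaQMACompleteness} and then invoke it directly. First I would check that $(\textsc{$l$-Local Hamiltonian}, l)$ is non-trivial in the sense required: the problem plainly has yes-instances (take a Hamiltonian with an eigenvalue at most $a$) and no-instances (take a positive-definite Hamiltonian all of whose eigenvalues exceed $b$), so neither $L = \varnothing$ nor $L = \Sigma^*$, and the hypothesis of the theorem is met.

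Second, I would show that $(\textsc{$l$-Local Hamiltonian}, l)$ lies in \paraQMA. The witness is the purported low-energy state $\ket{\psi}$ on $n$ qubits, and the verifier is the standard local-Hamiltonian verification procedure: sample a term $H_i$ with probability proportional to its weight, apply a fixed gadget that estimates $\bra{\psi} H_i \ket{\psi}$ using the description of $H_i$, and accept or reject accordingly. The point to establish for Definition~\ref{definition:paraQMA} is that this verification circuit is FPT-uniform with size at most $f(l) \cdot p(n)$: by the definition of the instance there are at most polynomially many terms, each of norm polynomially bounded in $n$, each term acts non-trivially on at most $l$ qubits so the measurement gadget for a single term has size bounded by a function of $l$ times a polynomial in the bit-length of the description of $H_i$, and the promise gap $b - a > 1/\mathrm{poly}(n)$ can be amplified to the $(2/3, 1/3)$ gap with only a polynomial overhead. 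Thus both conditions of Definition~\ref{definition:paraQMA} hold.

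Third, I would exhibit a finite union of slices that is \QMA-complete. Since the parameter ranges over $l \geq 2$, the slice $(\textsc{$l$-Local Hamiltonian}, l)_2$ is exactly the $2$-\textsc{Local Hamiltonian} problem, which is known to be \QMA-complete. A single slice is in particular a finite union of slices, so statement~(\ref{ParaQMACompleteness2}) of Theorem~\ref{theorem:ParaQMACompleteness} holds. Applying that theorem gives that $(\textsc{$l$-Local Hamiltonian}, l)$ is \paraQMA-complete under FPQT reductions, which is the claim.

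The main obstacle, and the only place that needs genuine care rather than bookkeeping, is the membership argument in \paraQMA: one must confirm that the single-term measurement gadget together with the gap amplification really yields an FPT-uniform circuit of size $f(l) \cdot p(n)$ — that the $l$-dependence factors out cleanly as a function of $l$ alone and the $n$-dependence stays polynomial, with the polynomial exponent independent of $l$. Non-triviality is immediate, and the appeal to \QMA-completeness of the $l = 2$ slice is a direct citation of a standard result, so once membership is pinned down the corollary follows from Theorem~\ref{theorem:ParaQMACompleteness} with no further work.
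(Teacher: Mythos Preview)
Your proposal is correct and follows essentially the same approach as the paper: invoke Theorem~\ref{theorem:ParaQMACompleteness} together with the known \QMA-completeness of \textsc{$l$-Local Hamiltonian} for constant $l$. The paper's proof is terser---it simply cites the \QMA-completeness result (which simultaneously handles membership and the hardness of a slice)---whereas you spell out the non-triviality check, the membership argument, and the choice of the $l=2$ slice explicitly, but the underlying logic is identical.
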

\begin{proof}
    The proof follows from Theorem~\ref{theorem:ParaQMACompleteness} and the fact that \textsc{$l$-Local Hamiltonian} is \QMA-complete for constant $l$~\cite{kitaev2002classical, kempe2006complexity}.
\end{proof}

\subsection{Parameterized Quantum Classical Merlin Arthur}
\label{section:ParameterizedQuantumClassicalMerlinArthur}

We shall now introduce the complexity class parameterized Quantum Classical Merlin Arthur (\paraQCMA), which is the subclass of \paraQMA restricted to classical proofs. Alternatively, it is the parameterized version of the quantum complexity class Quantum Classical Merlin Arthur (\QCMA)~\cite{watrous2009quantum}.
\begin{definition}[\paraQCMA]
    \label{definition:paraQCMA}
    A parameterized problem $(L,\kappa)$ over the alphabet $\Sigma$ is in \paraQCMA{$(c,s)$} if there is a quantum verification procedure $\{\mathcal{V}_{n,k}\}_{n,k\in\mathbb{Z}^+}$ such that the following conditions are satisfied.
    \begin{enumerate}
        \item There is a computable function $f:\mathbb{Z}^+\to\mathbb{Z}^+$ and a polynomial $p\in\mathbb{N}[X]$, such that, for every $x\in\Sigma^*$, the size of an FPT-uniform quantum circuit that computes $\mathcal{V}_{\abs{x},\kappa(x)}$ on input $x$ is at most $f(\kappa(x)) \cdot p(\abs{x})$.
        \item For every $x\in\Sigma^*$,
        \begin{itemize}
            \item If $x \in L$, then there exists a bit string $y$, such that $\textbf{Pr}[\text{$\mathcal{V}_{\abs{x},\kappa(x)}(x,y)$ accepts}] \geq c$.
            \item If $x \notin L$, then for every bit string $y$, $\textbf{Pr}[\text{$\mathcal{V}_{\abs{x},\kappa(x)}(x,y)$ accepts}] \leq s$.
        \end{itemize}
    \end{enumerate}
    The class \paraQCMA is defined to be \paraQCMA{$(\frac{2}{3},\frac{1}{3})$}.
\end{definition}

We shall now present some alternative characterisations of \paraQCMA.
\begin{proposition}
    \label{proposition:ParaQCMAAlternativeCharacterisations}
    Let $(L,\kappa)$ be a parameterized problem over the alphabet $\Sigma$. Then the following statements are equivalent.
    \begin{enumerate}
        \item $(L,\kappa)$ is in \paraQCMA.
        \item $(L,\kappa)$ is in \QCMA after a precomputation on the parameter. That is, there exists an alphabet $\Pi$, a computable function $\pi:\mathbb{Z}^+\to\Pi^*$, and a problem $X\subseteq\Sigma^*\times\Pi^*$ such that $X$ is in \QCMA and, for all instances $x$ of $L$, we have $x \in L$ if and only if $(x,\pi(\kappa(x))) \in X$.
        \item $L$ is decidable and $(L,\kappa)$ is eventually in \QCMA. That is, there exists a computable function $\rho:\mathbb{Z}^+\to\mathbb{Z}^+$ and a QCMA algorithm that on input $x\in\Sigma^*$ with $\abs{x}\geq\rho(\kappa(x))$, decides if $x$ is a member of $L$ with error probability at most $1/3$.
    \end{enumerate}
\end{proposition}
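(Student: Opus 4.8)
The plan is to prove the three equivalences along the cycle $(1)\Rightarrow(2)\Rightarrow(3)\Rightarrow(1)$, following the classical argument (Flum--Grohe, and the proofs of Theorem~\ref{theorem:FPQTAlternativeCharacterisations} and Proposition~\ref{proposition:ParaQMAAlternativeCharacterisations}) essentially verbatim, the only change being that the underlying machine model is a polynomial-time quantum verifier receiving a \emph{classical} Merlin proof rather than a deterministic Turing machine. The parameterized bookkeeping --- padding an advice string so that polynomial time in its length dominates $f(\kappa(x))\cdot p(\abs{x})$, and charging the cost of evaluating the computable functions involved against $\abs{x}$ only once $\abs{x}$ is large enough --- is identical to the deterministic setting, so I only indicate where the argument touches the proof system.

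For $(1)\Rightarrow(2)$, given a \paraQCMA verification family $\{\mathcal{V}_{n,k}\}$ whose FPT-uniform circuits have size at most $f(\kappa(x))\cdot p(\abs{x})$, I would take $\Pi=\{0,1\}$ and let $\pi(k)$ be the binary encoding of $k$ together with a description of the slice-$k$ circuit generator, padded in unary to length at least $f(k)$. Let $X$ be the set of pairs $(x,w)$ such that $w$ is a syntactically valid $\pi(k)$ for some $k$ and $\mathcal{V}_{\abs{x},k}$ accepts $x$ with a classical witness with probability at least $2/3$ (rejecting when this probability is at most $1/3$, and rejecting on ill-formed $w$). A \QCMA machine for $X$ parses $w$ to recover $k$ and the generator, builds $\mathcal{V}_{\abs{x},k}$, and simulates it on $x$ together with Merlin's classical proof; since $\abs{w}\geq f(k)$, this runs in time polynomial in $\abs{x}+\abs{w}$, which absorbs $f(k)\cdot p(\abs{x})$. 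Completeness and soundness of $X$ as a \QCMA problem are inherited from $\{\mathcal{V}_{n,k}\}$, and by construction $x\in L\iff(x,\pi(\kappa(x)))\in X$.

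For $(2)\Rightarrow(3)$: first, $L$ is decidable because every \QCMA problem is decidable (the verifier's acceptance probability can be computed exactly in exponential time by enumerating classical proofs) and $\kappa$ is polynomial-time computable, so one decides $x\in L$ by computing $\kappa(x)$, then $\pi(\kappa(x))$, then membership of $(x,\pi(\kappa(x)))$ in $X$. For ``eventually \QCMA'', let $T$ be a computable bound on the time to compute $\pi(k)$ from $k$ and set $\rho(k)=T(k)+\abs{\pi(k)}+k$; on any input $x$ with $\abs{x}\geq\rho(\kappa(x))$, a polynomial-time \QCMA algorithm computes $\kappa(x)$, then $\pi(\kappa(x))$ in at most $T(\kappa(x))\leq\abs{x}$ steps, then runs the \QCMA machine for $X$ on $(x,\pi(\kappa(x)))$, which takes time polynomial in $\abs{x}+\abs{\pi(\kappa(x))}\leq 2\abs{x}$.

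Finally, for $(3)\Rightarrow(1)$, given the \QCMA algorithm $\mathcal{B}$ that decides $L$ on all inputs $x$ with $\abs{x}\geq\rho(\kappa(x))$, I would build a \paraQCMA verifier that, on input $x$ with $k=\kappa(x)$, checks whether $\abs{x}\geq\rho(k)$; if so it runs $\mathcal{B}$ (a polynomial-size quantum circuit with a classical proof, hence of size at most $f(k)\cdot p(\abs{x})$ for a suitable $f$), and if not it ignores Merlin and decides $x\in L$ directly, the cost being a computable function of $k$ alone since $\abs{x}<\rho(k)$ --- letting $f$ dominate that function yields the required size bound, with perfect completeness and soundness on small inputs and parameters $(2/3,1/3)$ on large ones. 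The only genuinely delicate point is the one in $(2)\Rightarrow(3)$: ensuring that the (potentially enormous) cost of evaluating $\pi$, and implicitly $f$, is charged against the instance size only past the threshold $\rho$. I expect that to be the main obstacle to a fully rigorous write-up; it is, however, a purely classical subtlety, and everything else is routine once one observes that $\mathcal{V}_{n,k}$ with a classical proof is exactly a \QCMA verifier after the parameter has been fixed.
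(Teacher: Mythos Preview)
Your proposal is correct and follows precisely the approach the paper intends: the paper's own ``proof'' is simply a deferral to Theorem~\ref{theorem:FPQTAlternativeCharacterisations}, which in turn defers to the classical argument in Flum--Grohe~\cite[Theorem~1.37]{flum2006parameterized}, and you have faithfully unpacked that argument with the \QCMA verifier in place of the deterministic machine. Your write-up is in fact considerably more detailed than what the paper provides, but the underlying route---padding the advice string in $(1)\Rightarrow(2)$, thresholding on $\rho$ in $(2)\Rightarrow(3)$, and splitting on $\abs{x}\gtrless\rho(\kappa(x))$ in $(3)\Rightarrow(1)$---is the standard one and matches the paper's intent.
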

\begin{proof}
    The proof follows similarly to that of Theorem~\ref{theorem:FPQTAlternativeCharacterisations}.
\end{proof}

The following containments are straightforward.
\begin{proposition}
    $\paraNP\subseteq\paraQCMA$, $\FPQT\subseteq\QCMA$, and $\paraQCMA\subseteq\paraQMA$.
\end{proposition}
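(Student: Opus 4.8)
The plan is to prove the three containments $\paraNP\subseteq\paraQCMA$, $\FPQT\subseteq\QCMA$ — which should really read $\FPQT\subseteq\paraQCMA$ — and $\paraQCMA\subseteq\paraQMA$ by, in each case, directly exhibiting a verification procedure satisfying Definition~\ref{definition:paraQCMA} (or Definition~\ref{definition:paraQMA}) built from the data witnessing membership in the smaller class. Each inclusion mirrors a corresponding unparameterized inclusion ($\NP\subseteq\QCMA$, $\BQP\subseteq\QCMA$, $\QCMA\subseteq\QMA$), lifted through the parameter in the routine way; the FPT-uniformity and size bound $f(\kappa(x))\cdot p(\abs{x})$ carry over because the constructions are uniform and incur only polynomial overhead.

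First I would handle $\paraNP\subseteq\paraQCMA$. Given a \paraNP{} verification procedure $\{\mathcal{V}_{n,k}\}$ running in deterministic time $f(\kappa(x))\cdot p(\abs{x})$, I would simulate each $\mathcal{V}_{n,k}$ by a quantum circuit computing the same Boolean function reversibly, using the standard Bennett-style simulation of a Turing machine by a circuit: this gives an FPT-uniform family of quantum circuits of size $f(\kappa(x))\cdot q(\abs{x})$ for a polynomial $q$, accepting $(x,y)$ with probability $1$ when $\mathcal{V}_{\abs{x},\kappa(x)}(x,y)$ accepts and $0$ otherwise, so the $(c,s)=(1,0)$ gap trivially satisfies the $(\tfrac23,\tfrac13)$ requirement. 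Second, for $\FPQT\subseteq\paraQCMA$, I would take an FPQT algorithm $\mathcal{A}$ for $(L,\kappa)$ with its FPT-uniform circuit family of size $f(\kappa(x))\cdot p(\abs{x})$ and error at most $1/3$, and use it as a \paraQCMA{} verifier that simply ignores the classical witness $y$: completeness and soundness with $(c,s)=(\tfrac23,\tfrac13)$ are immediate from the error bound, and the circuit-size and uniformity conditions are inherited verbatim. Third, $\paraQCMA\subseteq\paraQMA$ follows because any \paraQCMA{} verifier is in particular a \paraQMA{} verifier: a classical witness $y$ can be supplied as the computational-basis state $\ket{y}$, and the same soundness condition — quantified over all bit strings — is weaker than, hence implied via a standard argument from, the \paraQMA{} soundness over all quantum states; here one notes that against a fixed verifier the optimal witness can be taken to be a basis state when the verifier first measures the witness register in the computational basis, or more simply one observes that the \paraQCMA{} acceptance guarantees extend to mixtures and hence to arbitrary $\ket{\psi}$ after the initial measurement. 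Since $\QCMA\subseteq\QMA$ is the relevant unparameterized fact, the cleanest route is to invoke Proposition~\ref{proposition:ParaQCMAAlternativeCharacterisations}~(2.) and Proposition~\ref{proposition:ParaQMAAlternativeCharacterisations}~(2.): the same precomputation $\pi$ works, and $X\in\QCMA\subseteq\QMA$.

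The main obstacle, such as it is, is the third inclusion: one must be slightly careful that a \paraQCMA{} verifier, which is promised to behave well only on classical-string witnesses, still gives the correct \paraQMA{} soundness when an adversarial \emph{quantum} witness is allowed. The standard fix is to have the \paraQMA{} verifier measure its witness register in the computational basis as its first step; this collapses any $\ket{\psi}$ to a classical string $y$ drawn from some distribution, and by linearity the acceptance probability is a convex combination of the \paraQCMA{} acceptance probabilities on classical strings, which is $\geq c$ in the yes-case (using the $y$ achieving $\geq c$) and $\leq s$ in the no-case (since every $y$ gives $\leq s$). This preserves the FPT-uniform size bound up to an additive $O(\abs{y})=O(\mathrm{poly})$ and completes the argument.

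\begin{proof}
    The inclusion $\paraNP\subseteq\paraQCMA$ follows by simulating the deterministic verifier $\{\mathcal{V}_{n,k}\}$ of a \paraNP{} problem with an FPT-uniform family of quantum circuits of polynomially larger size, yielding a \paraQCMA{$(1,0)$} verification procedure. The inclusion $\FPQT\subseteq\paraQCMA$ follows by taking an FPQT algorithm for $(L,\kappa)$ and using it as a \paraQCMA{} verifier that ignores its classical witness; the circuit-size, uniformity, completeness, and soundness conditions are inherited directly from Definition~\ref{definition:FixedParameterQuantumTractable}. Finally, $\paraQCMA\subseteq\paraQMA$ follows from Proposition~\ref{proposition:ParaQCMAAlternativeCharacterisations}~(2.) and Proposition~\ref{proposition:ParaQMAAlternativeCharacterisations}~(2.) together with $\QCMA\subseteq\QMA$: the same precomputation $\pi$ on the parameter witnesses membership in \paraQMA, since the precomputed problem $X$ lies in $\QCMA\subseteq\QMA$. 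Equivalently, one may convert a \paraQCMA{} verifier into a \paraQMA{} verifier by prepending a computational-basis measurement of the witness register, so that the acceptance probability on any quantum witness is a convex combination of acceptance probabilities on classical witnesses, preserving the $(\tfrac23,\tfrac13)$ gap.
\end{proof}
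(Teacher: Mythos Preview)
Your proposal is correct and matches the paper's intent: the paper simply declares these containments ``straightforward'' without writing out a proof, and your three arguments (classical-to-quantum simulation of the verifier, ignoring the witness, and measuring the witness in the computational basis or invoking the precomputation characterisations) are exactly the standard ones. You are also right that $\FPQT\subseteq\QCMA$ is a typo for $\FPQT\subseteq\paraQCMA$.
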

We conjecture that these containments are strict, i.e., $\paraNP\neq\paraQCMA$, $\FPQT\neq\paraQCMA$, and $\paraQCMA\neq\paraQMA$. However, proving a separation between \paraNP and \paraQCMA is as difficult as proving a separation between \NP and \QCMA, proving a separation between \FPQT and \paraQCMA is as difficult as proving a separation between \BQP and \QCMA, and proving a separation between \paraQCMA and \paraQMA is as difficult as proving a separation between \QCMA and \QMA. Therefore, resolving these conjectures is a hard open problem.
\begin{proposition}
    $\paraNP=\paraQCMA$ if and only if $\NP=\QCMA$.
\end{proposition}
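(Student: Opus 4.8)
The plan is to mirror the structure of the analogous statements already proved in this section, namely Propositions~\ref{proposition:ParaNPvsParaQMA} and~\ref{proposition:FPQTvsParaQMA}, since the relationship between \paraNP, \paraQCMA, and their precomputation characterisations is formally identical to the one between \paraNP, \paraQMA. First I would prove the forward direction: assuming $\paraNP=\paraQCMA$, take an arbitrary problem $L\subseteq\Sigma^*$ in \QCMA and equip it with the trivial parameterization $\kappa_\text{trivial}$. Since $\kappa_\text{trivial}$ is constant, $(L,\kappa_\text{trivial})$ is in \paraQCMA (a \QCMA verifier for $L$ is in particular an \FPT-uniform verifier whose circuit size is polynomial, so the bound $f(\kappa(x))\cdot p(|x|)$ is trivially met with $f\equiv 1$). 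By the hypothesis $(L,\kappa_\text{trivial})$ is then in \paraNP, and because the parameter is constant the \paraNP verification procedure collapses to an ordinary polynomial-time nondeterministic verifier for $L$, witnessing $L\in\NP$. Since $\NP\subseteq\QCMA$ always holds, this gives $\NP=\QCMA$.

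For the converse, I would assume $\NP=\QCMA$ and invoke the alternative characterisations. By Proposition~\ref{proposition:ParaQCMAAlternativeCharacterisations}~(2.), $(L,\kappa)$ is in \paraQCMA if and only if there is a computable precomputation $\pi$ on the parameter and a problem $X\in\QCMA$ with $x\in L\iff(x,\pi(\kappa(x)))\in X$. Under the hypothesis $\QCMA=\NP$, the problem $X$ lies in \NP, so this is precisely the statement that $(L,\kappa)$ is in \NP after a precomputation on the parameter, which by the classical characterisation of \paraNP (the \QCMA analogue of \cite[Proposition 2.12 (2.)]{flum2006parameterized}) is equivalent to $(L,\kappa)\in\paraNP$. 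Hence $\paraNP=\paraQCMA$, completing the proof.

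I do not expect any genuine obstacle here; the argument is a routine transcription of the two preceding propositions with \QMA replaced by \QCMA and \FPT/\NP-precomputation characterisations substituted in the appropriate places. The only point requiring a moment's care is confirming that the precomputation characterisation of \paraNP is indeed available — it is the direct classical analogue of Proposition~\ref{proposition:ParaQCMAAlternativeCharacterisations}~(2.), established exactly as \cite[Proposition 2.12]{flum2006parameterized}, so no new machinery is needed. Consequently I would keep the proof terse, very likely just a one- or two-sentence pointer of the form ``The proof follows similarly to that of Proposition~\ref{proposition:ParaNPvsParaQMA},'' with the substitutions spelled out only if a referee asks.
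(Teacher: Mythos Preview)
Your proposal is correct and follows exactly the approach the paper takes: the paper's proof is literally the one-line pointer ``The proof follows similarly to that of Proposition~\ref{proposition:ParaNPvsParaQMA},'' and your detailed write-up unpacks precisely that argument (trivial parameterization for the forward direction, precomputation characterisations for the converse).
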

\begin{proof}
    The proof follows similarly to that of Proposition~\ref{proposition:ParaNPvsParaQMA}.
\end{proof}

\begin{proposition}
    \label{proposition:FPQTvsParaQCMA}
    $\FPQT=\paraQCMA$ if and only if $\BQP=\QCMA$.
\end{proposition}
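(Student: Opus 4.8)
The plan is to mirror the structure of Proposition~\ref{proposition:FPTvsFPQT} and Proposition~\ref{proposition:FPQTvsParaQMA}, since the statement $\FPQT=\paraQCMA \iff \BQP=\QCMA$ is the exact \QCMA-analogue of the \QMA result. The proof will be two implications, and both directions are expected to be routine given the machinery already developed.

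First I would prove the forward direction. Suppose $\FPQT=\paraQCMA$. Take an arbitrary problem $L\subseteq\Sigma^*$ in \QCMA. Equip it with the trivial parameterization $\kappa_\text{trivial}$; then $(L,\kappa_\text{trivial})$ is in \paraQCMA (this follows directly from the definitions, since the verification circuit size bound $f(1)\cdot p(\abs{x})$ is just a polynomial bound). By the hypothesis $\FPQT=\paraQCMA$, we conclude $(L,\kappa_\text{trivial})\in\FPQT$. Since the trivial parameterization sends everything to a single value, an FPQT algorithm for $(L,\kappa_\text{trivial})$ is simply a \BQP\ algorithm for $L$, so $L\in\BQP$. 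As $L$ was arbitrary, $\QCMA\subseteq\BQP$, and combined with the trivial inclusion $\BQP\subseteq\QCMA$ we get $\BQP=\QCMA$.

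For the reverse direction, suppose $\BQP=\QCMA$. I would invoke the alternative characterisations: Theorem~\ref{theorem:FPQTAlternativeCharacterisations}~(2.) characterises \FPQT\ as "\BQP\ after a precomputation on the parameter", and Proposition~\ref{proposition:ParaQCMAAlternativeCharacterisations}~(2.) characterises \paraQCMA\ as "\QCMA\ after a precomputation on the parameter". Under the assumption $\BQP=\QCMA$, the class $X$ appearing in the precomputation characterisation can be taken to be in \BQP\ if and only if it can be taken to be in \QCMA, so the two characterisations coincide verbatim. Hence $\FPQT=\paraQCMA$.

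I do not anticipate a genuine obstacle here; the only mild subtlety is making sure the precomputation characterisations are applied with the same alphabet $\Pi$ and precomputation function $\pi$ on both sides, which is immediate since the characterisations are identical except for the complexity class of the auxiliary problem $X$. Concretely the proof can simply read: "The proof follows similarly to that of Proposition~\ref{proposition:FPQTvsParaQMA}."

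\begin{proof}
    Suppose that $\FPQT=\paraQCMA$. For every problem $L\subseteq\Sigma^*$ in \QCMA, we have that $(L,\kappa_\text{trivial})$ is in \paraQCMA. It then follows that $(L,\kappa_\text{trivial})$ is in \FPQT and therefore $L$ is in \BQP. Hence $\BQP=\QCMA$.

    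Now suppose instead that $\BQP=\QCMA$, then the characterisation of \FPQT given by Theorem~\ref{theorem:FPQTAlternativeCharacterisations} (2.) and the characterisation of \paraQCMA given by Proposition~\ref{proposition:ParaQCMAAlternativeCharacterisations} (2.) are equivalent. Hence $\FPQT=\paraQCMA$.
\end{proof}
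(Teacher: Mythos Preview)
Your proof is correct and is essentially identical to the paper's own argument: the paper's proof simply reads ``The proof follows similarly to that of Proposition~\ref{proposition:FPQTvsParaQMA},'' and the detailed version you wrote out is exactly the \QCMA-analogue of that proposition's proof, using the trivial parameterization for the forward direction and the precomputation characterisations (Theorem~\ref{theorem:FPQTAlternativeCharacterisations}~(2.) and Proposition~\ref{proposition:ParaQCMAAlternativeCharacterisations}~(2.)) for the reverse.
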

\begin{proof}
    The proof follows similarly to that of Proposition~\ref{proposition:FPQTvsParaQMA}.
\end{proof}

\begin{proposition}
    $\paraQCMA=\paraQMA$ if and only if $\QCMA=\QMA$.
\end{proposition}
\begin{proof}
    Suppose that $\paraQCMA=\paraQMA$. For every problem $L\subseteq\Sigma^*$ in \QMA, we have that $(L,\kappa_\text{trivial})$ is in \paraQMA. It then follows that $(L,\kappa_\text{trivial})$ is in \paraQCMA and therefore $L$ is in \QCMA. Hence $\QCMA=\QMA$.

    Now suppose instead that $\QCMA=\QMA$, then the characterisation of \paraQCMA given by Proposition~\ref{proposition:ParaQCMAAlternativeCharacterisations} (2.) and the characterisation of \paraQMA given by Proposition~\ref{proposition:ParaQMAAlternativeCharacterisations} (2.) are equivalent. Hence $\paraQCMA=\paraQMA$.
\end{proof}

We shall now develop the theory of \paraQCMA-completeness. Firstly, we show that \paraQCMA is closed under FPQT reductions.
\begin{proposition}
    \paraQCMA is closed under FPQT reductions.
\end{proposition}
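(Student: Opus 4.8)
The plan is to follow the proof of Proposition~\ref{proposition:FPQTClosedUnderFPQTReductions}, adding the single extra observation that composing an FPQT reduction with a \paraQCMA verifier keeps the witness classical. Let $(L',\kappa')$ be in \paraQCMA with verification procedure $\{\mathcal{V}'_{n,k}\}_{n,k\in\mathbb{Z}^+}$, and let $R$ be an FPQT reduction from $(L,\kappa)$ to $(L',\kappa')$, computable by an FPQT algorithm and satisfying $\kappa'(R(x))\leq g(\kappa(x))$ for a computable $g$. First I would define the new verification procedure $\{\mathcal{V}_{n,k}\}_{n,k\in\mathbb{Z}^+}$ for $(L,\kappa)$: on input $x$ with classical witness string $y$, run the FPQT algorithm for $R$ to obtain a string $z$, compute $\kappa'(z)$ (which is polynomial-time computable), and then run $\mathcal{V}'_{\abs{z},\kappa'(z)}$ on $(z,y)$, accepting if and only if it accepts. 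When $R$ succeeds we have $\kappa'(z)=\kappa'(R(x))\leq g(\kappa(x))$, and in all cases $\abs{z}$ is bounded by the size of the (FPT-sized) output register of the circuit for $R$; hence the circuit for $\mathcal{V}_{\abs{x},\kappa(x)}$ can be assembled from the FPT-uniform families for $R$ and for $\mathcal{V}'$ --- selecting the appropriate slice of the $\mathcal{V}'$ family from the computed values of $\abs{z}$ and $\kappa'(z)$ --- with total size bounded by $f'(\kappa(x))\cdot p'(\abs{x})$ for a computable $f'$ and a polynomial $p'$. In particular the witness $y$ stays classical, so this is a legitimate \paraQCMA-type verifier.

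Next I would run the error-gap argument. Before composing, amplify the FPQT algorithm for $R$ by a majority vote over $O(1)$ independent runs so that it outputs the correct value $R(x)$ with probability at least $5/6$ (this preserves FPT circuit size), and amplify $\mathcal{V}'$ by standard parallel repetition of the \QCMA verifier --- the verifier copies the classical witness $y$ the required number of times internally and takes a majority --- so that it has completeness at least $5/6$ and soundness at most $1/6$, again with classical witnesses and polynomially bounded circuit size. Then for $x\in L$ we have $R(x)\in L'$; taking $y$ to be a witness for $R(x)$, the combined verifier accepts with probability at least $(5/6)(5/6)=25/36>2/3$. For $x\notin L$ we have $R(x)\notin L'$; for every $y$, the combined verifier accepts with probability at most $\frac{1}{6}+\frac{5}{6}\cdot\frac{1}{6}<\frac{1}{3}$, where the first term accounts for a possibly faulty computation of $R(x)$ and the second for the event that $R$ succeeds but the prover still fools $\mathcal{V}'$. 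Hence $(L,\kappa)$ is in \paraQCMA.

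The only point requiring care --- and the one absent from the classical proof --- is that the reduction $R$ is computed by a bounded-error quantum algorithm rather than a deterministic procedure, so a faulty computation of $R(x)$ could in principle let a dishonest prover force an erroneous acceptance; this is precisely what amplifying the success probability of $R$ neutralises, since that failure event then contributes only a small constant to the soundness error, comfortably inside the verifier's gap. All remaining bookkeeping --- the bound on $\abs{R(x)}$, the FPT bound on the parameter via $g$, the FPT-uniform assembly of the composed circuit, and the preservation of the classical-witness property --- is identical to the classical case and to Proposition~\ref{proposition:FPQTClosedUnderFPQTReductions}.
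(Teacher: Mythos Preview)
Your proposal is correct and takes essentially the same approach as the paper, which simply states that the proof follows similarly to that of Proposition~\ref{proposition:FPQTClosedUnderFPQTReductions}. You have spelled out in detail exactly what that similarity entails, including the composition of the FPQT reduction with the \paraQCMA verifier, the preservation of the classical witness, and the error-gap argument needed because $R$ is a bounded-error quantum procedure.
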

\begin{proof}
    The proof follows similarly to that of Proposition~\ref{proposition:FPQTClosedUnderFPQTReductions}.
\end{proof}

The following theorem allows us to establish the \paraQCMA-completeness of a wide range of problems.
\begin{theorem}
    \label{theorem:ParaQCMACompleteness}
    Let $(L,\kappa)$ be a non-trivial parameterized problem in \paraQCMA. Then the following statements are equivalent.
    \begin{enumerate}
        \item $(L,\kappa)$ is \textnormal{\paraQCMA-complete} under FPQT reductions.
        \item The union of a finite number of slices of $(L,\kappa)$ is \textnormal{\QCMA-complete}.
    \end{enumerate}
\end{theorem}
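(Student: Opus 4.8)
The plan is to mirror the classical argument of Flum and Grohe (Theorem 2.14), which this paper has already invoked for the \paraNP\ and \paraQMA\ analogues (Theorem~\ref{theorem:ParaQMACompleteness}). The statement to prove is the equivalence, for a non-trivial $(L,\kappa)\in\paraQCMA$, of \paraQCMA-completeness under FPQT reductions with the \QCMA-completeness of some finite union of slices $\bigcup_{l\in S}(L,\kappa)_l$. I would begin with the direction \ref{ParaQMACompleteness2}$\Rightarrow$\ref{ParaQMACompleteness1} (the easier one): assuming $L_S \coloneqq \bigcup_{l\in S}(L,\kappa)_l$ is \QCMA-complete, I need to show every $(L',\kappa')\in\paraQCMA$ FPQT-reduces to $(L,\kappa)$. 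By Proposition~\ref{proposition:ParaQCMAAlternativeCharacterisations}, $(L',\kappa')$ is in \QCMA\ after a precomputation $\pi$ on the parameter, so an FPT machine computes $\pi(\kappa'(x))$ and then there is a \QCMA\ problem $X$ with $x\in L' \iff (x,\pi(\kappa'(x)))\in X$. Since $X\in\QCMA$ and $L_S$ is \QCMA-complete, there is a polynomial-time (classical) many-one reduction $r$ from $X$ to $L_S$; composing, $R(x) \coloneqq r(x,\pi(\kappa'(x)))$ is computable by an FPT algorithm, satisfies $x\in L' \iff R(x)\in L_S \subseteq L$, and $\kappa(R(x))\in S$ is bounded by $\max S$, a constant — so the parameter-bounding condition of Definition~\ref{definition:FPQTReduction} holds with a constant function $g$. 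Hence $R$ is an FPQT reduction and $(L,\kappa)$ is \paraQCMA-hard; it is in \paraQCMA\ by hypothesis, so it is complete.

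For the converse \ref{ParaQMACompleteness1}$\Rightarrow$\ref{ParaQMACompleteness2}, suppose $(L,\kappa)$ is \paraQCMA-complete under FPQT reductions. I would first exhibit a concrete \paraQCMA-complete problem built from a \QCMA-complete problem $A$ (such problems exist, e.g.\ a suitable ground-state connectivity or circuit-satisfiability problem), namely $(A',\kappa_{\mathrm{pad}})$ where $A'$ encodes $A$ padded so that the parameter is a fixed constant on the whole language — concretely, take $\kappa$ constant, which makes $(A,\kappa_{\text{const}})$ lie in \paraQCMA\ and, since its single nontrivial slice is \QCMA-complete, it is \paraQCMA-complete by the direction just proved. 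Now since $(L,\kappa)$ is \paraQCMA-complete, there is an FPQT reduction $R$ from $(A,\kappa_{\text{const}})$ to $(L,\kappa)$; the parameter-bounding condition gives a constant $c$ with $\kappa(R(x)) \le c$ for all $x$, so $R$ maps into $\bigcup_{l\le c}(L,\kappa)_l$. This almost shows that finite union is \QCMA-hard, but there is a subtlety: $R$ is only an \emph{FPQT} (bounded-error quantum, FPT-sized) reduction, not a classical polynomial-time reduction, so to conclude \QCMA-completeness of the slice union in the standard (polynomial-time Karp) sense one must argue that a constant-parameter FPQT reduction collapses to an ordinary \BQP-computable, hence \QCMA-admissible, reduction; since the parameter is constant, $f(\kappa(x))\cdot p(|x|)$ is just polynomial, so $R$ is polynomial-time quantum-computable, and a \QCMA\ verification composed after a \BQP\ preprocessing is still in \QCMA\ (absorb the preprocessing into Arthur). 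One then also checks the finite union $\bigcup_{l\le c}(L,\kappa)_l$ is itself in \QCMA\ — immediate since each slice is a \BQP-recognisable restriction of a \paraQCMA\ problem and there are finitely many — giving \QCMA-completeness.

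The main obstacle I expect is precisely this last point: reconciling the \emph{quantum, bounded-error, FPT-uniform} nature of FPQT reductions with the classical deterministic polynomial-time reductions that define \QCMA-completeness. In the classical Flum–Grohe setting the reductions are deterministic, so composition is transparent; here one must verify that (i) restricting to constant parameter turns the FPT size bound into a genuine polynomial, (ii) the bounded error of the reduction can be absorbed — either because the reduction computes a \emph{fixed} string (so success probability can be amplified to near-certainty and the output is essentially deterministic up to negligible error), or because one works with a notion of \QCMA-completeness under \BQP-reductions throughout, which is the natural quantum analogue and is presumably the convention the paper intends. I would state explicitly which notion of \QCMA-hardness is in force (most cleanly, \BQP\ many-one reductions, under which \QCMA\ is closed) and then the composition arguments in both directions go through verbatim by analogy with the \paraQMA\ case, so the bulk of the proof can legitimately be deferred to ``follows similarly to Theorem~\ref{theorem:ParaQMACompleteness}'' once this reduction-model issue is pinned down.
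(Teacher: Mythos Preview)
Your proposal is correct and matches the paper's approach exactly: the paper's proof is simply ``follows similarly to that of Theorem~\ref{theorem:ParaQMACompleteness}'', which itself defers to the classical Flum--Grohe argument (Theorem~2.14) that you have spelled out in detail. Your identification of the reduction-model subtlety (FPQT reductions being bounded-error quantum rather than deterministic polynomial-time) is a point the paper glosses over entirely, and your proposed resolution --- working with \BQP\ many-one reductions, under which \QCMA\ is closed --- is the natural fix.
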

\begin{proof}
    The proof follows similarly to that of Theorem~\ref{theorem:ParaQMACompleteness}.
\end{proof}

We now proceed to show that the \textsc{Ground State $k$-Connectivity} problem is \paraQCMA-complete.
\begin{center}
\begin{tabularx}{\linewidth}{l X c}
    \multicolumn{2}{l}{\textsc{Ground State $k$-Connectivity}:} \\
    \textit{Instance:} & A local Hamiltonian $H\coloneqq\sum_iH_i$ on $n$ qubits, where each term $H_i$ has infinity norm $\norm{H_i}_\infty\leq1$. A polynomial $p\in\mathbb{N}[X]$. Two positive numbers $a,b\in(0,1)$, such that $b-a>\frac{1}{\text{poly}(n)}$. Two polynomial-size quantum circuits $U_\psi$ and $U_\phi$ generating states $\ket{\psi}=U_\psi\ket{0^n}$ and $\ket{\phi}=U_\phi\ket{0^n}$ such that $\expval{H}{\psi} \leq a$ and $\expval{H}{\phi} \leq a$. \\
    \textit{Parameter:} &  A natural number $k$. \\
    \textit{Problem:} & Decide whether there exists a sequence of $k$-local unitary matrices $(U_i)_{i=1}^{p(n)}$, such that:
    \begin{enumerate}
        \item For all $m\in[p(n)]$, the intermediate states $\ket{\psi_m}\coloneqq\prod_{i=1}^mU_i\ket{\psi}$ satisfy $\expval{H}{\psi_i} \leq a$.
        \item The final state $\ket{\psi_{p(n)}}\coloneqq\prod_{i=1}^{p(n)}U_i\ket{\phi}$ satisfies $\norm{\ket{\psi_{p(n)}}-\ket{\phi_{\vphantom{p(n)}}}}_2 \leq a$.
    \end{enumerate}
    Otherwise, if for all sequences of $k$-local unitary matrices $(U_i)_{i=1}^{p(n)}$, either:
    \begin{enumerate}
        \item There exists an $m\in[p(n)]$ and an intermediate state $\ket{\psi_m}\coloneqq\prod_{i=1}^mU_i\ket{\psi}$ such that $\expval{H}{\psi_i} \geq b$.
        \item The final state $\ket{\psi_{p(n)}}\coloneqq\prod_{i=1}^{p(n)}U_i\ket{\phi}$ satisfies $\norm{\ket{\psi_{p(n)}}-\ket{\phi_{\vphantom{p(n)}}}}_2 \geq b$.
    \end{enumerate}
    Given the promise that one of these is the case.
\end{tabularx}
\end{center}

\begin{corollary}
    \label{corollary:GroundStateKConnectivityParaQCMAComplete}
    \textsc{Ground State $k$-Connectivity} is \textnormal{\paraQCMA-complete}.
\end{corollary}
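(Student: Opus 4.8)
The plan is to invoke Theorem~\ref{theorem:ParaQCMACompleteness}. Since \textsc{Ground State $k$-Connectivity} is evidently non-trivial, it suffices to show (a) that the problem lies in \paraQCMA, and (b) that the union of a finite number of its slices is \QCMA-complete; the corollary then follows from the implication (2)~$\Rightarrow$~(1) of that theorem.

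For part (a), the natural classical witness is a description of the sequence $(U_i)_{i=1}^{p(n)}$: for each $i$ we record the (at most) $k$ qubits on which $U_i$ acts non-trivially together with the entries of the corresponding $2^k\times2^k$ unitary, specified to $\mathrm{poly}(n)$ bits of precision. This has total length $f(k)\cdot\mathrm{poly}(n)$ with $f(k)=2^{O(k)}$. The verifier prepares $\ket{\psi}=U_\psi\ket{0^n}$ and, for each prefix length $m\in[p(n)]$, applies $U_1,\dots,U_m$ and runs a phase-estimation subroutine against $H$ to test whether the energy of the resulting state is below $a$ or above $b$; it then applies $U_\phi$ together with a swap-test-style overlap estimate to check the terminal closeness condition. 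Compiling each $k$-local $U_i$ from its classical description into elementary gates costs $2^{O(k)}$ gates, so the entire verification circuit has size $f(k)\cdot\mathrm{poly}(n)$ and is \FPT-uniform; hence $(L,\kappa)\in\paraQCMA$.

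For part (b), I would appeal to the result of Gharibian and Sikora~\cite{gharibian2018ground} that the ground state connectivity problem is \QCMA-complete in the regime of polynomially many gates each of locality $2$. In the notation above this is precisely the slice $\kappa=2$: it lies in \QCMA since the parameterization is polynomial-time computable (the same argument that shows slices of \FPQT problems lie in \BQP), and it is \QCMA-hard by~\cite{gharibian2018ground}. Thus a finite union of slices of \textsc{Ground State $k$-Connectivity} (indeed a single slice) is \QCMA-complete, which is exactly clause (2) of Theorem~\ref{theorem:ParaQCMACompleteness}.

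The main obstacle is the error analysis underpinning part (a): one must verify that amplifying the individual energy and overlap tests, together with the finite-precision description of the $U_i$, produces a verification procedure whose completeness and soundness errors stay separated given only the promise gap $b-a>1/\mathrm{poly}(n)$, and that the errors accumulated over the $p(n)$ intermediate checks remain controlled after a union bound. This is handled exactly as in the proof that ground state connectivity lies in \QCMA~\cite{gharibian2018ground}; the only new bookkeeping is confirming that each such step contributes at most a $2^{O(k)}$ overhead, so that the verification is fixed-parameter quantum tractable. Combining (a) and (b) via Theorem~\ref{theorem:ParaQCMACompleteness} yields that \textsc{Ground State $k$-Connectivity} is \paraQCMA-complete under FPQT reductions.
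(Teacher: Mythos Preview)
Your proposal is correct and follows essentially the same approach as the paper: both invoke Theorem~\ref{theorem:ParaQCMACompleteness} together with the \QCMA-completeness of \textsc{Ground State $k$-Connectivity} for constant $k$ from~\cite{gharibian2018ground}. The paper's proof is a one-liner citing these two facts, while you additionally spell out the \paraQCMA membership argument in detail; this is elaboration of the same underlying route rather than a different one.
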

\begin{proof}
    The proof follows from Theorem~\ref{theorem:ParaQCMACompleteness} and the fact that \textsc{Ground State $k$-Connectivity} is \QCMA-complete for constant $k$~\cite{gharibian2018ground}.
\end{proof}

\subsection{Slice-Wise Quantum Polynomial Time}
\label{section:SliceWiseQuantumPolynomialTime}

The class \XP is often used in a similar way to \EXP (exponential time) as there is both a strict separation from \FPT under the time hierarchy, and the parameterizations of several \EXP-complete problems are \XP-complete. This includes the \textsc{Peg Game} when parameterized by the number of rings~\cite{downey1999parameterized} and the \textsc{Pebble Game} when parameterized by size of the start set~\cite{downey1999parameterized}. The \textsc{$n^k$-Step Halting Problem} for deterministic Turing machines is also \XP-complete when parameterized by $k$~\cite{flum2006parameterized}.

\begin{definition}[\XP]
    The class \XP consists of all parameterized problems $(L,\kappa)$ whose slices $(L,\kappa)_l$ for $l\geq1$ are all in \PT.
\end{definition}
  
This class provides a direct means for establishing that a problem is unlikely to be \paraNP-complete, as $\paraNP\subset\XP$ implies $\PT=\NP$~\cite[Proposition 2.20]{flum2006parameterized}. Thus membership in \XP is a useful tool for demonstrating the possibility of tractability. A similar theorem holds in the quantum case, providing a similar tool.

We shall now briefly introduce the complexity class Slice-wise Quantum Polynomial time (\XQP), which is the quantum analogue of the classical complexity class Slice-wise Polynomial time (\XP)~\cite{downey1999parameterized}.
\begin{definition}[\XQP]
    \label{definition:XQP}
    The class \XQP consists of all parameterized problems $(L,\kappa)$ whose slices $(L,\kappa)_l$ for $l\geq1$ are all in \BQP.
\end{definition}

The following containments are straightforward.
\begin{proposition}
    $\XP\subseteq\XQP$ and $\FPQT\subseteq\XQP$.
\end{proposition}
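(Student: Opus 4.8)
The plan is to establish the two containments separately, each by reducing to a fact already recorded in the excerpt. For $\XP\subseteq\XQP$, I would argue straight from the definitions. A parameterized problem $(L,\kappa)$ lies in $\XP$ exactly when every slice $(L,\kappa)_l$ with $l\geq1$ is in $\PT$. Since every deterministic polynomial-time algorithm is in particular a bounded-error polynomial-time quantum algorithm, we have $\PT\subseteq\BQP$, and hence each such slice is also in $\BQP$. By the definition of \XQP this is precisely the statement that $(L,\kappa)\in\XQP$.

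For $\FPQT\subseteq\XQP$, I would invoke the proposition proved just after Definition~\ref{definition:FixedParameterQuantumTractable}, which states that if $(L,\kappa)$ is in \FPQT then each slice $(L,\kappa)_l$ is in \BQP. (Recall that proposition holds because $\kappa$ is polynomial-time computable, so on the $l^{\mathrm{th}}$ slice the factor $f(\kappa(x))=f(l)$ is a fixed constant and the FPT-uniform quantum circuit family of size $f(\kappa(x))\cdot p(\abs{x})$ specialises to an ordinary polynomial-size, polynomial-time uniform \BQP circuit family.) Given that every slice of an \FPQT problem is in \BQP, the definition of \XQP immediately yields $(L,\kappa)\in\XQP$.

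Neither step involves a genuine obstacle; the only point requiring a little care is checking that fixing the parameter value collapses the FPT-uniform / FPQT bookkeeping to standard polynomial-time uniformity, so that the slices really are \BQP problems in the unparameterized sense. This is exactly what the earlier proposition already establishes, so in practice the argument is just an unwinding of the definitions together with the trivial inclusion $\PT\subseteq\BQP$.
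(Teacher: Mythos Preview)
Your argument is correct and matches the paper's intent: the paper simply declares both containments ``straightforward'' without proof, and your unwinding of the definitions together with $\PT\subseteq\BQP$ and the earlier slice proposition is exactly the reasoning that justifies this.
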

We conjecture that these containments are strict, i.e., $\XP\neq\XQP$ and $\FPQT\neq\XQP$. We also prove the following.
\begin{proposition}
    \label{proposition:FPQTinXPvsPinBQP}
    If $\FPQT\subseteq\XP$ then $\PT=\BQP$.
\end{proposition}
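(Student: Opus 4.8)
The plan is to mimic the strategy used in the proof of Proposition~\ref{proposition:FPTvsFPQT}, exploiting the trivial parameterization to collapse the parameterized statement to an unparameterized one. The key observation is that for the trivial parameterization $\kappa_\text{trivial}$, the first slice recovers the entire language: since $\kappa_\text{trivial}(x)=1$ for all $x\in\Sigma^*$, we have $(L,\kappa_\text{trivial})_1 = \{x \in L \mid \kappa_\text{trivial}(x)=1\} = L$.

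First I would take an arbitrary problem $L\subseteq\Sigma^*$ in \BQP. As already noted in the proof of Proposition~\ref{proposition:FPTvsFPQT}, the parameterized problem $(L,\kappa_\text{trivial})$ is in \FPQT: a \BQP\ decision procedure for $L$ is, with the constant function $f\equiv1$, an FPQT algorithm. Next I would apply the hypothesis $\FPQT\subseteq\XP$ to conclude that $(L,\kappa_\text{trivial})$ is in \XP. By the definition of \XP, every slice $(L,\kappa_\text{trivial})_l$ with $l\geq1$ is in \PT; in particular the first slice $(L,\kappa_\text{trivial})_1 = L$ is in \PT. Hence $\BQP\subseteq\PT$. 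Since the reverse inclusion $\PT\subseteq\BQP$ holds trivially, we obtain $\PT=\BQP$.

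I do not expect any genuine obstacle here — the argument is a short diagonalisation-free collapse via the trivial parameterization, entirely parallel to the first half of Proposition~\ref{proposition:FPTvsFPQT}. The only point that needs care is the (elementary) identification of the $l=1$ slice of the trivial parameterization with the full language $L$, which is what lets us pass from a statement about slices lying in \PT\ to the unparameterized inclusion $\BQP\subseteq\PT$.
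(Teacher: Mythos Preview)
Your proposal is correct and follows essentially the same approach as the paper: pass a \BQP\ problem through the trivial parameterization, land in \XP\ by hypothesis, and read off membership in \PT\ from the $l=1$ slice. The paper phrases it via a \BQP-complete problem rather than an arbitrary $L\in\BQP$, but your version is just as valid (indeed slightly cleaner, since completeness is not needed) and the key step --- identifying $(L,\kappa_\text{trivial})_1$ with $L$ --- is exactly the one the paper leaves implicit.
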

\begin{proof}
    If $\FPQT\subseteq\XP$ then any \BQP-complete problem with trivial parameterization is contained in \XP. Implying that $\PT=\BQP$.
\end{proof}

We shall now study the relationship between \XQP, \paraQMA, and \paraQCMA. We have the following propositions.
\begin{proposition}
    If $\BQP\neq\QMA$ then $\paraQMA\not\subseteq\XQP$.
\end{proposition}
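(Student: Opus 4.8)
The plan is to argue by contraposition: assuming $\paraQMA \subseteq \XQP$, I will show that $\BQP = \QMA$. The idea mirrors the classical argument that $\paraNP \subseteq \XP$ forces $\PT = \NP$~\cite[Proposition 2.20]{flum2006parameterized}, which is obtained by feeding an $\NP$-complete language to the trivial parameterization.

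First I would fix a $\QMA$-complete language $L$, complete under polynomial-time many-one reductions --- for instance a suitable encoding of \textsc{$l$-Local Hamiltonian} at a fixed constant locality, which is $\QMA$-complete~\cite{kitaev2002classical, kempe2006complexity} --- and equip it with the trivial parameterization $\kappa_\text{trivial}$. Since $L \in \QMA$, the pair $(L, \kappa_\text{trivial})$ lies in $\paraQMA$: the verification circuits are just those witnessing $L \in \QMA$, and the FPT size bound is satisfied with $f \equiv 1$. By the assumed inclusion, $(L, \kappa_\text{trivial}) \in \XQP$, so every slice $(L, \kappa_\text{trivial})_l$ is in $\BQP$.

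Next I would observe that the only nonempty slice is the first one, namely $(L, \kappa_\text{trivial})_1 = \{x \in L \mid \kappa_\text{trivial}(x) = 1\} = L$, while $(L, \kappa_\text{trivial})_l = \varnothing$ for all $l \geq 2$. Hence $L \in \BQP$. Since $L$ is $\QMA$-hard under polynomial-time many-one reductions and $\BQP$ is closed under such reductions, it follows that $\QMA \subseteq \BQP$; combined with the trivial inclusion $\BQP \subseteq \QMA$ this gives $\BQP = \QMA$. Contraposing, $\BQP \neq \QMA$ implies $\paraQMA \not\subseteq \XQP$.

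The only point requiring care is the reduction bookkeeping: one must invoke a $\QMA$-complete problem whose hardness holds under polynomial-time many-one (or, equally well, $\BQP$-Turing) reductions, so that membership of a single complete language in $\BQP$ propagates to all of $\QMA$. This is exactly what the standard Local Hamiltonian completeness construction provides, so I do not expect any genuine obstacle --- the argument is essentially a verification that the classical collapse proof transfers verbatim through the quantum analogues of the relevant classes.
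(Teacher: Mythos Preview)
Your proof is correct and follows essentially the same contrapositive strategy as the paper: both exhibit a parameterized problem in $\paraQMA$ whose relevant slice is $\QMA$-complete, so that membership in $\XQP$ forces that slice into $\BQP$ and hence collapses $\QMA$ to $\BQP$. The only cosmetic difference is that the paper uses the natural locality parameterization of \textsc{$l$-Local Hamiltonian} and reads off the $\QMA$-complete slice at $l=2$, whereas you use the trivial parameterization on a fixed $\QMA$-complete language so that the first slice is the whole language; both routes are equally valid.
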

\begin{proof}
    If $\paraQMA\subseteq\XQP$ then \textsc{$\kappa$-Local Hamiltonian} is in \XQP. Hence \textsc{$2$-Local Hamiltonian} is in \BQP, which implies $\BQP=\QMA$. Here we use the fact that \textsc{$2$-Local Hamiltonian} is \QMA-complete~\cite{kempe2006complexity}.
\end{proof}
\begin{proposition}
    If $\BQP\neq\QCMA$ then $\paraQCMA\not\subseteq\XQP$.
\end{proposition}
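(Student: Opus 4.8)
The plan is to mirror the proof of the preceding proposition for \paraQMA, replacing \textsc{$2$-Local Hamiltonian} by a problem that is \QCMA-complete at a fixed value of its parameter and whose natural parameterization lies in \paraQCMA. The obvious candidate is \textsc{Ground State $k$-Connectivity}: by Corollary~\ref{corollary:GroundStateKConnectivityParaQCMAComplete} it is \paraQCMA-complete, and by the result of Gharibian et al.~\cite{gharibian2018ground} its slice at some constant value of $k$ is \QCMA-complete.

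First I would argue by contraposition, assuming $\paraQCMA\subseteq\XQP$. Since \textsc{Ground State $k$-Connectivity} is in \paraQCMA, it is then in \XQP, so by Definition~\ref{definition:XQP} every slice $(L,\kappa)_l$ with $l\geq1$ is in \BQP. In particular, the slice at the constant value of $k$ at which the problem is \QCMA-complete is in \BQP. Since \BQP is closed under polynomial-time reductions, every problem in \QCMA then reduces into \BQP, and combined with the trivial inclusion $\BQP\subseteq\QCMA$ this yields $\BQP=\QCMA$; the contrapositive is the claimed statement.

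I do not anticipate a serious obstacle here; the one point requiring care is that the \QCMA-completeness of \textsc{Ground State $k$-Connectivity} from~\cite{gharibian2018ground} holds for a specific constant $k$, so that it is genuinely a single slice of the parameterized problem rather than merely an asymptotic statement. An alternative, slightly more abstract route that avoids appealing to a named problem is to invoke Theorem~\ref{theorem:ParaQCMACompleteness}: any non-trivial \paraQCMA-complete problem has a finite union of slices that is \QCMA-complete, and if that problem were in \XQP then each of these finitely many slices would be in \BQP, hence so would their union, again forcing $\BQP=\QCMA$.
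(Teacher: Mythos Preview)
Your proposal is correct and follows essentially the same route as the paper: argue the contrapositive, use that \textsc{Ground State $k$-Connectivity} is in \paraQCMA, conclude from $\paraQCMA\subseteq\XQP$ that its slice at a fixed $k$ (the paper takes $k=2$) lies in \BQP, and invoke the \QCMA-completeness of that slice from~\cite{gharibian2018ground} to get $\BQP=\QCMA$. The alternative abstract route via Theorem~\ref{theorem:ParaQCMACompleteness} is also sound but is not what the paper does.
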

\begin{proof}
    If $\paraQCMA\subseteq\XQP$ then \textsc{Ground State $\kappa$-Connectivity} is in \XQP. Hence \textsc{Ground State $2$-Connectivity} is in \BQP, which implies $\BQP=\QCMA$. Here we use the fact that \textsc{Ground State $2$-Connectivity} is \QCMA-complete~\cite{gharibian2018ground}.
\end{proof}

The class \XQP is a non-uniform class and, in fact, contains problems that are undecidable. It is easy to see this because the class \XP contains problems that are undecidable~\cite{flum2006parameterized}. We define the following uniform version of \XQP.
\begin{definition}[$\XQP_\textsc{uniform}$]
    The class $\XQP_\textsc{uniform}$ consists of all parameterized problems $(L,\kappa)$ over the alphabet $\Sigma$ for which there is a computable function $f:\mathbb{Z}^+\to\mathbb{Z}^+$ and a quantum algorithm that, given $x\in\Sigma^*$, decides if $x$ is a member of $L$ with error probability at most $1/3$ and runs in time less than $\abs{x}^{f(\kappa(x))}+f(\kappa(x))$.
\end{definition}

\subsection{The Quantum Weft Hierarchy}
\label{section:QuantumWeftHierarchy}

The Weft hierarchy (\W) and its related class \WP are the central tools for demonstrating intractability in the classical parameterized setting. The \W hierarchy consists of an infinite hierarchy of classes $\W[t]$ for $t\in\mathbb{N}$ and is contained in the class \WP. The class \WP may be intuitively thought of as the subclass of \paraNP with sufficiently limited non-determinism to also be a subclass of \XP. The problem \textsc{Weight-$k$ Circuit Satisfiability} is complete for \WP, \textsc{$k$-Independent Set} is complete for $\W[1]$, and \textsc{$k$-Dominating Set} is complete for $\W[2]$~\cite{downey2013fundamentals}. Before introducing the quantum analogues of these classes, we review their classical definitions.

\begin{definition}[\WP]
    A parameterized problem $(L,\kappa)$ over the alphabet $\Sigma$ is in \WP if there is a verification procedure $\{\mathcal{V}_{n,k}\}_{n,k\in\mathbb{Z}^+}$ such that the following conditions are satisfied.
    \begin{enumerate}
        \item There is a computable function $f:\mathbb{Z}^+\to\mathbb{Z}^+$ and a polynomial $p\in\mathbb{N}[X]$, such that, for every $x\in\Sigma^*$, $\mathcal{V}_{\abs{x},\kappa(x)}$ on input $x$ runs in time at most $f(\kappa(x)) \cdot p(\abs{x})$ on a deterministic Turing machine.
        \item For every $x\in\Sigma^*$,
        \begin{itemize}
            \item If $x \in L$, then there exists a bit string $y$ comprising at most $f(\kappa(x))\cdot\log\abs{x}$ bits, such that $\mathcal{V}_{\abs{x},\kappa(x)}(x,y)$ accepts.
            \item If $x \notin L$, then for every bit string $y$ comprising at most $f(\kappa(x))\cdot\log\abs{x}$ bits, $\mathcal{V}_{\abs{x},\kappa(x)}(x,y)$ rejects.
        \end{itemize}
    \end{enumerate}
\end{definition}

To define the complexity class $\W[t]$, we require the notion of \emph{circuit weft}.
\begin{definition}[Circuit weft]
    Given a Boolean circuit $\mathcal{C}$ comprising generalised Toffoli gates and one and two bit fan-in gates. The \emph{weft} of $\mathcal{C}$ is the maximum number of Toffoli gates that act on any path from input bit to output bit.
\end{definition}

\begin{center}
\begin{tabularx}{\linewidth}{l X c}
    \multicolumn{2}{l}{\textsc{Weight-$k$ Weft-$t$ Depth-$d$ Circuit Satisfiability}:} \\
    \textit{Instance:} & A weft-$t$ depth-$d$ Boolean circuit $\mathcal{C}$ on $n$ input bits. \\
    \textit{Parameter:} &  A natural number $k$. \\
    \textit{Problem:} & Decide whether there exists an $n$-bit Hamming weight-$k$ string $y$, such that $\mathcal{C}(y)$ accepts.
\end{tabularx}
\end{center}

\begin{definition}[{$\W[t]$}]
    For $t\in\mathbb{N}$, the class $\W[t]$ consists of all parameterized problems that are FPT reducible to \textsc{Weight-$k$ Weft-$t$ Depth-$d$ Circuit Satisfiability} for some $d \geq t$.
\end{definition}

We shall now introduce the Quantum Weft hierarchy (\QW), which is the quantum version of the Weft hierarchy (\W). We begin by defining the complexity class \QWP --- the quantum version of the complexity class \WP.

\begin{definition}[\QWP]
    \label{definition:QWP}
    A parameterized problem $(L,\kappa)$ over the alphabet $\Sigma$ is in \QWP{$(c,s)$} if there is a quantum verification procedure $\{\mathcal{V}_{n,k}\}_{n,k\in\mathbb{Z}^+}$ such that the following conditions are satisfied.
    \begin{enumerate}
        \item There is a computable function $f:\mathbb{Z}^+\to\mathbb{Z}^+$ and a polynomial $p\in\mathbb{N}[X]$, such that, for every $x\in\Sigma^*$, the size of an FPT-uniform quantum circuit that computes $\mathcal{V}_{\abs{x},\kappa(x)}$ on input $x$ is at most $f(\kappa(x)) \cdot p(\abs{x})$.
        \item For every $x\in\Sigma^*$,
        \begin{itemize}
            \item If $x \in L$, then there exists a quantum state $\ket{\psi}$ comprising at most $f(\kappa(x))\cdot\log\abs{x}$ qubits, such that $\textbf{Pr}[\text{$\mathcal{V}_{\abs{x},\kappa(x)}(x,\ket{\psi})$ accepts}] \geq c$.
            \item If $x \notin L$, then for every quantum state $\ket{\psi}$ comprising at most $f(\kappa(x))\cdot\log\abs{x}$ qubits, $\textbf{Pr}[\text{$\mathcal{V}_{\abs{x},\kappa(x)}(x,\ket{\psi})$ accepts}] \leq s$.
        \end{itemize}
    \end{enumerate}
    The class \QWP is defined to be \QWP{$(\frac{2}{3},\frac{1}{3})$}.
\end{definition}

The following containments are straightforward.
\begin{proposition}
    $\WP\subseteq\QWP$ and $\FPQT\subseteq\QWP$.
\end{proposition}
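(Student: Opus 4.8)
The plan is to prove the two containments $\WP\subseteq\QWP$ and $\FPQT\subseteq\QWP$ separately, each by a direct simulation argument that mirrors the unparameterized case ($\NP\subseteq\QMA$ and $\BQP\subseteq\QMA$), taking care that the logarithmic bound on the witness length and the FPT bound on the circuit size are preserved.

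For $\WP\subseteq\QWP$, I would start from a \WP verification procedure $\{\mathcal{V}_{n,k}\}$ running in deterministic time $f(\kappa(x))\cdot p(\abs{x})$ that, on a yes-instance, accepts for some classical witness $y$ of length at most $f(\kappa(x))\cdot\log\abs{x}$. The standard move is to build an FPT-uniform quantum circuit that (i) takes an $m$-qubit quantum witness $\ket{\psi}$ with $m=f(\kappa(x))\cdot\log\abs{x}$, (ii) measures it in the computational basis to obtain a classical string $y$, and (iii) runs the reversible quantum simulation of the deterministic computation of $\mathcal{V}_{\abs{x},\kappa(x)}(x,y)$, accepting iff it accepts. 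Converting the Turing-machine computation of size $f(\kappa(x))\cdot p(\abs{x})$ into a quantum circuit incurs only a polynomial overhead, so the resulting verifier circuit has size $f'(\kappa(x))\cdot p'(\abs{x})$ for suitable $f',p'$, and it is FPT-uniform because $\mathcal{V}$ is. Completeness is exact (the honest prover sends a basis state encoding the good $y$, accepting with probability $1$), and soundness is perfect ($0$ rather than $1/3$) because no superposition witness can do better than the best basis state when the verifier begins by measuring; so the $(c,s)=(2/3,1/3)$ thresholds are comfortably met.

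For $\FPQT\subseteq\QWP$, given an FPQT algorithm $\mathcal{A}$ for $(L,\kappa)$ computed by an FPT-uniform circuit family of size $f(\kappa(x))\cdot p(\abs{x})$ with error at most $1/3$, I would simply define the \QWP verifier to ignore its quantum witness entirely (equivalently, take the trivial $0$-qubit witness, which certainly satisfies the $f(\kappa(x))\cdot\log\abs{x}$ qubit bound) and run $\mathcal{A}(x)$, accepting iff $\mathcal{A}$ accepts. Completeness and soundness with $(c,s)=(2/3,1/3)$ are immediate from the defining property of $\mathcal{A}$, and the circuit-size and FPT-uniformity conditions of Definition~\ref{definition:QWP} are inherited directly from those of the FPQT algorithm.

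The argument is essentially routine, so there is no serious obstacle; the one point requiring a little care is the bookkeeping that the polynomial blowup from simulating a deterministic Turing machine of running time $f(\kappa(x))\cdot p(\abs{x})$ by a quantum (or Boolean) circuit still has the shape $f'(\kappa(x))\cdot p'(\abs{x})$ — this follows because $\bigl(f(\kappa(x))\cdot p(\abs{x})\bigr)^{O(1)}\leq f(\kappa(x))^{O(1)}\cdot p(\abs{x})^{O(1)}$, and both factors remain of the required form. One could alternatively observe that both containments follow formally from the precomputation characterisations (as in Proposition~\ref{proposition:ParaQMAAlternativeCharacterisations}) together with the unparameterized inclusions $\WP$-style witnesses sit inside $\QMA$-style witnesses and $\BQP\subseteq\QMA$, but the explicit construction above is cleaner here.
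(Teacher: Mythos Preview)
Your proposal is correct; the paper itself offers no proof for this proposition, merely declaring the containments ``straightforward'', and your explicit simulation arguments (measure-then-simulate for $\WP\subseteq\QWP$, ignore-the-witness for $\FPQT\subseteq\QWP$) are exactly the routine justifications that this remark implicitly points to. The bookkeeping you flag about the $f(\kappa(x))\cdot p(\abs{x})$ shape surviving polynomial simulation overhead is the only non-automatic step and you handle it correctly.
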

We conjecture that these containments are strict, i.e., $\WP\neq\QWP$ and $\FPQT\neq\QWP$. However, proving a separation between \FPQT and \QWP is as difficult as proving a separation between \BQP and \QMA.
\begin{proposition}
    \label{proposition:FPQTvsQWP}
    If $\FPQT\neq\QWP$ then $\BQP\neq\QMA$.
\end{proposition}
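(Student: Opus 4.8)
The plan is to prove the contrapositive: assuming $\BQP=\QMA$, I will show that $\FPQT=\QWP$. The containment $\FPQT\subseteq\QWP$ has already been established, so under this assumption it suffices to prove $\QWP\subseteq\FPQT$, and then the contrapositive of the desired statement follows.

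The first step is to show that $\QWP\subseteq\paraQMA$. The point is that bounding the number of witness qubits to $f(\kappa(x))\cdot\log\abs{x}$ is a restriction, not a relaxation: given $(L,\kappa)\in\QWP$ with verification procedure $\{\mathcal{V}_{n,k}\}$, a \paraQMA{} verifier can simply run $\mathcal{V}_{\abs{x},\kappa(x)}$ on (a prefix of) whatever witness register it is handed, padding with $\ket{0}$'s if the witness is too short and ignoring any excess qubits. Completeness is immediate. For soundness when $x\notin L$, tracing out the ignored qubits of an arbitrary witness yields a density matrix $\rho$ on the register the verifier actually reads, and the acceptance probability is $\mathrm{Tr}(\rho M_x)\le\lambda_{\max}(M_x)$ for the positive operator $M_x$ representing acceptance; since the \QWP{} promise gives $\bra{\phi}M_x\ket{\phi}\le s$ for every pure state on that register, we get $\lambda_{\max}(M_x)\le s$, hence every witness of any size is accepted with probability at most $s$. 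The verification circuit, its FPT-uniformity, and the size bound $f(\kappa(x))\cdot p(\abs{x})$ are all inherited unchanged, since $f(\kappa(x))\cdot\log\abs{x}$ never exceeds the circuit size.

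With $\QWP\subseteq\paraQMA$ in hand, the argument concludes quickly: under the hypothesis $\BQP=\QMA$, Proposition~\ref{proposition:FPQTvsParaQMA} gives $\FPQT=\paraQMA$, so $\QWP\subseteq\paraQMA=\FPQT$; combined with $\FPQT\subseteq\QWP$ this yields $\FPQT=\QWP$, which is exactly the contrapositive of the proposition.

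I expect the only delicate point to be the justification of $\QWP\subseteq\paraQMA$ — specifically the observation that shrinking the witness register can only shrink the class, which relies on the optimal witness being attainable by a pure state together with the fact that a larger witness reduces to a mixed state on the qubits the verifier reads. Everything else is routine. If one prefers to avoid invoking $\QWP\subseteq\paraQMA$, there is a self-contained variant: the canonical \QMA{} promise problem — given an explicit description of a quantum circuit $C$, an integer $m$ in unary, and thresholds $c>s$, decide whether some $m$-qubit state is accepted by $C$ with probability at least $c$ — lies in \QMA{}, hence in \BQP{} by hypothesis; running the resulting polynomial-time bounded-error algorithm on the explicit description of $\mathcal{V}_{\abs{x},\kappa(x)}$ (produced by the FPT-uniformity machine) with $m=f(\kappa(x))\cdot\lceil\log\abs{x}\rceil$ decides $(L,\kappa)$, and since that description has size $f(\kappa(x))\cdot p(\abs{x})$ the total running time is $\mathrm{poly}\bigl(f(\kappa(x))\cdot p(\abs{x})\bigr)$, which is bounded by $f'(\kappa(x))\cdot q(\abs{x})$ for a suitable computable $f'$ and polynomial $q$, exhibiting an FPQT algorithm. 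In that variant the bookkeeping that $\mathrm{poly}(f(k)\cdot p(n))$ is again of FPT form is the only thing requiring care.
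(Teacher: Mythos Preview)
Your proof is correct and follows essentially the same route as the paper: both arguments hinge on the chain $\FPQT\subseteq\QWP\subseteq\paraQMA$ together with Proposition~\ref{proposition:FPQTvsParaQMA}. The paper phrases it directly (if $\FPQT\neq\QWP$ then $\FPQT\neq\paraQMA$, hence $\BQP\neq\QMA$), while you argue the contrapositive, but the logical content is identical; your extra justification of $\QWP\subseteq\paraQMA$ is more detailed than the paper's, which simply calls that inclusion straightforward (and records it separately in Proposition~\ref{proposition:QWPinXQPcapParaQMA}).
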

\begin{proof}
    Since $\FPQT\subseteq\QWP\subseteq\paraQMA$, we have that $\FPQT\neq\QWP$ implies $\FPQT\neq\paraQMA$. It then follows from Proposition~\ref{proposition:FPQTvsParaQMA}, that $\BQP\neq\QMA$.
\end{proof}

We also have the following containment.
\begin{proposition}
    \label{proposition:QWPinXQPcapParaQMA}
    $\QWP\subseteq\XQP\cap\paraQMA$.
\end{proposition}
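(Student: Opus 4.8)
The plan is to establish the two containments $\QWP \subseteq \paraQMA$ and $\QWP \subseteq \XQP$ separately. The first is immediate from the definitions: a $\QWP$ verification procedure is a special case of a $\paraQMA$ verification procedure in which the witness state is further restricted to $f(\kappa(x)) \cdot \log\abs{x}$ qubits, so the same circuit family $\{\mathcal{V}_{n,k}\}$ witnesses membership in $\paraQMA$ with no modification. This already appears implicitly via the chain $\FPQT \subseteq \QWP \subseteq \paraQMA$ used in Proposition~\ref{proposition:FPQTvsQWP}, so I would just state it in one line.

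The substantive part is $\QWP \subseteq \XQP$, i.e.\ showing that every slice $(L,\kappa)_l$ of a $\QWP$ problem lies in $\BQP$. Fix $l$ and consider inputs $x$ with $\kappa(x) = l$. The verification circuit $\mathcal{V}_{\abs{x},l}$ now has size bounded by $f(l) \cdot p(\abs{x})$, which is polynomial in $\abs{x}$ since $f(l)$ is a constant, and by FPT-uniformity the circuit can be generated in time $f(l) \cdot q(\abs{x})$ for some polynomial $q$, again polynomial. The witness register has at most $m \coloneqq f(l) \cdot \log\abs{x}$ qubits. The key observation is that a pure state on $m = O(\log\abs{x})$ qubits can be prepared by a circuit described by $2^{O(m)} = \abs{x}^{O(f(l))} = \mathrm{poly}(\abs{x})$ real parameters, so the optimal witness — the one maximising the acceptance probability of $\mathcal{V}_{\abs{x},l}$ — is itself a polynomial-size object. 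The plan is therefore to reduce the slice problem to finding (or approximating) the largest eigenvalue of the positive operator $M_x \coloneqq (\mathrm{Id}\otimes\bra{0\cdots 0})\mathcal{V}_{\abs{x},l}^\dagger \Pi_{\mathrm{acc}} \mathcal{V}_{\abs{x},l}(\mathrm{Id}\otimes\ket{0\cdots 0})$ acting on the $2^m = \mathrm{poly}(\abs{x})$-dimensional witness space, where $\Pi_{\mathrm{acc}}$ projects onto the accepting outcome; in the yes-case this eigenvalue is $\geq 2/3$ and in the no-case $\leq 1/3$, so even deciding it to constant precision suffices. Since $M_x$ is a $\mathrm{poly}(\abs{x}) \times \mathrm{poly}(\abs{x})$ matrix whose entries are efficiently computable (its entries are amplitudes of a polynomial-size circuit, computable to inverse-polynomial additive precision by a $\BQP$ or even deterministic procedure, and in fact the whole matrix has polynomially many entries each of which is a sum of polynomially many terms), its top eigenvalue can be computed deterministically in polynomial time; hence the slice is in $\PT \subseteq \BQP$.

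The main obstacle I anticipate is being careful about how the entries of $M_x$ are actually computed. Naively, each amplitude $\bra{e_i}M_x\ket{e_j}$ is the acceptance-amplitude of a $\mathrm{poly}(\abs{x})$-size quantum circuit, which is only known to be $\BQP$-computable to additive error, not exactly; but here the circuit acts on $O(\log\abs{x})$ witness qubits plus $\mathrm{poly}(\abs{x})$ ancillas, so it is \emph{not} classically simulable in general, and we must phrase the slice algorithm as a quantum one. A clean route is: the slice algorithm is itself allowed to be a $\BQP$ algorithm, so it can estimate, via amplitude estimation, the maximum over a polynomially-sized $\varepsilon$-net of witness states — but a net has exponentially many points in $m = O(\log\abs{x})$, namely $\abs{x}^{O(f(l))}$ points, which is only polynomial! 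So iterating over the net is fine. Thus the slice $\BQP$ algorithm enumerates a $1/100$-net of pure states on $m$ qubits (polynomially many, each preparable by a polynomial-size circuit), runs $\mathcal{V}_{\abs{x},l}$ on each with amplitude estimation to additive error, say, $1/100$, and accepts iff some net point yields estimated acceptance probability above $1/2$. Standard net/robustness arguments close the gap between $2/3$ and $1/3$. The place to be most careful is verifying that the $\varepsilon$-net size is genuinely $\mathrm{poly}(\abs{x})$ for fixed $l$ — this hinges on $2^m = \abs{x}^{f(l)}$ and $f(l)$ being constant once the slice index is fixed — and on the FPT-uniformity hypothesis giving uniform (rather than merely non-uniform) access to $\mathcal{V}_{\abs{x},l}$, which is what places the slice in $\BQP$ proper rather than $\BQP/\mathrm{poly}$. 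With both halves in hand, $\QWP \subseteq \XQP \cap \paraQMA$ follows.
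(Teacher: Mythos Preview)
Your decomposition into $\QWP \subseteq \paraQMA$ and $\QWP \subseteq \XQP$ is correct, and the first containment is fine. However, your $\varepsilon$-net argument for the second containment contains a genuine error: an $\varepsilon$-net of pure states on $m$ qubits has size $(C/\varepsilon)^{\Theta(2^m)}$, not $2^{O(m)}$, because the set of such states has real dimension $\Theta(2^m)$, not $\Theta(m)$. With $m = f(l)\log\abs{x}$ this gives a net of size $(C/\varepsilon)^{\Theta(\abs{x}^{f(l)})}$, which is exponential in $\abs{x}$, so enumerating it is not a polynomial-time procedure. You correctly observed that the state is described by $2^{O(m)} = \mathrm{poly}(\abs{x})$ real parameters, but a net over a $D$-parameter family has $(C/\varepsilon)^{D}$ points, not $D$ points.

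Your earlier idea --- estimate the entries of the $2^m \times 2^m$ operator $M_x$ quantumly and then diagonalise --- is in fact the route you should have kept. There are only $\abs{x}^{O(f(l))}$ entries, each estimable to inverse-polynomial additive error by a Hadamard-test variant applied to $\mathcal{V}_{\abs{x},l}$, and the top eigenvalue of the resulting approximate matrix is then computable in classical polynomial time; this places the slice in $\BQP$ (not $\PT$, as you rightly retracted).

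The paper takes a cleaner and more direct path, pointing to the Marriott--Watrous argument of Section~\ref{section:QuantumMerlinArthurProofs}. After in-place gap amplification the verifier has completeness $1-\tfrac{1}{3}2^{-m}$ and soundness $\tfrac{1}{3}2^{-m}$; running it on the maximally mixed $m$-qubit witness yields acceptance probability $2^{-m}\Tr(M_x)$, which is at least $\tfrac{2}{3}2^{-m}$ in the yes case and at most $\tfrac{1}{3}2^{-m}$ in the no case. Since $2^{-m}=\abs{x}^{-f(l)}$ is inverse polynomial on a fixed slice, ordinary repetition amplifies this to a constant gap in time $\abs{x}^{O(f(\kappa(x)))}$, giving the $\XQP$ membership directly without any net or entrywise estimation.
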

\begin{proof}
    It is straightforward to prove that $\QWP\subseteq\paraQMA$. To prove that $\QWP\subseteq\XQP$, observe that any problem in $\QWP$ can be solved in quantum time $O\left(2^{f(\kappa(x))\cdot\log\abs{x}}\right)=\abs{x}^{O(f(\kappa(x)))}$, see, for example, Section~\ref{section:QuantumMerlinArthurProofs}.
\end{proof}

We shall now develop the theory of \QWP-completeness. Firstly, we show that \QWP is closed under FPQT reductions.
\begin{proposition}
    \QWP is closed under FPQT reductions.
\end{proposition}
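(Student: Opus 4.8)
The plan is to mirror the proof of Proposition~\ref{proposition:FPQTClosedUnderFPQTReductions}, composing the FPQT reduction with the \QWP verification procedure, the one genuinely new ingredient being that the logarithmic bound on the witness size must be shown to survive the composition. Suppose $(L',\kappa')$ is in \QWP, witnessed by a quantum verification procedure $\{\mathcal{V}'_{n,k}\}$ with associated function $f'$ and polynomial $p'$, and let $R$ be an FPQT reduction from $(L,\kappa)$ to $(L',\kappa')$, computable by an FPT-uniform family of quantum circuits of size $g(\kappa(x))\cdot q(\abs{x})$ and satisfying $\kappa'(R(x))\leq h(\kappa(x))$ for computable $g,h$ and a polynomial $q$. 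I would define the candidate \QWP verifier for $(L,\kappa)$ as follows: on input $x$ with witness state $\ket{\psi}$, compute $x'\coloneqq R(x)$, compute $k'\coloneqq\kappa'(x')$ in polynomial time, and run $\mathcal{V}'_{\abs{x'},k'}$ on $(x',\ket{\psi})$, using the first $f'(k')\cdot\log\abs{x'}$ qubits of $\ket{\psi}$ as the witness. Because $R$ is FPQT-computable and $\{\mathcal{V}'_{n,k}\}$ is FPT-uniform, this combined family is again FPT-uniform, and its circuit size is at most $g(\kappa(x))\cdot q(\abs{x})$ for the first stage plus $f'(h(\kappa(x)))\cdot p'(g(\kappa(x))\cdot q(\abs{x}))$ for the second; since $p'$ is a polynomial this is of the form $\tilde f(\kappa(x))\cdot\abs{x}^{O(1)}$, giving condition~1 of Definition~\ref{definition:QWP}.

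The crucial step is the witness-size bound. Since the output length of an FPQT reduction is bounded by its circuit size, $\abs{R(x)}\leq g(\kappa(x))\cdot q(\abs{x})$, hence $\log\abs{R(x)}\leq\log g(\kappa(x))+O(\log\abs{x})\leq c(\kappa(x))\cdot\log\abs{x}$ for some computable $c$ (the finitely many inputs with $\abs{x}\leq 1$ being handled by direct inspection, as \QWP problems are decidable). Combined with $\kappa'(R(x))\leq h(\kappa(x))$ this gives
\begin{equation}
    f'(\kappa'(R(x)))\cdot\log\abs{R(x)} \leq \Bigl(\max_{j\leq h(\kappa(x))}f'(j)\Bigr)\cdot c(\kappa(x))\cdot\log\abs{x} =: f^*(\kappa(x))\cdot\log\abs{x}, \notag
\end{equation}
with $f^*$ computable, so every witness admissible for $x'$ under $\{\mathcal{V}'_{n,k}\}$ fits within the $f^*(\kappa(x))\cdot\log\abs{x}$-qubit budget demanded of a \QWP witness for $x$. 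Completeness and soundness then transfer through the equivalence $x\in L\iff R(x)\in L'$ by the same error-gap and amplification argument as in Proposition~\ref{proposition:FPQTClosedUnderFPQTReductions} (a constant number of repetitions of $R$ and of $\mathcal{V}'$ suffices, inflating $f^*$ only by a constant), which simultaneously absorbs the bounded failure probability of $R$.

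The main obstacle --- and the only place this argument differs in substance from the \FPQT and \paraQMA cases --- is controlling this logarithmic witness bound: one must verify that routing the witness through the reduction inflates $\log\abs{x}$ by at most a parameter-dependent multiplicative factor, which rests on the output length of an FPQT reduction being polynomially bounded in $\abs{x}$ for each fixed value of the parameter. Everything else is routine circuit composition.
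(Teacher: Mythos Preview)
Your proposal is correct and matches the paper's approach: the paper simply states that the proof ``follows similarly to that of Proposition~\ref{proposition:FPQTClosedUnderFPQTReductions}'', and what you have written is precisely that argument worked out in full, with the witness-size bound correctly identified as the only new ingredient and handled via $\log\abs{R(x)}\leq c(\kappa(x))\cdot\log\abs{x}$.
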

\begin{proof}
    The proof follows similarly to that of Proposition~\ref{proposition:FPQTClosedUnderFPQTReductions}.
\end{proof}

We now introduce the \textsc{Weight-$k$ Quantum Circuit Satisfiability} problem and show that this problem is \QWP-complete. This requires the notion of the \emph{weight of a quantum state}.
\begin{definition}[Weight of a quantum state]
    A quantum state $\ket{\psi}=\sum_{x\in\{0,1\}^n}\alpha_x\ket{x}$ on $n$ qubits is said to have \emph{weight} $k$ if $\alpha_x=0$ for all $x$ not of Hamming weight $k$.
\end{definition}

\begin{center}
\begin{tabularx}{\linewidth}{l X c}
    \multicolumn{2}{l}{\textsc{Weight-$k$ Quantum Circuit Satisfiability}:} \\
    \textit{Instance:} & A quantum circuit $\mathcal{C}$ on $n$ witness qubits and $\text{poly}(n)$ ancilla qubits. Two positive numbers $a,b\in(0,1)$, such that $b-a>\frac{1}{\text{poly}(n)}$. \\
    \textit{Parameter:} &  A natural number $k$. \\
    \textit{Problem:} & Decide whether there exists an $n$-qubit weight-$k$ quantum state $\ket{\psi}$, such that $\textbf{Pr}[\text{$\mathcal{C}(\ket{\psi})$ accepts}] \geq b$. Otherwise, if for every $n$-qubit weight-$k$ quantum state $\ket{\psi}$, $\textbf{Pr}[\text{$\mathcal{C}(\ket{\psi})$ accepts}] \leq a$. Given the promise that one of these is the case.
\end{tabularx}
\end{center}

We now establish our completeness result.
\begin{proposition}
    \label{proposition:WeightKQuantumCircuitSatisfiabilityQWPComplete}
    \textsc{Weight-$k$ Quantum Circuit Satisfiability} is \textnormal{\QWP-complete} under FPQT reductions.
\end{proposition}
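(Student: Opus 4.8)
The plan is to prove the two directions separately: that \textsc{Weight-$k$ Quantum Circuit Satisfiability} lies in \QWP, and that it is \QWP-hard under FPQT reductions.

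\emph{Membership.} Given an instance $(\mathcal{C},a,b,k)$ with $n$ witness qubits, the key point is that a weight-$k$ state on $n$ qubits lives in the $\binom{n}{k}$-dimensional weight-$k$ subspace, so it can be named using only $m\le 2k\log\abs{x}$ qubits. I would build the \QWP verifier to receive an $m$-qubit witness $\ket{\phi}$, apply the polynomial-size reversible unranking map of the combinatorial number system to decompress the first $\binom{n}{k}$ computational-basis states onto the weight-$k$ basis strings of $\{0,1\}^n$, computing into an ancilla a flag that records whether $\ket{\phi}$'s index was in range, and then, controlled on the flag, run $\mathcal{C}$ on the decompressed register, accepting iff the flag holds and $\mathcal{C}$ accepts. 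Since any out-of-range component triggers flag $0$ and contributes nothing to the acceptance probability, this verifier has completeness $b$ and soundness $a$ on a witness of at most $2k\log\abs{x}$ qubits, i.e.\ the problem is in \QWP$(b,a)$. The one subtlety is upgrading the $1/\text{poly}$ promise gap $(b,a)$ to $(2/3,1/3)$: ordinary parallel repetition is unavailable because it multiplies the witness length by $\text{poly}(n)$, destroying the $f(k)\log\abs{x}$ bound, so I would instead invoke witness-preserving (in-place, Marriott--Watrous-style) gap amplification, which leaves the witness register untouched and inflates the circuit size only by a $\text{poly}(\abs{x})$ factor, preserving FPT-uniformity and the size bound. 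Hence $\QWP(b,a)\subseteq\QWP$.

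\emph{Hardness.} Let $(L,\kappa)\in\QWP$ with verifier $\{\mathcal{V}_{n,k}\}$ of FPT-uniform size $f(\kappa(x))\cdot p(\abs{x})$ and witness length $m\le f(\kappa(x))\cdot\log\abs{x}$. The reduction $R$ maps $x$ to the \textsc{Weight-$k$ Quantum Circuit Satisfiability} instance $(\mathcal{C}_x,1/3,2/3)$ with parameter $K\coloneqq\lceil m/\lceil\log\abs{x}\rceil\rceil\le f(\kappa(x))+1$, built as follows: split the $m$ witness qubits of $\mathcal{V}$ into $K$ blocks of $\lceil\log\abs{x}\rceil$ qubits and re-encode each block in ``one-hot'' form as a block of $2^{\lceil\log\abs{x}\rceil}\le 2\abs{x}$ qubits containing exactly one $1$. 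The new witness register then has $N\le 2(f(\kappa(x))+1)\,\abs{x}$ qubits, and the honest encodings of $m$-qubit states are precisely the weight-$K$ states all of whose blocks are one-hot; this ``valid-form'' subspace has dimension $2^m$ and is naturally isomorphic to $\mathcal{V}$'s witness space. The circuit $\mathcal{C}_x$ computes into an ancilla the flag ``every block is one-hot'', and, controlled on the flag, reversibly decodes the one-hot blocks back into $m$ qubits and runs $\mathcal{V}_{\abs{x},\kappa(x)}(x,\cdot)$ on them, accepting iff the flag holds and $\mathcal{V}$ accepts. I would then verify the three conditions of Definition~\ref{definition:FPQTReduction}: for correctness, if $x\in L$ the image under the one-hot encoding of a good $m$-qubit witness of $\mathcal{V}$ is a weight-$K$ state accepting with probability $\ge 2/3$, while if $x\notin L$ any weight-$K$ state $\ket{\psi}$ decomposes orthogonally as $\ket{\psi_{\mathrm{valid}}}+\ket{\psi_{\mathrm{rest}}}$ where $\ket{\psi_{\mathrm{rest}}}$ triggers flag $0$, so the acceptance probability equals $\norm{\psi_{\mathrm{valid}}}^2$ times $\mathcal{V}$'s acceptance probability on the corresponding normalised $m$-qubit state, hence at most $1/3$, and $2/3-1/3>1/\text{poly}(N)$; the map $R$ is computable in FPT time (hence by an FPQT algorithm), since producing $\mathcal{C}_x$ amounts to writing down the FPT-uniform circuit for $\mathcal{V}_{\abs{x},\kappa(x)}$ together with the polynomial-size one-hot decode/flag gadget; and the new parameter $K\le f(\kappa(x))+1$ is bounded by a computable function of $\kappa(x)$ alone. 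Combined with membership, this gives \QWP-completeness under FPQT reductions.

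The step I expect to be the main obstacle is not either reduction on its own but making the standard amplification and ``superposition witness'' arguments coexist with the weight and witness-size constraints: on the membership side the $1/\text{poly}$ gap must be amplified without ever copying the witness, so in-place amplification rather than repetition is essential; and on the hardness side the Hamming weight of the encoding must depend only on $\kappa(x)$, which forces the block-wise one-hot encoding of $f(k)\log\abs{x}$ bits into $f(k)$ blocks, mirroring the classical \WP argument, while one must confirm via the coherent validity flag that an adversarial superposition of valid and invalid forms cannot outperform an honest witness.
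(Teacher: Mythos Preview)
Your proposal is correct. On the membership side it is essentially the paper's argument: both compress the weight-$k$ subspace via an enumeration/unranking of $S_{n,k}$ to an $O(k\log n)$-qubit witness and then invoke Marriott--Watrous in-place amplification; you are more explicit than the paper about the out-of-range flag and about why naive parallel repetition is forbidden here.

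On the hardness side your route differs from the paper's. The paper sets the output parameter to $k_x=\kappa(x)$ and, rather tersely, has $\mathcal{C}_x$ check the weight of its input and then feed the state to $\mathcal{V}_{\abs{x},\kappa(x)}$, implicitly relying on the inverse of the membership encoding. You instead use the block-wise one-hot encoding familiar from the classical \WP hardness proof, with output parameter $K\approx f(\kappa(x))$ and $N=O(f(\kappa(x))\cdot\abs{x})$ witness qubits. What your approach buys is transparency: the FPT bound on the output instance size and the fact that the new parameter depends only on $\kappa(x)$ are immediate from the block structure, and your coherent-flag analysis explicitly disposes of adversarial superpositions of valid and invalid block patterns. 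The paper's version is more compact but leaves the choice of $n$, the compression step, and the soundness against such superpositions implicit.
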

\begin{proof}
    Firstly, we show that \textsc{Weight-$k$ Quantum Circuit Satisfiability} is in \QWP. Let $\mathcal{C}$ be a quantum circuit on $n$ qubits, $k$ a natural number, and $f:\mathbb{Z}^+\to\mathbb{Z}^+$ a computable function. Further let $S_{n,k}$ denote the set of all $n$-bit strings with Hamming weight $k$ and let $\varepsilon$ be a binary enumeration of the elements of $S_{n,k}$. An $n$-qubit weight-$k$ quantum state $\ket{\psi}=\sum_{x \in S_{n,k}}\alpha_x\ket{x}$ can be described using $f(k)\cdot\log(n)$ qubits by the quantum state $\ket{\psi_\varepsilon}=\sum_{x \in S_{n,k}}\alpha_x\ket{\varepsilon(x)}$. Let $\mathcal{M}_{n,k}$ be a verification procedure for deciding whether the a weight of an $n$-qubit quantum state is $k$. The verification procedure $\mathcal{V}_{n,k}$ constructs the state $\ket{\psi}$ from $\ket{\psi_\varepsilon}$ and accepts if and only if $\mathcal{C}(\mathcal{M}_{n,k}\ket{\psi})$ accepts. Applying the gap amplification scheme of Marriott and Watrous~\cite{marriott2005quantum} to this procedure completes the claim.
    
    We now prove that \textsc{Weight-$k$ Quantum Circuit Satisfiability} is \QWP-hard. Let $(L,\kappa)$ be a problem in \QWP with verification procedure $\{\mathcal{V}_{n,k}\}_{n,k\in\mathbb{Z}^+}$. Further let $f:\mathbb{Z}^+\to\mathbb{Z}^+$ be a computable function and define $k_x \coloneqq \kappa(x)$. For input $x\in\Sigma^*$, we shall construct a quantum circuit $\mathcal{C}_x$ that is satisfiable by a weight-$k_x$ quantum state if and only if $\mathcal{V}_{\abs{x},\kappa(x)}(x)$ is satisfiable. The circuit $\mathcal{C}_x$ takes as input $n$ qubits and firstly decides whether the input state has weight $k_x$ using the verification procedure $\mathcal{M}_{n,k_x}$. Finally, the circuit inputs the quantum state into the verifier $\mathcal{V}_{\abs{x},\kappa(x)}$. Therefore, $\mathcal{C}_x$ is satisfiable by a weight-$k_x$ quantum state if and only if $\mathcal{V}_{\abs{x},\kappa(x)}(x)$ is satisfiable. This completes the proof.
\end{proof}

It is natural to ask whether the \textsc{$l$-Local Hamiltonian} variant of this problem is \QWP-complete. However, as we shall see this problem is in \XP. Note that the slices of the \textsc{Weight-$k$ Quantum Circuit Satisfiability} problem are \BQP-complete and so it cannot be in \XP unless $\PT=\BQP$.

\begin{center}
\begin{tabularx}{\linewidth}{l X c}
    \multicolumn{2}{l}{\textsc{Weight-$k$ $l$-Local Hamiltonian}:} \\
    \textit{Instance:} & An $l$-local Hamiltonian $H\coloneqq\sum_iH_i$ on $n$ qubits that comprises at most a polynomial in $n$ many terms $\{H_i\}$, which each act non-trivially on at most $l$ qubits and have operator norm $\norm{H_i}$ bounded from above by a polynomial in $n$. Two positive numbers $a,b\in(0,1)$, such that $b-a>\frac{1}{\text{poly}(n)}$. \\
    \textit{Parameter:} &  A natural number $k$. \\
    \textit{Problem:} & Decide whether there exists an $n$-qubit weight-$k$ quantum state $\ket{\psi}$, such that $\matrixel{\psi}{H}{\psi} \leq a$. Otherwise, if for every $n$-qubit weight-$k$ quantum state $\ket{\psi}$, $\matrixel{\psi}{H}{\psi} \geq b$. Given the promise that one of these is the case.
\end{tabularx}
\end{center}

\begin{proposition}
    \label{proposition:WeightLocalHamiltonianinXP}
    \textsc{Weight-$k$ $l$-Local Hamiltonian} is in \XP.
\end{proposition}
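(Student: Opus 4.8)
The plan is to establish membership in \XP directly from the definition: I would show that, for every fixed value of the parameter, the corresponding slice of \textsc{Weight-$k$ $l$-Local Hamiltonian} lies in \PT. The key structural observation is that, once $k$ is fixed, every weight-$k$ quantum state on $n$ qubits lies in the subspace $W_k\subseteq(\mathbb{C}^2)^{\otimes n}$ spanned by the computational basis vectors $\ket{x}$ with Hamming weight $\abs{x}=k$, and that $\dim W_k=\binom{n}{k}=O(n^k)$ is polynomial in $n$. Hence the optimisation over an exponentially large family of quantum states collapses, slice by slice, to an eigenvalue problem of polynomial dimension.

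Concretely, for a normalised $\ket{\psi}\in W_k$ one has $\matrixel{\psi}{H}{\psi}=\matrixel{\psi}{\Pi_k H\Pi_k}{\psi}$, where $\Pi_k$ is the orthogonal projector onto $W_k$, so
\begin{equation*}
    \min_{\ket{\psi}\in W_k,\ \norm{\ket{\psi}}_2=1}\matrixel{\psi}{H}{\psi}=\lambda_{\min}(M),
\end{equation*}
where $M$ is the $\binom{n}{k}\times\binom{n}{k}$ Hermitian matrix representing $\Pi_k H\Pi_k$ in the basis $\{\ket{x}:\abs{x}=k\}$. The first step of the algorithm is to assemble $M$: its entries are $M_{x,y}=\matrixel{x}{H}{y}=\sum_i\matrixel{x}{H_i}{y}$, and since each $H_i$ acts non-trivially on at most $l$ qubits and is given explicitly in the input, $\matrixel{x}{H_i}{y}$ is computable in polynomial time (it vanishes unless $x$ and $y$ agree outside the support of $H_i$, and otherwise equals the corresponding entry of $H_i$). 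With only polynomially many terms and $\binom{n}{k}^2$ entries, $M$ is built in polynomial time.

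The second step is to estimate $\lambda_{\min}(M)$ and compare it with the thresholds. Since the promise guarantees $b-a>1/\text{poly}(n)$, it suffices to approximate $\lambda_{\min}(M)$ to additive error strictly below $(b-a)/2$, which is a standard classical numerical-linear-algebra task for a Hermitian matrix whose entries are specified to finite (hence rational) precision, solvable in time polynomial in $\binom{n}{k}$ and the input bit-length. Given an estimate $\tilde\lambda$ with $\abs{\tilde\lambda-\lambda_{\min}(M)}<(b-a)/2$, the algorithm answers ``yes'' if $\tilde\lambda<(a+b)/2$ and ``no'' otherwise; under the promise this is correct, because a ``yes'' instance has $\lambda_{\min}(M)\leq a$ while a ``no'' instance has $\lambda_{\min}(M)\geq b$. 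This gives a deterministic polynomial-time algorithm for each slice, placing the problem in \XP.

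I expect the step requiring the most care to be the eigenvalue estimation: one should argue explicitly that an \emph{inverse-polynomial}-accurate estimate of $\lambda_{\min}(M)$ is obtainable in deterministic polynomial time and that truncating the finitely specified data (the entries of the $H_i$ and the numbers $a,b$) does not erode the promise gap. Both follow from standard results on approximating eigenvalues of rational Hermitian matrices, the essential point being that only inverse-polynomial precision --- matching the promise gap --- is needed, so no super-polynomial blow-up occurs. It is worth remarking in the proof that this argument yields genuinely \emph{classical} polynomial-time slices, in contrast with \textsc{Weight-$k$ Quantum Circuit Satisfiability}, whose slices are \BQP-complete; this is exactly the separation emphasised in the introduction.
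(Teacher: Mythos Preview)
Your proposal is correct and follows essentially the same approach as the paper: restrict $H$ to the $\binom{n}{k}$-dimensional weight-$k$ subspace, build the corresponding $n^{O(k)}$-sized Hermitian matrix whose entries $\matrixel{x}{H}{y}$ are computable in polynomial time, and decide the promise by computing its smallest eigenvalue. Your write-up is in fact a bit more careful than the paper's about the precision needed for the eigenvalue computation relative to the promise gap.
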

\begin{proof}
    Let $S_{n,k}$ denote the set of all $n$-bit strings with Hamming weight $k$ and let $\varepsilon$ be an enumeration of the elements of $S_{n,k}$. We define the matrix $H_\varepsilon$ such that $\matrixel{\varepsilon(x)}{H_\varepsilon}{\varepsilon(y)}\coloneqq\matrixel{x}{H}{y}$ for all $x,y \in S_{n,k}$, and for an $n$-qubit weight-$k$ quantum state $\ket{\psi}\coloneqq\sum_{x \in S_{n,k}}\alpha_x\ket{x}$, we define the quantum state $\ket{\psi_\varepsilon}\coloneqq\sum_{x \in S_{n,k}}\alpha_x\ket{\varepsilon(x)}$. Then, for any $n$-qubit weight-$k$ quantum state $\ket{\psi}$, we have $\matrixel{\psi}{H}{\psi}=\matrixel{\psi_\varepsilon}{H_\varepsilon}{\psi_\varepsilon}$. Therefore, it is sufficient to compute the smallest eigenvalue $\lambda_{\text{min}}(H_\varepsilon)$ of $H_\varepsilon$. However, since the dimension of $H_\varepsilon$ is $n^{O(k)}$ and each of its entries can be computed in time $n^{O(1)}$, we can compute $\lambda_{\text{min}}(H_\varepsilon)$ in time $n^{O(k)}$. Hence, \textsc{Weight-$k$ $l$-Local Hamiltonian} is in \XP. This completes the proof.
\end{proof}

We prove the following.
\begin{proposition}
    \label{proposition:QWPinXPvsPequalsBQP}
    If $\QWP\subseteq\XP$ then $\PT=\BQP$.
\end{proposition}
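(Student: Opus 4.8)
The plan is to derive this from the already-established Proposition~\ref{proposition:FPQTinXPvsPinBQP}, which obtains the same conclusion $\PT=\BQP$ from the hypothesis $\FPQT\subseteq\XP$. First I would invoke the containment $\FPQT\subseteq\QWP$ established above. Combining it with the hypothesis $\QWP\subseteq\XP$ gives, by transitivity of set inclusion, $\FPQT\subseteq\XP$. Proposition~\ref{proposition:FPQTinXPvsPinBQP} then applies directly and yields $\PT=\BQP$.

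If a self-contained argument is preferred, I would spell it out as follows. Take an arbitrary \BQP-complete problem $L$ and equip it with the trivial parameterization to form $(L,\kappa_\text{trivial})$. Since $L\in\BQP$, running the \BQP algorithm while ignoring the parameter is an FPQT algorithm, so $(L,\kappa_\text{trivial})\in\FPQT\subseteq\QWP\subseteq\XP$ under the hypothesis. Membership in \XP means every slice $(L,\kappa_\text{trivial})_l$ lies in \PT; but $\kappa_\text{trivial}(x)=1$ for all $x$, so the only nonempty slice is $(L,\kappa_\text{trivial})_1=L$ itself, whence $L\in\PT$. As $L$ ranges over all \BQP-complete problems and \PT is closed under polynomial-time reductions, this forces $\BQP\subseteq\PT$; together with the trivial inclusion $\PT\subseteq\BQP$ we obtain $\PT=\BQP$.

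There is essentially no obstacle here: the statement is a routine consequence of the transitivity of the inclusions $\FPQT\subseteq\QWP\subseteq\XP$ together with the observation that the trivial parameterization collapses the slice structure. The only point worth noting is that the needed containment $\FPQT\subseteq\QWP$ is already in hand, so nothing further is required.
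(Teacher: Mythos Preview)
Your proposal is correct and matches the paper's proof essentially verbatim: the paper also argues that $\QWP\subseteq\XP$ implies $\FPQT\subseteq\XP$ (via $\FPQT\subseteq\QWP$) and then invokes Proposition~\ref{proposition:FPQTinXPvsPinBQP}. Your optional self-contained paragraph simply unpacks that proposition, which is fine but unnecessary here.
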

\begin{proof}
    If $\QWP\subseteq\XP$ then $\FPQT\subseteq\XP$, and so $\PT=\BQP$ by Proposition~\ref{proposition:FPQTinXPvsPinBQP}.
\end{proof}

We shall now define the complexity class $\QW[t]$ --- the quantum version of the complexity class $\W[t]$. This requires the notion of \emph{quantum circuit weft}.
\begin{definition}[Quantum circuit weft]
    Given a quantum circuit $\mathcal{C}$ comprising generalised Toffoli gates, one and two-qubit gates, and unbounded classical fan-out. The \emph{weft} of $\mathcal{C}$ is the maximum number of Toffoli gates that act on any path from input qubit to output qubit.
\end{definition}

\begin{center}
\begin{tabularx}{\linewidth}{l X c}
    \multicolumn{2}{l}{\textsc{Weight-$k$ Weft-$t$ Depth-$d$ Quantum Circuit Satisfiability}:} \\
    \textit{Instance:} & A weft-$t$ depth-$d$ quantum circuit $\mathcal{C}$ on $n$ witness qubits and $\text{poly}(n)$ ancilla qubits. Three positive numbers $a,b,c\in(0,1)$, such that $b-a>c$. \\
    \textit{Parameter:} &  A natural number $k$. \\
    \textit{Problem:} & Decide whether there exists an $n$-qubit weight-$k$ quantum state $\ket{\psi}$, such that $\textbf{Pr}[\text{$\mathcal{C}(\ket{\psi})$ accepts}] \geq b$. Otherwise, if for every $n$-qubit weight-$k$ quantum state $\ket{\psi}$, $\textbf{Pr}[\text{$\mathcal{C}(\ket{\psi})$ accepts}] \leq a$. Given the promise that one of these is the case.
\end{tabularx}
\end{center}

\begin{definition}[{$\QW[t]$}]
    \label{definition:QWT}
    For $t\in\mathbb{N}$, the class $\QW[t]$ consists of all parameterized problems that are FPQT reducible to \textsc{Weight-$k$ Weft-$t$ Depth-$d$ Quantum Circuit Satisfiability} for some $d \geq t$.
\end{definition}

The following containments are straightforward.
\begin{proposition}
    For any $t\in\mathbb{N}$, $\W[t]\subseteq\QW[t]$, $\QW[t]\subseteq\QW[t+1]$, and $\QW[t]\subseteq\QWP$.
\end{proposition}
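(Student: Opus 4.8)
The plan is to handle the three inclusions separately, in each case exhibiting the required FPQT reduction.

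For $\W[t]\subseteq\QW[t]$, suppose $(L,\kappa)$ is FPT reducible to \textsc{Weight-$k$ Weft-$t$ Depth-$d$ Circuit Satisfiability} via a reduction $R$. I would post-compose $R$ with a deterministic polynomial-time, answer- and parameter-preserving map from Boolean instances to quantum ones; the composite is then FPT-computable, hence in particular an FPQT reduction, and it targets \textsc{Weight-$k$ Weft-$t$ Depth-$d'$ Quantum Circuit Satisfiability}, so $(L,\kappa)\in\QW[t]$ follows by definition. For the map: given a weft-$t$ depth-$d$ Boolean circuit $\mathcal{C}$ on $n$ inputs, build a reversible quantum circuit $\mathcal{C}_Q$ that computes $\mathcal{C}(y)$ into a fresh output qubit, replacing each large (generalised Toffoli) gate by one generalised quantum Toffoli of the same fan-in (using open controls where De Morgan introduces negations) and each bounded-fan-in gate by a constant-size circuit of one- and two-qubit gates — possible because, for instance, a doubly controlled NOT gate decomposes exactly into controlled-NOT and single-qubit gates, so no spurious Toffoli gates are introduced. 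Then $\mathcal{C}_Q$ has the same weft $t$ as $\mathcal{C}$ and depth $d'=O(d)\ge t$, and on any basis state $\ket{y}$ it accepts with probability $\mathcal{C}(y)\in\{0,1\}$. Setting $a=1/3$, $b=2/3$, $c=1/3$: in a yes-instance some weight-$k$ string $y$ has $\mathcal{C}(y)=1$, so $\ket{y}$ is accepted with probability $1\ge b$; in a no-instance every weight-$k$ string $y$ has $\mathcal{C}(y)=0$, so after the reversible part of $\mathcal{C}_Q$ every weight-$k$ basis state lies in the subspace where the output qubit equals $0$, and hence by linearity so does every weight-$k$ superposition, giving acceptance probability $0\le a$. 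Thus the promise holds and the answer is preserved.

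For $\QW[t]\subseteq\QW[t+1]$, a weft-$t$ circuit has weft at most $t+1$, and its depth can be brought up to $\max(d,t+1)\ge t+1$ by appending a few single-qubit identity gates along an input-to-output wire, changing neither the weft nor the acceptance behaviour; post-composing the given FPQT reduction with this padding yields an FPQT reduction to \textsc{Weight-$k$ Weft-$(t+1)$ Depth-$\max(d,t+1)$ Quantum Circuit Satisfiability}. For $\QW[t]\subseteq\QWP$, observe that \textsc{Weight-$k$ Weft-$t$ Depth-$d$ Quantum Circuit Satisfiability} is FPQT reducible to \textsc{Weight-$k$ Quantum Circuit Satisfiability} simply by discarding the weft and depth restrictions on the circuit (applying the usual gap amplification if needed); the latter problem is \QWP-complete by Proposition~\ref{proposition:WeightKQuantumCircuitSatisfiabilityQWPComplete}, and since \QWP is closed under FPQT reductions, every problem FPQT reducible to \textsc{Weight-$k$ Weft-$t$ Depth-$d$ Quantum Circuit Satisfiability} — i.e.\ every problem in $\QW[t]$ — lies in \QWP.

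The composition of FPQT reductions and the error-gap bookkeeping are routine. The one step that requires genuine care is the first inclusion: one must verify that translating a bounded-weft Boolean circuit into a quantum circuit does not inflate the weft, which is exactly why the bounded-fan-in gates must be realised without creating any generalised Toffoli gates, while the large gates map one-to-one onto quantum generalised Toffolis of the same fan-in.
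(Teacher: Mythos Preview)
The paper does not actually give a proof of this proposition: it simply prefaces it with ``The following containments are straightforward'' and states the result without any proof environment. Your argument therefore supplies considerably more detail than the paper does, and it is correct. In particular, you correctly identify the only point that deserves real care---that the Boolean-to-quantum translation for $\W[t]\subseteq\QW[t]$ must not inflate the weft---and you handle it properly by mapping each large classical gate to a single generalised quantum Toffoli and realising each bounded-fan-in classical gate via the standard decomposition of a three-qubit Toffoli into one- and two-qubit gates; the linearity argument for soundness against weight-$k$ superpositions is also the right observation. The remaining two inclusions are indeed routine, and your use of Proposition~\ref{proposition:WeightKQuantumCircuitSatisfiabilityQWPComplete} together with closure of \QWP under FPQT reductions is a clean way to conclude $\QW[t]\subseteq\QWP$.
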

The complexity classes $\QW[t]$, for $t\geq1$, define the \QW hierarchy, while note that $\QW[0]=\FPQT$. We prove the following.
\begin{proposition}
    For any $t\in\mathbb{N}$, if $\W[t]=\QW[t]$ then $\PT=\BQP$.
\end{proposition}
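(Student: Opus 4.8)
The plan is to reduce the statement to Proposition~\ref{proposition:FPQTinXPvsPinBQP}, which asserts that $\FPQT \subseteq \XP$ implies $\PT = \BQP$. The underlying idea is a mismatch of ``sizes'': on the one hand $\QW[t]$ always contains $\FPQT$ (which in particular contains every \BQP{} language under the trivial parameterization), while on the other hand $\W[t]$ is, classically, confined to \XP. Collapsing the two therefore squeezes \FPQT{} inside \XP, and Proposition~\ref{proposition:FPQTinXPvsPinBQP} does the rest.

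Concretely, I would first record the inclusion $\FPQT = \QW[0] \subseteq \QW[1] \subseteq \cdots \subseteq \QW[t]$, obtained by iterating the containment $\QW[s] \subseteq \QW[s+1]$ established above (for $t=0$ this is trivial, and one may alternatively appeal directly to $\W[0]=\FPT$ together with Proposition~\ref{proposition:FPTvsFPQT}). Hence the hypothesis $\W[t] = \QW[t]$ immediately gives $\FPQT \subseteq \W[t]$. Next I would invoke the standard classical fact that $\W[t] \subseteq \WP \subseteq \XP$~\cite{downey2013fundamentals}; for self-containedness one argues directly that the $l$-th slice of \textsc{Weight-$k$ Weft-$t$ Depth-$d$ Circuit Satisfiability} is decidable in polynomial time by evaluating the circuit on each of the $\binom{n}{l} = n^{O(l)}$ Hamming-weight-$l$ inputs, so this problem lies in \XP, and since \XP{} is closed under FPT reductions (by essentially the argument of Proposition~\ref{proposition:FPQTClosedUnderFPQTReductions}, using that an FPT reduction keeps the parameter bounded by a function of the original parameter and is polynomial-time on each slice), every problem in $\W[t]$ lies in \XP{} as well. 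Combining the two steps yields $\FPQT \subseteq \W[t] \subseteq \XP$, whence $\PT = \BQP$ by Proposition~\ref{proposition:FPQTinXPvsPinBQP}.

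The only step that requires any care is the justification of $\W[t] \subseteq \XP$ --- in particular the closure of \XP{} under FPT reductions --- but this is entirely classical and routine, and every other ingredient is already available in the excerpt. I do not anticipate a genuine obstacle here.
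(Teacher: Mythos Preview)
Your proposal is correct and follows essentially the same route as the paper: the paper's one-line argument is that $\W[t]=\QW[t]$ forces $\FPQT\subseteq\XP$, whence $\PT=\BQP$ by Proposition~\ref{proposition:FPQTinXPvsPinBQP}, and you have simply unpacked the chain $\FPQT=\QW[0]\subseteq\QW[t]=\W[t]\subseteq\XP$ that underlies that sentence. Your additional justification of $\W[t]\subseteq\XP$ via slice-wise brute force and closure of \XP{} under FPT reductions is correct but strictly more detail than the paper records.
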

\begin{proof}
    If $\W[t]=\QW[t]$ then $\FPQT\subseteq\XP$, and so $\PT=\BQP$ by Proposition~\ref{proposition:FPQTinXPvsPinBQP}.
\end{proof}
\begin{proposition}
    For any $t\in\mathbb{N}$, if $\QW[t]\subseteq\XP$ then $\PT=\BQP$.
\end{proposition}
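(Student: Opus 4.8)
The plan is to reduce this to Proposition~\ref{proposition:QWPinXPvsPequalsBQP}, exactly as the previous proposition reduced to Proposition~\ref{proposition:FPQTinXPvsPinBQP}. The key observation is the chain of containments already established: for any $t\in\mathbb{N}$ we have $\FPQT=\QW[0]\subseteq\QW[t]\subseteq\QWP$, where the last containment is the straightforward one noted just after Definition~\ref{definition:QWT}. So if $\QW[t]\subseteq\XP$ for some particular $t$, it does not immediately give $\QWP\subseteq\XP$, but it does give $\FPQT\subseteq\XP$ via $\FPQT\subseteq\QW[t]\subseteq\XP$.

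First I would invoke $\FPQT\subseteq\QW[t]$ (the case $\QW[0]\subseteq\QW[t]$ of the stated monotonicity, or equivalently the identity $\QW[0]=\FPQT$ together with $\QW[t']\subseteq\QW[t'+1]$). Combined with the hypothesis $\QW[t]\subseteq\XP$, this yields $\FPQT\subseteq\XP$. Then I would apply Proposition~\ref{proposition:FPQTinXPvsPinBQP} directly to conclude $\PT=\BQP$. This is a two-line argument and structurally identical to the proof of the immediately preceding proposition about $\W[t]=\QW[t]$; indeed that proof already silently uses $\FPQT\subseteq\XP$, so the same reasoning applies verbatim once we have the inclusion $\FPQT\subseteq\XP$ in hand.

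There is essentially no obstacle here: the only thing to be slightly careful about is making sure the containment $\FPQT\subseteq\QW[t]$ is legitimately available for every $t\in\mathbb{N}$ including $t=0$, which it is since $\QW[0]=\FPQT$ is stated explicitly. One could alternatively phrase the proof as: $\QW[t]\subseteq\XP$ implies in particular that every \BQP-complete problem equipped with the trivial parameterization (which lies in $\FPQT\subseteq\QW[t]$) is in \XP, hence in \PT, giving $\PT=\BQP$. Either phrasing works; I would use the shorter one referencing Proposition~\ref{proposition:FPQTinXPvsPinBQP} for consistency with the surrounding propositions.

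\begin{proof}
    If $\QW[t]\subseteq\XP$ then, since $\FPQT=\QW[0]\subseteq\QW[t]$, we have $\FPQT\subseteq\XP$, and so $\PT=\BQP$ by Proposition~\ref{proposition:FPQTinXPvsPinBQP}.
\end{proof}
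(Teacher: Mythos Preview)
Your proof is correct and essentially identical to the paper's own proof, which reads: ``If $\QW[t]\subseteq\XP$ then $\FPQT\subseteq\XP$, and so $\PT=\BQP$ by Proposition~\ref{proposition:FPQTinXPvsPinBQP}.'' The only difference is that you make the intermediate step $\FPQT=\QW[0]\subseteq\QW[t]$ explicit, which is a harmless (and arguably helpful) elaboration.
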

\begin{proof}
    If $\QW[t]\subseteq\XP$ then $\FPQT\subseteq\XP$, and so $\PT=\BQP$ by Proposition~\ref{proposition:FPQTinXPvsPinBQP}.
\end{proof}

\subsection{The Quantum Classical Weft Hierarchy}
\label{section:QuantumClassicalWeftHierarchy}

We shall now introduce the Quantum Classical Weft hierarchy (\QCW). We begin by define the complexity class \QCWP, which is the subclass of \QWP restricted to classical proofs.
\begin{definition}[\QCWP]
    \label{definition:QCWP}
    A parameterized problem $(L,\kappa)$ over the alphabet $\Sigma$ is in \QCWP{$(c,s)$} if there is a quantum verification procedure $\{\mathcal{V}_{n,k}\}_{n,k\in\mathbb{Z}^+}$ such that the following conditions are satisfied.
    \begin{enumerate}
        \item There is a computable function $f:\mathbb{Z}^+\to\mathbb{Z}^+$ and a polynomial $p\in\mathbb{N}[X]$, such that, for every $x\in\Sigma^*$, the size of an FPT-uniform quantum circuit that computes $\mathcal{V}_{\abs{x},\kappa(x)}$ on input $x$ is at most $f(\kappa(x)) \cdot p(\abs{x})$.
        \item For every $x\in\Sigma^*$,
        \begin{itemize}
            \item If $x \in L$, then there exists a bit string $y$ comprising at most $f(\kappa(x))\cdot\log\abs{x}$ bits, such that $\textbf{Pr}[\text{$\mathcal{V}_{\abs{x},\kappa(x)}(x,y)$ accepts}] \geq c$.
            \item If $x \notin L$, then for every bit string $y$ comprising at most $f(\kappa(x))\cdot\log\abs{x}$ bits, $\textbf{Pr}[\text{$\mathcal{V}_{\abs{x},\kappa(x)}(x,y))$ accepts}] \leq s$.
        \end{itemize}
    \end{enumerate}
    The class \QCWP is defined to be \QCWP{$(\frac{2}{3},\frac{1}{3})$}.
\end{definition}

The following containments are straightforward.
\begin{proposition}
    $\WP\subseteq\QCWP$, $\FPQT\subseteq\QCWP$, and $\QCWP\subseteq\QWP$.
\end{proposition}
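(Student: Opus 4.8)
The plan is to prove the three containments separately, in each case by converting a verification procedure for the smaller class into one of the form demanded by the larger class; all three follow the same template as their classical and quantum analogues, so the arguments are short.

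For $\WP\subseteq\QCWP$, I would take a \WP verifier $\{\mathcal{V}_{n,k}\}$ running in deterministic time $f(\kappa(x))\cdot p(\abs{x})$ and reading a classical witness of at most $f(\kappa(x))\cdot\log\abs{x}$ bits, and simulate it by an FPT-uniform quantum circuit whose size is polynomial in that running time. This yields a \QCWP verifier with perfect completeness and soundness and the same witness length, so the $(\frac{2}{3},\frac{1}{3})$ thresholds hold trivially. For $\FPQT\subseteq\QCWP$, I would simply regard an FPQT algorithm for $(L,\kappa)$ --- already an FPT-uniform family of quantum circuits of size $f(\kappa(x))\cdot p(\abs{x})$ deciding membership with error at most $1/3$ --- as a \QCWP verifier that ignores its (length-$0$, hence admissible) classical witness.

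The one step that needs an actual argument is $\QCWP\subseteq\QWP$, which follows the standard $\QCMA\subseteq\QMA$ reasoning. Given a \QCWP verifier $\{\mathcal{V}_{n,k}\}$ reading a classical witness of at most $m\coloneqq f(\kappa(x))\cdot\log\abs{x}$ bits, I would construct a \QWP verifier $\{\mathcal{V}'_{n,k}\}$ that takes an $m$-qubit quantum witness $\ket{\psi}$, measures it in the computational basis to obtain $y\in\{0,1\}^m$, and then runs $\mathcal{V}_{n,k}(x,y)$; this adds only $O(m)$ gates and keeps the witness length within the \QWP budget. Completeness is immediate: a good classical witness $y$ becomes the good quantum witness $\ket{y}$. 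For soundness, writing an arbitrary $m$-qubit witness as $\ket{\psi}=\sum_y\alpha_y\ket{y}$, the acceptance probability of $\mathcal{V}'_{n,k}$ equals $\sum_y\abs{\alpha_y}^2\,\textbf{Pr}[\mathcal{V}_{n,k}(x,y)\text{ accepts}]$, each term of which is at most $s$, so the total is at most $s$.

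The main (and essentially only) obstacle is that last soundness computation: one must note that placing a computational-basis measurement at the front of the verifier collapses the quantum witness into a classical mixture, so the best quantum witness can do no better than the best classical one. Everything else is bookkeeping about circuit sizes, FPT-uniformity, and witness lengths, and I would relegate it to a remark that the proofs follow the pattern of the earlier containment results in this section.
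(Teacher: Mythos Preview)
Your proposal is correct, and it fills in exactly the details one would expect: the paper itself gives no proof at all, simply declaring the containments ``straightforward.'' Your arguments for each inclusion --- simulating a deterministic \WP verifier by a quantum circuit, treating an \FPQT algorithm as a witness-ignoring \QCWP verifier, and the measure-then-verify trick with the convexity soundness bound for $\QCWP\subseteq\QWP$ --- are the canonical ones and match the pattern of the analogous containments elsewhere in the paper.
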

We conjecture that these containments are strict, i.e., $\WP\neq\QCWP$, $\FPQT\neq\QCWP$, and $\QCWP\neq\QWP$. Similar to Proposition~\ref{proposition:FPQTvsQWP}, proving a separation between \FPQT and \QCWP is as difficult as proving a separation between \BQP and \QCMA.
\begin{proposition}
    If $\FPQT\neq\QCWP$ then $\BQP\neq\QCMA$.
\end{proposition}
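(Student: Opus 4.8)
The plan is to imitate the proof of Proposition~\ref{proposition:FPQTvsQWP} almost verbatim, with the quantum witness replaced by a classical one throughout. That argument ran on the sandwich $\FPQT \subseteq \QWP \subseteq \paraQMA$ together with Proposition~\ref{proposition:FPQTvsParaQMA}; here the analogous ingredients are the containment chain $\FPQT \subseteq \QCWP \subseteq \paraQCMA$ and Proposition~\ref{proposition:FPQTvsParaQCMA}, which states that $\FPQT = \paraQCMA$ if and only if $\BQP = \QCMA$.

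I would argue by contraposition. Assume $\BQP = \QCMA$; then Proposition~\ref{proposition:FPQTvsParaQCMA} gives $\FPQT = \paraQCMA$. Combining this equality with the already-recorded containments $\FPQT \subseteq \QCWP$ and $\QCWP \subseteq \paraQCMA$ yields $\FPQT \subseteq \QCWP \subseteq \paraQCMA = \FPQT$, hence $\FPQT = \QCWP$. Taking contrapositives, $\FPQT \neq \QCWP$ implies $\BQP \neq \QCMA$, as required.

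The only link in this chain not written down explicitly earlier is $\QCWP \subseteq \paraQCMA$, and I would dispatch it by inspecting the definitions: a \QCWP verification procedure (with the standard $(\frac{2}{3},\frac{1}{3})$ gap) is an FPT-uniform quantum circuit family of size $f(\kappa(x)) \cdot p(\abs{x})$ that receives a classical witness, and its defining clauses differ from those of \paraQCMA only by the extra stipulation that the witness consist of at most $f(\kappa(x)) \cdot \log\abs{x}$ bits --- a restriction on the prover, not an added burden on the verifier --- so the very same procedure certifies membership in \paraQCMA. This is exactly the classical-witness counterpart of the remark $\QWP \subseteq \paraQMA$ used inside the proof of Proposition~\ref{proposition:QWPinXQPcapParaQMA}. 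I do not expect any genuine obstacle here: once this definitional inclusion is noted the proposition is routine, which is presumably why the paper records only that the proof follows similarly to that of Proposition~\ref{proposition:FPQTvsQWP}.
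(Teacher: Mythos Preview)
Your proposal is correct and matches the paper's proof essentially verbatim: both use the sandwich $\FPQT \subseteq \QCWP \subseteq \paraQCMA$ together with Proposition~\ref{proposition:FPQTvsParaQCMA}, the only cosmetic difference being that the paper phrases the implication directly (``$\FPQT\neq\QCWP$ implies $\FPQT\neq\paraQCMA$'') while you write out the contrapositive. Your explicit check that $\QCWP \subseteq \paraQCMA$ by comparing the defining clauses is exactly the definitional observation the paper leaves implicit.
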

\begin{proof}
    Since $\FPQT\subseteq\QCWP\subseteq\paraQCMA$, we have that $\FPQT\neq\QCWP$ implies $\FPQT\neq\paraQCMA$. It then follows from Proposition~\ref{proposition:FPQTvsParaQCMA}, that $\BQP\neq\QCMA$.
\end{proof}
We also have the following containment.
\begin{proposition}
    $\QCWP\subseteq\XQP\cap\paraQCMA$.
\end{proposition}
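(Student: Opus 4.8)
The plan is to mirror the two-part structure used for the analogous statement $\QWP\subseteq\XQP\cap\paraQMA$ (Proposition~\ref{proposition:QWPinXQPcapParaQMA}), with the simplifications afforded by classical witnesses. For the containment $\QCWP\subseteq\paraQCMA$, the argument is essentially syntactic: a $\QCWP$ verification procedure $\{\mathcal{V}_{n,k}\}$ is already an FPT-uniform family of quantum circuits of size $f(\kappa(x))\cdot p(\abs{x})$, and the only extra constraint it imposes — that the classical witness $y$ have length at most $f(\kappa(x))\cdot\log\abs{x}$ — is a restriction on the witness, not an additional requirement on the verifier. Dropping that restriction yields precisely a $\paraQCMA$ verification procedure, so the identity map witnesses $\QCWP\subseteq\paraQCMA$; the completeness and soundness bounds $(\tfrac23,\tfrac13)$ carry over unchanged.

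For the containment $\QCWP\subseteq\XQP$, I would argue that each slice $(L,\kappa)_l$ is in $\BQP$. Fix $l$, so $\kappa(x)=l$ and the witness length is bounded by $f(l)\cdot\log\abs{x}$, giving at most $\abs{x}^{f(l)}\cdot 2^{f(l)}$ possible classical witnesses. A $\BQP$ algorithm for the slice does the following: for each candidate witness $y$ of length at most $f(l)\cdot\log\abs{x}$, run the verifier $\mathcal{V}_{\abs{x},l}(x,y)$ with enough repetitions to boost its success probability, and accept if any witness is accepted. Since $f(l)$ is a constant for fixed $l$, the number of witnesses is polynomial in $\abs{x}$, and each invocation of $\mathcal{V}_{\abs{x},l}$ runs in time $\poly(\abs{x})$ (as $f(l)$ is a constant absorbed into the polynomial), so the whole procedure runs in quantum polynomial time. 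The completeness/soundness gap is preserved by a standard union-bound error analysis over the polynomially many witnesses, exactly as in the classical treatment of $\WP\subseteq\XP$. Hence every slice is in $\BQP$ and $(L,\kappa)\in\XQP$.

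The only subtlety worth care — and the closest thing to an obstacle — is the error analysis in the slice construction: iterating over a polynomial number of candidate witnesses and taking the disjunction requires first amplifying $\mathcal{V}_{\abs{x},l}$ so that the accumulated soundness error over all witnesses remains below a constant (e.g.\ below $1/3$), which costs only an $O(\log\abs{x})$ multiplicative overhead in repetitions and does not break polynomial time. On the yes side a single good witness is accepted with probability close to $1$, so the disjunction accepts; on the no side a union bound over polynomially many witnesses keeps the total acceptance probability small. This is routine, and with it in hand the two containments combine to give $\QCWP\subseteq\XQP\cap\paraQCMA$.
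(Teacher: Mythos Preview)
Your proposal is correct and follows essentially the same approach as the paper, which simply says the proof ``follows similarly to that of Proposition~\ref{proposition:QWPinXQPcapParaQMA}'': the containment in \paraQCMA{} is syntactic, and the containment in \XQP{} comes from the witness having only $f(\kappa(x))\cdot\log\abs{x}$ bits, so brute-forcing it costs $\abs{x}^{O(f(\kappa(x)))}$ quantum time, which is polynomial on each slice. The only minor methodological difference is that the paper's reference (via Section~\ref{section:QuantumMerlinArthurProofs}) points to the Marriott--Watrous maximally-mixed-state trick, which handles quantum and classical witnesses uniformly, whereas you exploit the classicality of the witness to do a direct enumeration with amplification and a union bound---a more elementary route that is perfectly adequate here.
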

\begin{proof}
    The proof follows similarly to that of Proposition~\ref{proposition:QWPinXQPcapParaQMA}.
\end{proof}

We shall now develop the theory of \QCWP-completeness. Firstly, we show that \QCWP is closed under FPQT reductions.
\begin{proposition}
    \QCWP is closed under FPQT reductions.
\end{proposition}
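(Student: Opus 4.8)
The plan is to follow the proof of Proposition~\ref{proposition:FPQTClosedUnderFPQTReductions}, composing the FPQT reduction with the \QCWP verifier for the target problem, while keeping track of the extra constraint in Definition~\ref{definition:QCWP} that the classical witness has length at most $f(\kappa(x))\cdot\log\abs{x}$. So let $(L',\kappa')$ be in \QCWP with quantum verification procedure $\{\mathcal{V}'_{n,k}\}_{n,k\in\mathbb{Z}^+}$ and completeness/soundness $(\tfrac23,\tfrac13)$, and let $R$ be an FPQT reduction from $(L,\kappa)$ to $(L',\kappa')$ with a computable function $g$ satisfying $\kappa'(R(x))\leq g(\kappa(x))$. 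First I would amplify the reduction: running the FPQT algorithm for $R$ a polynomial number of times on input $x$ and outputting the most frequent string yields an FPT-uniform family of circuits, of size $f(\kappa(x))\cdot p(\abs{x})$ for suitable computable $f$ and polynomial $p$, that outputs $R(x)$ except with probability $2^{-\Omega(\abs{x})}$. I then define a verification procedure $\mathcal{V}_{n,k}$ for $(L,\kappa)$ that on input $(x,y)$ computes the amplified output $\hat R(x)$, computes $k'\coloneqq\kappa'(\hat R(x))$ in polynomial time, and runs $\mathcal{V}'_{\abs{\hat R(x)},k'}(\hat R(x),y)$.

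It then remains to check the three defining conditions of \QCWP. The circuit-size condition is immediate, since the composition of two FPT-uniform families is FPT-uniform, $\abs{\hat R(x)}\leq f(\kappa(x))\cdot p(\abs{x})$, and $\kappa'(\hat R(x))\leq g(\kappa(x))$. For the witness-length condition, $\mathcal{V}'$ reads a classical witness of length at most $f'(\kappa'(\hat R(x)))\cdot\log\abs{\hat R(x)}\leq f'(g(\kappa(x)))\cdot\log\bigl(f(\kappa(x))\cdot p(\abs{x})\bigr)$, and since $\log\bigl(f(\kappa(x))\cdot p(\abs{x})\bigr)\leq\bigl(\log f(\kappa(x))+O(1)\bigr)\cdot\log\abs{x}$ for $\abs{x}\geq2$, this is bounded by $F(\kappa(x))\cdot\log\abs{x}$ for a suitable computable $F$, with the finitely many instances of length at most $1$ handled by hard-coding. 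For completeness and soundness: if $x\in L$ then $\hat R(x)\in L'$ except with probability $2^{-\Omega(\abs{x})}$, in which case some classical witness $y$ of the allowed length is accepted by $\mathcal{V}'$ with probability at least $\tfrac23$, so overall $\mathcal{V}_{n,k}$ accepts some such $y$ with probability at least $\tfrac23-2^{-\Omega(\abs{x})}$; if $x\notin L$ then every $y$ is accepted with probability at most $\tfrac13+2^{-\Omega(\abs{x})}$. This leaves a constant promise gap, which is amplified back to $(\tfrac23,\tfrac13)$ by the standard parallel-repetition-and-majority-vote amplification for verifiers with classical witnesses, which preserves the witness length. Hence $(L,\kappa)\in\QCWP$.

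The only genuine subtlety, and the step I expect to require the most care, is the witness-length accounting: one needs the observation that an FPQT reduction produces output of length polynomial in $\abs{x}$ times a computable function of $\kappa(x)$, so that $\log\abs{R(x)}$ remains $O(\log\abs{x})$ up to a parameter-dependent multiplicative factor; everything else is a routine adaptation of Proposition~\ref{proposition:FPQTClosedUnderFPQTReductions}.
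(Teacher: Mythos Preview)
Your proposal is correct and follows exactly the approach the paper intends: the paper's proof is simply the one-liner ``follows similarly to that of Proposition~\ref{proposition:FPQTClosedUnderFPQTReductions},'' and you have spelled out that argument in full, including the witness-length accounting (the one genuinely new wrinkle for \QCWP) and the standard amplification steps. There is nothing to add.
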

\begin{proof}
    The proof follows similarly to that of Proposition~\ref{proposition:FPQTClosedUnderFPQTReductions}.
\end{proof}

We now introduce the \textsc{Weight-$k$ Quantum Circuit Satisfiability} problem and show that this problem is \QWP-complete. This requires the notion of the \emph{weight of a quantum state}.
\begin{definition}[Classical weight of a quantum state]
    A quantum state $\ket{\psi}$ on $n$ qubits is said to have \emph{classical weight} $k$ if there exists a quantum state $\ket{\phi}$ on $k$ qubits and an $n$-qubit swap network $\mathcal{S}_n$, such that $\ket{\psi}=\mathcal{S}_n(\ket{\phi}\ket{0^{n-k}})$.
\end{definition}

\begin{center}
\begin{tabularx}{\linewidth}{l X c}
    \multicolumn{2}{l}{\textsc{Classical Weight-$k$ Quantum Circuit Satisfiability}:} \\
    \textit{Instance:} & A quantum circuit $\mathcal{C}$ on $n$ witness qubits and $\text{poly}(n)$ ancilla qubits. Two positive numbers $a,b\in(0,1)$, such that $b-a>\frac{1}{\text{poly}(n)}$. \\
    \textit{Parameter:} & A natural number $k$. \\
    \textit{Problem:} & Decide whether there exists an $n$-qubit classical weight-$k$ quantum state $\ket{\psi}$, such that $\textbf{Pr}[\text{$\mathcal{C}(\ket{\psi})$ accepts}] \geq b$. Otherwise, if for every $n$-qubit classical weight-$k$ quantum state $\ket{\psi}$, $\textbf{Pr}[\text{$\mathcal{C}(\ket{\psi})$ accepts}] \leq a$. Given the promise that one of these is the case.
\end{tabularx}
\end{center}

\begin{proposition}
    \label{proposition:ClassicalWeightKQuantumCircuitSatisfiabilityQCWPComplete}
    \textsc{Classical Weight-$k$ Quantum Circuit Satisfiability} is \textnormal{\QCWP-complete}.
\end{proposition}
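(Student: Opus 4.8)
The plan is to establish \QCWP-completeness under FPQT reductions in two halves, closely paralleling the proof of Proposition~\ref{proposition:WeightKQuantumCircuitSatisfiabilityQWPComplete}: first membership in \QCWP, then \QCWP-hardness.

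For membership, the point to exploit is that, although a classical weight-$k$ state $\ket{\psi}=\mathcal{S}_n(\ket{\phi}\ket{0^{n-k}})$ carries a genuinely continuous $k$-qubit component $\ket{\phi}$, it is determined, up to inverse-polynomial $\ell_2$-error, by a short \emph{classical} string: an ordered $k$-subset of $[n]$ recording where the $k$ active registers sit ($O(k\log n)$ bits, since only that placement affects $\ket{\psi}$), together with the $2^k$ amplitudes of $\ket{\phi}$ rounded to a grid of spacing $1/\mathrm{poly}(n)$ ($O(2^k\log n)$ bits). I would build a \QCWP verifier $\mathcal{V}_{n,k}$ that reads such a classical witness of length $f(k)\cdot\log n$ with $f(k)=2^{O(k)}$, reconstructs the corresponding classical weight-$k$ state $\ket{\tilde{\psi}}$ (prepare the renormalised rounded $\ket{\tilde{\phi}}$ on $k$ fresh qubits with $O(2^k)$ gates, then apply the swap network with $O(n)$ two-qubit swap gates), runs $\mathcal{C}$ on $\ket{\tilde{\psi}}$ together with its ancillas, and accepts iff $\mathcal{C}$ accepts. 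Since perturbing $\ket{\phi}$ by $\varepsilon$ perturbs the acceptance probability by at most $2\varepsilon$, this verifier has completeness at least $b-2\varepsilon$ and soundness at most $a$; choosing $\varepsilon$ below a constant fraction of the promise gap $b-a>1/\mathrm{poly}(n)$ leaves an inverse-polynomial gap, which majority voting over independent re-preparations amplifies to $(\frac{2}{3},\frac{1}{3})$ --- no quantum-witness gap amplification in the sense of Marriott and Watrous~\cite{marriott2005quantum} is needed, as the classical witness is simply reused each time.

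For hardness, let $(L,\kappa)\in\QCWP$ with verification procedure $\{\mathcal{V}_{n,k}\}_{n,k\in\mathbb{Z}^+}$, so that on input $x$ the honest classical witness has length at most $m\coloneqq f(\kappa(x))\cdot\log\abs{x}$ for a computable $f$, with completeness $\frac{2}{3}$ and soundness $\frac{1}{3}$ over all such short witnesses. I would set the new parameter to $k'\coloneqq f(\kappa(x))$ --- depending only on $\kappa(x)$ --- and the witness register to $N\coloneqq k'(\abs{x}+1)$ qubits, so that $\binom{N}{k'}\geq\abs{x}^{k'}=2^{m}$, and fix an efficiently computable combinatorial-number-system injection $\varepsilon$ from $m$-bit strings into $N$-bit strings of Hamming weight exactly $k'$. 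The reduction outputs the instance $(\mathcal{C}_x,a=\frac{1}{3},b=\frac{2}{3})$ with parameter $k'$, where $\mathcal{C}_x$ on an $N$-qubit input measures it in the computational basis to get $w$, rejects if $w$ is not in the image of $\varepsilon$, and otherwise runs $\mathcal{V}_{\abs{x},\kappa(x)}(x,\varepsilon^{-1}(w))$ and accepts iff it accepts. If $x\in L$, a witness $y$ with $\Pr[\mathcal{V}_{\abs{x},\kappa(x)}(x,y)\text{ accepts}]\geq\frac{2}{3}$ yields the basis state $\ket{\varepsilon(y)}$, which has Hamming weight $k'$ and is therefore a classical weight-$k'$ state, accepted with probability at least $b$. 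If $x\notin L$, then for every $N$-qubit classical weight-$k'$ state $\ket{\psi}=\sum_w\beta_w\ket{w}$ the acceptance probability equals $\sum_w\abs{\beta_w}^2\Pr[\mathcal{V}_{\abs{x},\kappa(x)}(x,\varepsilon^{-1}(w))\text{ accepts}]$ over its computational-basis support (all of Hamming weight at most $k'$), with each term at most $\frac{1}{3}$ (and $0$ outside the image of $\varepsilon$), hence at most $a$. Checking Definition~\ref{definition:FPQTReduction}: correctness is the two cases above; $N=\mathrm{poly}(\abs{x})$ and $\mathcal{C}_x$ has an FPT-uniform circuit of size $f'(\kappa(x))\cdot\mathrm{poly}(\abs{x})$ (dominated by $\mathcal{V}_{\abs{x},\kappa(x)}$ and the $\mathrm{poly}(N)$-size reversible combinadic decoder), so the reduction is FPQT-computable; and $k'=f(\kappa(x))$ is bounded by a computable function of $\kappa(x)$.

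I expect the membership half to be the main obstacle. In the \QWP setting a weight-$k$ state already lives in the $n^{O(k)}$-dimensional subspace spanned by Hamming weight-$k$ basis states and so transfers verbatim onto an $O(k\log n)$-qubit register, leaving only the weight check; here the witness must be classical, so one has to discretise the continuous $k$-qubit part $\ket{\phi}$ and argue that the inverse-polynomial loss this incurs is (i) small enough to be absorbed by the promise gap and (ii) arranged so that it is not compounded by the subsequent amplification. The hardness half is the classical-witness analogue of the argument in Proposition~\ref{proposition:WeightKQuantumCircuitSatisfiabilityQWPComplete} and should present no difficulty.
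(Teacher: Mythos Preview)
Your proposal is correct and follows essentially the same two-part structure as the paper: for membership, both describe the classical weight-$k$ state by a short classical string (the paper specifies a $k$-qubit preparation circuit $\mathcal{D}_k$ together with a swap network $\mathcal{S}_n$, you specify rounded amplitudes together with a swap network), reconstruct the state, and run $\mathcal{C}$; for hardness, both encode the short \QCWP witness as a Hamming-weight-bounded computational basis state (the paper via one-hot blocks with parameter $k_x=f(\kappa(x))+\kappa(x)$, you via the combinatorial number system with $k'=f(\kappa(x))$). Your observation that majority voting with a reused classical witness suffices in place of Marriott--Watrous amplification, and your explicit inverse-polynomial precision matching the promise gap, are mild sharpenings of the paper's argument but not a different route.
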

\begin{proof}
    Firstly, we show that \textsc{Classical Weight-$k$ Quantum Circuit Satisfiability} is in \QCWP. Let $\mathcal{C}$ be a quantum circuit on $n$ qubits, $k$ a natural number, and $f:\mathbb{Z}^+\to\mathbb{Z}^+$ a computable function. An $n$-qubit classical weight-$k$ quantum state $\ket{\psi}$ can be described to an arbitrary constant precision $\epsilon>0$ using $f(k)\cdot\log(n)$ classical bits by specifying a quantum circuit on $k$ qubits $\mathcal{D}_k$ and an $n$-qubit swap network $\mathcal{S}_n$, such that $\norm{\ket{\psi}-\mathcal{S}_n\mathcal{D}_k\ket{0^n}}_2\leq\epsilon$. The verification procedure $\mathcal{V}_{n,k}$ constructs the state $\mathcal{S}_n\mathcal{D}_k\ket{0^n}$ and accepts if and only if $\mathcal{C}(\mathcal{S}_n\mathcal{D}_k\ket{0^n})$ accepts. Applying the gap amplification scheme of Marriott and Watrous~\cite{marriott2005quantum} to this procedure completes the claim.

    We now prove that \textsc{Classical Weight-$k$ Quantum Circuit Satisfiability} is \QCWP-hard. Let $(L,\kappa)$ be a problem in \QCWP with verification procedure $\{\mathcal{V}_{n,k}\}_{n,k\in\mathbb{Z}^+}$. Further let $f:\mathbb{Z}^+\to\mathbb{Z}^+$ be a computable function and define $k_x \coloneqq f(\kappa(x))+\kappa(x)$. For input $x\in\Sigma^*$, we shall construct a quantum circuit $\mathcal{C}_x$ that is satisfiable by a classical weight-$k_x$ quantum state if and only if $\mathcal{V}_{\abs{x},\kappa(x)}(x)$ is satisfiable. The circuit $\mathcal{C}_x$ takes as input $f(\kappa(x))+\kappa(x)\cdot\abs{x}$ bits arranged in one block of $f(\kappa(x))$ bits and $\kappa(x)$ blocks of $\abs{x}$ bits. Note that a classical input can be ensured by computational basis measurements at the beginning of the circuit. The circuit $\mathcal{C}_x$ now decides whether each of the $\kappa(x)$ blocks of $\abs{x}$ bits has Hamming weight exactly one. If so, then the input has classical weight $k_x$. Otherwise, we reject. Each of these blocks is then mapped to a block of $\log\abs{x}$ bits that specifies the location of the one in the block. Finally, the circuit inputs the block of $f(\kappa(x))$ bits and the $\kappa(x)$ blocks of $\log\abs{x}$ bits into the verifier $\mathcal{V}_{\abs{x},\kappa(x)}$. Therefore, $\mathcal{C}_x$ is satisfiable by a classical weight-$k_x$ quantum state if and only if $\mathcal{V}_{\abs{x},\kappa(x)}(x)$ is satisfiable. This completes the proof.
\end{proof}

We shall now define the complexity class $\QCW[t]$, which is the subclass of $\QW[t]$ restricted to classical weights.
\begin{center}
\begin{tabularx}{\linewidth}{l X c}
    \multicolumn{2}{l}{\textsc{Classical Weight-$k$ Weft-$t$ Depth-$d$ Quantum Circuit Satisfiability}:} \\
    \textit{Instance:} & A weft-$t$ depth-$d$ quantum circuit $\mathcal{C}$ on $n$ witness qubits and $\text{poly}(n)$ ancilla qubits. Three positive numbers $a,b,c\in(0,1)$, such that $b-a>c$. \\
    \textit{Parameter:} &  A natural number $k$. \\
    \textit{Problem:} & Decide whether there exists an $n$-qubit classical weight-$k$ quantum state $\ket{\psi}$, such that $\textbf{Pr}[\text{$\mathcal{C}(\ket{\psi})$ accepts}] \geq b$. Otherwise, if for every $n$-qubit classical weight-$k$ quantum state $\ket{\psi}$, $\textbf{Pr}[\text{$\mathcal{C}(\ket{\psi})$ accepts}] \leq a$. Given the promise that one of these is the case.
\end{tabularx}
\end{center}

\begin{definition}[{$\QCW[t]$}]
    \label{definition:QCWT}
    For $t\in\mathbb{N}$, the class $\QCW[t]$ consists of all parameterized problems that are FPQT reducible to \textsc{Classical Weight-$k$ Weft-$t$ Depth-$d$ Quantum Circuit Satisfiability} for some $d \geq t$.
\end{definition}

The following containments are straightforward.
\begin{proposition}
    For any $t\in\mathbb{N}$, $\W[t]\subseteq\QCW[t]$, $\QCW[t]\subseteq\QCW[t+1]$, $\QCW[t]\subseteq\QW[t]$, and $\QCW[t]\subseteq\QCWP$.
\end{proposition}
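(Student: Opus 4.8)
The plan is to establish the four inclusions separately, in each case composing the FPQT reduction that witnesses membership in the source class with a suitable transformation of instances of the relevant complete problem; since every FPT reduction is in particular an FPQT reduction, the classical inclusion comes along for free. For $\W[t]\subseteq\QCW[t]$ I would give an FPQT reduction from \textsc{Weight-$k$ Weft-$t$ Depth-$d$ Circuit Satisfiability} to \textsc{Classical Weight-$k$ Weft-$t$ Depth-$d$ Quantum Circuit Satisfiability}: replace each gate of the Boolean circuit by its standard reversible implementation, so that generalised Toffoli gates are preserved and bounded fan-in gates become one- and two-qubit gates, leaving the weft and depth unchanged up to additive constants, and fix the gap parameters to arbitrary constants since the acceptance probability of a classical-reversible circuit on a computational basis state is $0$ or $1$. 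A Hamming weight-$k$ input string $y$ is exactly the classical weight-$k$ state $\ket{y}$ (take the $k$-qubit state $\ket{1^k}$ and a swap network routing the ones onto the support of $y$), and conversely the acceptance probability on a superposition is the convex combination of the acceptance probabilities on its basis support, so a satisfying classical weight-$k$ witness yields a satisfying computational basis state of Hamming weight at most $k$, which a routine padding argument promotes to weight exactly $k$. For $\QCW[t]\subseteq\QCW[t+1]$ I would observe that a weft-$t$ depth-$d$ circuit has weft at most $t+1$ and that appending a single layer of identity gates makes its depth at least $t+1$; this is an \FPT transformation that leaves the parameter untouched, so composing it with the FPQT reduction into \textsc{Classical Weight-$k$ Weft-$t$ Depth-$d$ Quantum Circuit Satisfiability} witnesses membership in $\QCW[t+1]$.

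For $\QCW[t]\subseteq\QCWP$ I would use that \QCWP is closed under FPQT reductions, so it suffices to give an FPQT reduction from \textsc{Classical Weight-$k$ Weft-$t$ Depth-$d$ Quantum Circuit Satisfiability} into \QCWP and then compose with the reduction witnessing $\QCW[t]$ membership. Given such an instance, one can forget the weft and depth annotations and amplify the gap to $(\frac{2}{3},\frac{1}{3})$ by the scheme of Marriott and Watrous~\cite{marriott2005quantum} (which re-runs the verifier without altering the admissible witness states), obtaining an instance of \textsc{Classical Weight-$k$ Quantum Circuit Satisfiability}; this last problem is in \QCWP by Proposition~\ref{proposition:ClassicalWeightKQuantumCircuitSatisfiabilityQCWPComplete}, and the point worth noting is that the weft bound plays no role, since \QCWP verifiers are unrestricted in weft.

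The inclusion $\QCW[t]\subseteq\QW[t]$ is the one that requires care. The natural approach is an FPQT reduction from \textsc{Classical Weight-$k$ Weft-$t$ Depth-$d$ Quantum Circuit Satisfiability} to \textsc{Weight-$k'$ Weft-$t$ Depth-$d'$ Quantum Circuit Satisfiability}: since the nontrivial content of a classical weight-$k$ state lives on only $k$ of its $n$ qubits, encode it as a genuine Hamming weight-$k$ state on $2n$ qubits by dual-rail encoding each logical qubit, using the codewords $\ket{01}$ and $\ket{10}$ on the $k$ active positions and $\ket{00}$ on the remaining $n-k$, and have the new circuit first apply one CNOT per pair (costing no weft, this disentangles the second rail and leaves the decoded state on the first rails) and then run the original circuit, so that $k'=k$, the depth grows by $O(1)$, and the weft stays at $t$. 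One can check that any Hamming weight-$k$ witness none of whose pairs is $\ket{11}$ decodes to a mixture of classical weight-$k$ states, which is what is needed for soundness when $x\notin L$. The main obstacle is that the \textsc{Weight-$k'$} promise imposes only a single \emph{global} Hamming-weight constraint, which does not rule out a witness with a $\ket{11}$ pair compensated by $\ket{00}$ pairs, and such witnesses can decode to states supported on more than $k$ coordinates; rejecting them requires evaluating the disjunction $\bigvee_i (a_i\wedge b_i)$, a depth-two formula that contributes to the weft and would push it to $t+O(1)$ (yielding only $\QCW[t]\subseteq\QW[t+1]$). The crux of the proof is therefore to find an encoding whose admissibility is self-enforced by the global weight constraint, so that no extra Toffoli layer of verification is needed; once such an encoding is in place the composition with the reduction witnessing $\QCW[t]$ membership closes the argument.
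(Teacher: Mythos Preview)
The paper offers no proof for this proposition, simply labelling all four containments ``straightforward.'' Your treatment of $\QCW[t]\subseteq\QCW[t+1]$ and $\QCW[t]\subseteq\QCWP$ is fine, and your argument for $\W[t]\subseteq\QCW[t]$ is essentially right (though note that your padding goes the wrong way: the soundness direction needs you to rule out that some Hamming weight $<k$ basis state in the support of a classical weight-$k$ witness satisfies the reversibilised circuit, which is the standard ``exactly $k$ versus at most $k$'' technicality in the $\W$ hierarchy and is handled by the usual monotone/antimonotone normalisations rather than by padding).

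The genuine gap is in $\QCW[t]\subseteq\QW[t]$. You set up the dual-rail encoding, correctly identify that a global Hamming-weight constraint does not forbid $\ket{11}$ pairs (nor, more fundamentally, superpositions over \emph{different} sets of $k$ active positions, which decode to states that are not classical weight-$k$ at all), and then stop, writing that ``the crux of the proof is therefore to find an encoding whose admissibility is self-enforced by the global weight constraint.'' That sentence is a description of what remains to be done, not a proof. As it stands you have only argued $\QCW[t]\subseteq\QW[t+O(1)]$. Either you need to exhibit such a self-enforcing encoding concretely, or give a different argument (for instance, one that separates the classical ``which $k$ positions'' data from the $k$-qubit quantum content and handles each without increasing weft), or at minimum argue that the soundness gap survives the bad witnesses because the decoded state is always a \emph{mixture} of classical weight-$\le k$ states and the original verifier's soundness extends to those. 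Without one of these, the inclusion is not established.
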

The complexity classes $\QCW[t]$, for $t\geq1$, define the \QCW hierarchy, while note that, for $t=0$, we have $\QCW[0]=\FPQT$. We now establish an alternate \QCWP-complete problem, which leads to an variation of the \QCW hierarchy.
\begin{center}
\begin{tabularx}{\linewidth}{l X c}
    \multicolumn{2}{l}{\textsc{Hamming Weight-$k$ Quantum Circuit Satisfiability}:} \\
    \textit{Instance:} & A quantum circuit $\mathcal{C}$ on $n$ witness qubits and $\text{poly}(n)$ ancilla qubits. Two positive numbers $a,b\in(0,1)$, such that $b-a>\frac{1}{\text{poly}(n)}$. \\
    \textit{Parameter:} &  A natural number $k$. \\
    \textit{Problem:} & Decide whether there exists an $n$-bit Hamming weight-$k$ string $y$, such that $\textbf{Pr}[\text{$\mathcal{C}(y)$ accepts}] \geq b$. Otherwise, if for every $n$-bit Hamming weight-$k$ string $y$, $\textbf{Pr}[\text{$\mathcal{C}(y)$ accepts}] \leq a$. Given the promise that one of these is the case.
\end{tabularx}
\end{center}

\begin{proposition}
    \textsc{Hamming Weight-$k$ Quantum Circuit Satisfiability} is \textnormal{\QCWP-complete}.
\end{proposition}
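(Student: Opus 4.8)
The plan is to follow the proof of Proposition~\ref{proposition:ClassicalWeightKQuantumCircuitSatisfiabilityQCWPComplete} almost verbatim; the only genuine change is that the witness here is an honest $n$-bit Hamming weight-$k$ string rather than a classical weight-$k$ quantum state, which if anything simplifies the argument. As usual there are two directions to verify: membership in \QCWP and \QCWP-hardness under FPQT reductions.

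For membership, I would observe that an $n$-bit Hamming weight-$k$ string $y$ is completely determined by the $k$ positions of its ones, i.e.\ by a classical string of $k\lceil\log n\rceil = O(\kappa(x)\cdot\log\abs{x})$ bits (using $n\leq\abs{x}$). The verification procedure $\mathcal{V}_{n,k}$ takes such a compact description, uses a polynomial-size circuit to reconstruct $y$ --- rejecting if the $k$ listed positions are not distinct, so that the reconstructed string has Hamming weight exactly $k$ --- then runs $\mathcal{C}$ on $y$ with its ancillas and accepts if and only if $\mathcal{C}$ does. This is an FPT-uniform family of circuits of size $f(\kappa(x))\cdot p(\abs{x})$, and since the witness is classical the promise gap can be widened by plain repetition, or, as in Proposition~\ref{proposition:ClassicalWeightKQuantumCircuitSatisfiabilityQCWPComplete}, by the gap amplification of Marriott and Watrous~\cite{marriott2005quantum}. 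Hence \textsc{Hamming Weight-$k$ Quantum Circuit Satisfiability} is in \QCWP.

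For hardness, let $(L,\kappa)$ be in \QCWP with verification procedure $\{\mathcal{V}_{n,k}\}_{n,k\in\mathbb{Z}^+}$ and classical witnesses of length at most $m\coloneqq f(\kappa(x))\cdot\log\abs{x}$. I would set $k_x\coloneqq f(\kappa(x))$ (increasing $k_x$ by a constant factor if needed to absorb rounding in $\lceil\log\abs{x}\rceil$) and, on input $x$, construct a quantum circuit $\mathcal{C}_x$ on $n'\coloneqq k_x\cdot\abs{x}$ witness qubits, viewed as $k_x$ blocks of $\abs{x}$ bits, together with $\text{poly}(\abs{x})$ ancillas. The circuit $\mathcal{C}_x$ first checks that each of the $k_x$ blocks has Hamming weight exactly one, rejecting otherwise; it then maps block $j$ to the binary encoding of the position of its single one, concatenates these encodings into an $m$-bit string $w$, runs $\mathcal{V}_{\abs{x},\kappa(x)}(x,w)$, and accepts if and only if that verifier accepts, with the thresholds $a,b$ inherited from the definition of \QCWP. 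The new parameter $\kappa'(\mathcal{C}_x)=k_x=f(\kappa(x))$ is a computable function of $\kappa(x)$, and the map $x\mapsto\mathcal{C}_x$ is computable by an FPT (hence FPQT) procedure because building $\mathcal{V}_{\abs{x},\kappa(x)}$ is FPT-uniform and the remaining gadgets have polynomial size. Correctness is then immediate: if $x\in L$, a good witness $w$ (padded to exactly $m$ bits) corresponds to a weight-$k_x$ string with one $1$ per block that $\mathcal{C}_x$ accepts with probability at least $b$; if $x\notin L$, every weight-$k_x$ string either violates the block structure, so $\mathcal{C}_x$ accepts it with probability $0$, or decodes to some $w$ that $\mathcal{V}$ accepts with probability at most $a$. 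Combined with the membership direction, this gives \QCWP-completeness.

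The only delicate points are bookkeeping rather than conceptual. One must fix the encoding so that $k_x$ depends on $\kappa(x)$ alone while $n'$ stays polynomial in $\abs{x}$; one must handle witnesses of length strictly below $m$, for instance by prefixing a short self-delimiting encoding of the true length and padding; and one must absorb the floor/ceiling in $\lceil\log\abs{x}\rceil$ by allowing slightly more than $f(\kappa(x))$ blocks. It is also worth remarking that rejecting strings which violate the block structure is consistent with the promise --- such strings are simply never accepting witnesses --- so no extra promise needs to be imposed on the constructed instance. There is no quantum obstacle here: exactly as in the classical-weight case, the witness is classical and all quantum content is confined to $\mathcal{C}$ and the borrowed amplification step.
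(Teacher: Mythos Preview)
Your proposal is correct and takes essentially the same approach as the paper: the paper's own proof is simply ``The proof follows similarly to that of Proposition~\ref{proposition:ClassicalWeightKQuantumCircuitSatisfiabilityQCWPComplete}'', and you have spelled out precisely that adaptation, with the expected simplification that a Hamming weight-$k$ bit string is specified exactly by its $k$ one-positions rather than approximately by a small circuit. Your block-based hardness gadget and the bookkeeping remarks on padding and $\lceil\log\abs{x}\rceil$ are the natural details one would fill in.
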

\begin{proof}
    The proof follows similarly to that of Proposition~\ref{proposition:ClassicalWeightKQuantumCircuitSatisfiabilityQCWPComplete}.
\end{proof}

\begin{center}
\begin{tabularx}{\linewidth}{l X c}
    \multicolumn{2}{l}{\textsc{Hamming Weight-$k$ Weft-$t$ Depth-$d$ Quantum Circuit Satisfiability}:} \\
    \textit{Instance:} & A weft-$t$ depth-$d$ quantum circuit $\mathcal{C}$ on $n$ witness qubits and $\text{poly}(n)$ ancilla qubits. Three positive numbers $a,b,c\in(0,1)$, such that $b-a>c$. \\
    \textit{Parameter:} &  A natural number $k$. \\
    \textit{Problem:} & Decide whether there exists an $n$-bit Hamming weight-$k$ string $y$, such that $\textbf{Pr}[\text{$\mathcal{C}(y)$ accepts}] \geq b$. Otherwise, if for every $n$-bit Hamming weight-$k$ string $y$, $\textbf{Pr}[\text{$\mathcal{C}(y)$ accepts}] \leq a$. Given the promise that one of these is the case.
\end{tabularx}
\end{center}

\begin{definition}[{$\QCWc[t]$}]
    \label{definition:QCWCT}
    For $t\in\mathbb{N}$, the class $\QCWc[t]$ consists of all parameterized problems that are FPQT reducible to \textsc{Hamming Weight-$k$ Weft-$t$ Depth-$d$ Quantum Circuit Satisfiability} for some $d \geq t$.
\end{definition}

The following containments are straightforward.
\begin{proposition}
    For any $t\in\mathbb{N}$, $\W[t]\subseteq\QCWc[t]$, $\QCWc[t]\subseteq\QCWc[t+1]$, $\QCWc[t]\subseteq\QCW[t]$, and $\QCWc[t]\subseteq\QCWP$.
\end{proposition}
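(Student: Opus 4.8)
The plan is to obtain all four inclusions by composing FPQT reductions, using that each of $\QCWc[t]$, $\QCWc[t{+}1]$, $\QCW[t]$, and $\QCWP$ is closed under FPQT reductions (by an argument as in Proposition~\ref{proposition:FPQTClosedUnderFPQTReductions}, together with the transitivity of FPQT reductions). Since membership in these classes is by definition FPQT-reducibility to the relevant circuit-satisfiability problem, in each case it suffices to reduce the ``source'' circuit-satisfiability problem to the ``target'' one. For $\W[t]\subseteq\QCWc[t]$: a Boolean weft-$t$ depth-$d$ circuit over generalised Toffoli gates and one- and two-bit gates is literally a quantum weft-$t$ depth-$d$ circuit, and evaluating it on a computational-basis string reproduces the Boolean evaluation with acceptance probability $0$ or $1$; thus \textsc{Weight-$k$ Weft-$t$ Depth-$d$ Circuit Satisfiability} embeds into \textsc{Hamming Weight-$k$ Weft-$t$ Depth-$d$ Quantum Circuit Satisfiability} via $\mathcal{C}\mapsto(\mathcal{C},\tfrac13,\tfrac23,\tfrac13)$, the promise being trivially met, and precomposing any FPT reduction witnessing membership in $\W[t]$ (an FPT reduction is in particular an FPQT reduction) gives the claim. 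For $\QCWc[t]\subseteq\QCWc[t+1]$, the identity map is an FPQT reduction from \textsc{Hamming Weight-$k$ Weft-$t$ Depth-$d$ Quantum Circuit Satisfiability} to \textsc{Hamming Weight-$k$ Weft-$(t{+}1)$ Depth-$d'$ Quantum Circuit Satisfiability} with $d'=\max(d,t{+}1)$, since a weft-$t$ circuit has weft at most $t+1$ and its depth may be padded with identity layers. For $\QCWc[t]\subseteq\QCWP$, \textsc{Hamming Weight-$k$ Weft-$t$ Depth-$d$ Quantum Circuit Satisfiability} already lies in $\QCWP$ via a verifier that takes as its $O(k\log n)$-bit classical witness the list of positions of the ones, reconstructs the witness string in polynomial time, and simulates $\mathcal{C}$; hence every problem FPQT-reducible to it is in $\QCWP$.

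The substantive case is $\QCWc[t]\subseteq\QCW[t]$. Here I would reduce \textsc{Hamming Weight-$k$ Weft-$t$ Depth-$d$ Quantum Circuit Satisfiability} to \textsc{Classical Weight-$k$ Weft-$t$ Depth-$d'$ Quantum Circuit Satisfiability} by building from $\mathcal{C}$ (on $n$ qubits) a circuit $\mathcal{C}'$ that first applies a ``weight-$k$ filter'' $\mathcal{M}_{n,k}$ --- a subroutine that, using the witness qubits only as controls, coherently sets a flag to $1$ iff the witness lies in the Hamming weight-$k$ subspace --- and then runs $\mathcal{C}$ and accepts iff the flag is set and $\mathcal{C}$ accepts. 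The key fact is that the only classical weight-$k$ states lying inside the Hamming weight-$k$ subspace are the Hamming weight-$k$ computational-basis states $\ket{y}$ themselves (a classical weight-$k$ state $\mathcal{S}_n(\ket{\phi}\ket{0^{n-k}})$ has support $1$ only on a fixed $k$-subset, so it lies in the weight-$k$ subspace precisely when $\ket{\phi}=\ket{1^k}$). An argument along the lines of Proposition~\ref{proposition:ClassicalWeightKQuantumCircuitSatisfiabilityQCWPComplete} then shows that the acceptance probability of $\mathcal{C}'$ on a classical weight-$k$ state $\ket{\psi}$ equals $\lvert\braket{1^k}{\phi}\rvert^2$ times the acceptance probability of $\mathcal{C}$ on the corresponding Hamming weight-$k$ string, so that $\mathcal{C}'$ is satisfiable by a classical weight-$k$ state iff $\mathcal{C}$ is satisfiable by a Hamming weight-$k$ string, with the same thresholds $a,b$.

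The main obstacle --- and essentially the only place where ``straightforward'' is being generous --- is to implement the filter $\mathcal{M}_{n,k}$ and the final conjunction within the weft budget $t$ (and, if the depth parameter is read as a fixed constant, within bounded depth), since the naive weight-counting implementation spends generalised Toffoli gates both in the filter and in combining its flag with the output of $\mathcal{C}$. I would handle this by realising $\mathcal{M}_{n,k}$ as a shallow weight-checking network whose generalised-Toffoli layers act on wires disjoint from those of $\mathcal{C}$, so that no input-to-output path meets more than $t$ generalised Toffolis in total, and by merging the ``flag-and-$\mathcal{C}$-output'' conjunction into $\mathcal{C}$ in a weft-neutral way; verifying that this can be done without exceeding weft $t$ is the one genuinely technical step, and it mirrors the circuit-normalisation arguments that underpin the classical $\W$-hierarchy.
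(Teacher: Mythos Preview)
The paper offers no proof of this proposition beyond labelling the containments ``straightforward,'' so your write-up is already more careful than what appears there. Your arguments for $\W[t]\subseteq\QCWc[t]$, $\QCWc[t]\subseteq\QCWc[t+1]$, and $\QCWc[t]\subseteq\QCWP$ are correct and are exactly the kind of thing the authors presumably have in mind.

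For $\QCWc[t]\subseteq\QCW[t]$ you correctly identify that a Hamming-weight filter is needed --- the identity reduction fails soundness because a measured classical weight-$k$ state can have Hamming weight strictly less than $k$ --- and that the weft cost of this filter is the only real issue. Your parallel-filter-plus-two-bit-\textsc{and} sketch yields a circuit of weft $\max(t,2)$ (the filter is a layer of generalised Toffolis followed by one big \textsc{or}, joined to $\mathcal{C}$'s output by a two-qubit gate), which already handles $t\geq 2$; for $t\in\{0,1\}$ one does need either the circuit-normalisation manoeuvre you gesture at, or the separate observation $\QCWc[0]=\QCW[0]=\FPQT$ that the paper states just after the proposition. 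The paper simply does not engage with this point, so your flagging it as ``the one genuinely technical step'' is accurate rather than a defect in your argument.
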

The complexity classes $\QCWc[t]$, for $t\geq1$, define the \QCWc hierarchy, while note that, for $t=0$, we have $\QCWc[0]=\QCW[0]=\FPQT$.

\section{Applications}
\label{section:Applications}

In this section we shall explore some applications of fixed-parameter quantum tractability.

\subsection{Approximate Counting}
\label{section:ApproximateCounting}

In this section we shall apply some standard arguments in quantum computation and approximate counting to the parameterized setting. We proceed by establishing an FPQT algorithm for approximating quantum probability amplitudes.

\begin{theorem}
    \label{theorem:AdditiveErrorApproximationAlgorithmAmplitudes}
    Fix $\epsilon>0$. Let $\{\mathcal{C}_n\}_{n\in\mathbb{Z}^+}$ be a polynomial-time uniform family of quantum circuits each acting on $p(n)$ qubits and define $q(x)\coloneqq\matrixel{0^{p(\abs{x})}}{\mathcal{C}_{\abs{x}}(x)}{0^{p(\abs{x})}}$. Further let $\kappa$ be a parameterization and $f:\mathbb{Z}^+\to\mathbb{Z}^+$ a computable function. Then there is an FPQT algorithm that, for any input $x$, outputs an approximation to $q(x)$ to within an additive error of $\frac{\epsilon}{f(\kappa(x))\cdot\mathrm{poly}(\abs{x})}$.
\end{theorem}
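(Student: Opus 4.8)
The plan is to combine the standard Hadamard test for estimating amplitudes with repeated sampling to boost the precision, controlling the number of repetitions by a computable function of the parameter.

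First, I would note that $q(x)=\matrixel{0^{p(\abs{x})}}{\mathcal{C}_{\abs{x}}(x)}{0^{p(\abs{x})}}$ is a complex number of modulus at most one, and that to approximate it to additive error $\delta$ it suffices to approximate $\mathrm{Re}\,q(x)$ and $\mathrm{Im}\,q(x)$ separately, each to additive error $\delta/2$. The Hadamard test does exactly this: adjoin a control qubit prepared in $\ket{+}$, apply the controlled-$\mathcal{C}_{\abs{x}}(x)$ operation to the $p(\abs{x})$ target qubits initialised to $\ket{0^{p(\abs{x})}}$, apply a Hadamard gate (respectively the phase gate $S^{\dagger}$ followed by a Hadamard gate) to the control qubit, and measure it; the outcome is $0$ with probability $\frac{1}{2}\bigl(1+\mathrm{Re}\,q(x)\bigr)$ (respectively $\frac{1}{2}\bigl(1+\mathrm{Im}\,q(x)\bigr)$). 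Since $\{\mathcal{C}_n\}_{n\in\mathbb{Z}^+}$ is polynomial-time uniform, the circuit $\mathcal{C}_{\abs{x}}(x)$, and hence its controlled version (obtained by controlling each gate, at polynomial overhead), has size $\mathrm{poly}(\abs{x})$ and is constructible in time $\mathrm{poly}(\abs{x})$.

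Next, I would estimate each of $\mathrm{Re}\,q(x)$ and $\mathrm{Im}\,q(x)$ by performing the corresponding Hadamard test $N$ times independently and reporting twice the empirical frequency of the outcome $0$, minus one. By Hoeffding's inequality this estimator deviates from the true value by more than $\delta/2$ with probability at most $2\exp\bigl(-N\delta^2/8\bigr)$. Taking the target precision $\delta\coloneqq\epsilon/\bigl(f(\kappa(x))\cdot\mathrm{poly}(\abs{x})\bigr)$, the choice $N=O\bigl(\delta^{-2}\bigr)=O\bigl(f(\kappa(x))^2\cdot\mathrm{poly}(\abs{x})^2/\epsilon^2\bigr)$ makes the combined failure probability at most $1/3$. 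The algorithm first computes $\kappa(x)$ in polynomial time (since $\kappa$ is a parameterization), then $f(\kappa(x))$ (since $f$ is computable), then sets $N$ and runs the two batches of tests, combining the two estimates into an approximation of $q(x)$.

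Finally, I would check the \FPQT resource bounds: the whole procedure is realised by an FPT-uniform quantum circuit that runs $2N$ copies of a $\mathrm{poly}(\abs{x})$-size circuit followed by classical post-processing of the measurement outcomes, of total size $f'(\kappa(x))\cdot\mathrm{poly}(\abs{x})$ with $f'(k)\coloneqq O\bigl(f(k)^2/\epsilon^2\bigr)$ computable (as $\epsilon$ is a fixed constant), matching the definition of an \FPQT algorithm. There is no serious obstacle here; the only points requiring care are partitioning the error and failure budgets across the real and imaginary parts and across the repetitions, and keeping the polynomial factors genuinely polynomial in $\abs{x}$ so that $N$ has the form $f'(\kappa(x))\cdot\mathrm{poly}(\abs{x})$. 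One could replace the $O\bigl(\delta^{-2}\bigr)$ repetitions by $O\bigl(\delta^{-1}\bigr)$ using amplitude estimation, but this is not needed for membership in \FPQT.
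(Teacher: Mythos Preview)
Your proposal is correct and matches the paper's own proof essentially step for step: the paper also applies the Hadamard test to obtain random variables with expectations $\Re(q(x))$ and $\Im(q(x))$, then invokes the Chernoff--Hoeffding bound to show that $f(\kappa(x))^2\cdot\mathrm{poly}(\abs{x})\cdot\epsilon^{-2}$ repetitions suffice. Your write-up is in fact more explicit than the paper's about the controlled-circuit construction, the error splitting, and the verification of the FPQT resource bounds.
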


\begin{proof}
    We apply the Hadamard test to $\mathcal{C}_{\abs{x}}(x)$ to sample from random variables with expectation values equal to $\Re(q(x))$ and $\Im(q(x))$. It follows from the Chernoff-Hoeffding bound that repeating this procedure $f(\kappa(x))^2\cdot\text{poly}(\abs{x})\cdot\epsilon^{-2}$ times allows us to obtain an approximation $\tilde{q}(x)$ to $q(x)$, such that
    \begin{equation}
        \textbf{Pr}\left[\abs{q(x)-\tilde{q}(x)} \geq \frac{\epsilon}{f(\kappa(x))\cdot\text{poly}(\abs{x})}\right] \leq e^{-\text{poly}(\abs{x})}. \notag
    \end{equation}
    Then, with high probability, we have
    \begin{equation}
        \abs{q(x)-\tilde{q}(x)} \leq \frac{\epsilon}{f(\kappa(x))\cdot\text{poly}(\abs{x})}. \notag
    \end{equation}
    This completes the proof.
\end{proof}

We obtain the immediate corollary.
\begin{corollary}
    \label{corollary:MultiplicativeErrorApproximationAlgorithmAmplitudes}
    In the notation of Theorem~\ref{theorem:AdditiveErrorApproximationAlgorithmAmplitudes}. Let $\kappa$ be a parameterization such that, for any input $x$,
    $\abs{q(x)}\geq\frac{1}{f(\kappa(x))\cdot\mathrm{poly}(\abs{x})}$. Then there is an FPQT algorithm that, for any input $x$, outputs a multiplicative $\epsilon$-approximation to $q(x)$.
\end{corollary}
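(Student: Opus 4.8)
The plan is to reduce directly to Theorem~\ref{theorem:AdditiveErrorApproximationAlgorithmAmplitudes} by rescaling the additive error. A multiplicative $\epsilon$-approximation to $q(x)$ is any $\tilde{q}(x)$ with $\abs{\tilde{q}(x)-q(x)}\leq\epsilon\abs{q(x)}$, so under the hypothesis $\abs{q(x)}\geq 1/(f(\kappa(x))\cdot\mathrm{poly}(\abs{x}))$ it suffices to produce $\tilde{q}(x)$ with additive error at most $\epsilon/(f(\kappa(x))\cdot\mathrm{poly}(\abs{x}))$ for the same computable function $f$ and the same polynomial appearing in the lower bound.

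First I would invoke Theorem~\ref{theorem:AdditiveErrorApproximationAlgorithmAmplitudes} with this choice of $f$ and with the polynomial chosen to match (or dominate) the one in the hypothesis. Inspecting its proof, the Hadamard-test estimator can be driven to any additive error of the form $\epsilon/(f(\kappa(x))\cdot r(\abs{x}))$ for a fixed polynomial $r$ by taking on the order of $f(\kappa(x))^2\cdot r(\abs{x})^2\cdot\epsilon^{-2}$ samples, which is still an FPQT number of repetitions. This yields, with probability at least $2/3$ (in fact with failure probability exponentially small in $\abs{x}$), an estimate $\tilde{q}(x)$ satisfying $\abs{\tilde{q}(x)-q(x)}\leq\epsilon/(f(\kappa(x))\cdot\mathrm{poly}(\abs{x}))\leq\epsilon\abs{q(x)}$.

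Then I would check that the composed procedure is itself an FPQT algorithm: the associated circuit family is FPT-uniform and of size $f(\kappa(x))\cdot\mathrm{poly}(\abs{x})$ since the number of Hadamard-test repetitions is bounded in this form, and the success probability is at least $2/3$, so the output is a multiplicative $\epsilon$-approximation to $q(x)$ with the required probability, which establishes the corollary.

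Since the argument is essentially a parameter substitution into the previous theorem, there is no serious obstacle. The only point requiring a little care is ensuring that the polynomial in the additive-error guarantee of Theorem~\ref{theorem:AdditiveErrorApproximationAlgorithmAmplitudes} may be taken to be the specific polynomial appearing in the hypothesis of the corollary, rather than some fixed polynomial outside our control; this is immediate from the sample-complexity bound in that theorem's proof, where increasing the target precision only increases the number of repetitions polynomially.
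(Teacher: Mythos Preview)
Your proposal is correct and matches the paper's intent: the paper simply labels this an ``immediate corollary'' with no written proof, and the implicit argument is precisely the one you give---an additive error of $\epsilon/(f(\kappa(x))\cdot\mathrm{poly}(\abs{x}))$ from Theorem~\ref{theorem:AdditiveErrorApproximationAlgorithmAmplitudes} becomes a multiplicative $\epsilon$-approximation once $\abs{q(x)}\geq 1/(f(\kappa(x))\cdot\mathrm{poly}(\abs{x}))$. Your extra care about matching the polynomial in the hypothesis to the one in the additive-error bound is well placed and is exactly the only subtlety worth mentioning.
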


It is well known that quantum probability amplitudes encode the evaluation of Jones polynomials at \emph{principal non-lattice roots of unity}, i.e., $t=\exp(2\pi i/k)$ for $k=5$ or $k\geq7$~\cite{freedman2002modular, aharonov2009polynomial}. This is the key observation used to establish the quantum algorithm of Aharonov, Jones, and Landau~\cite{aharonov2009polynomial} for approximating the evaluation of such Jones polynomials. We shall apply Theorem~\ref{theorem:AdditiveErrorApproximationAlgorithmAmplitudes} and Corollary~\ref{corollary:MultiplicativeErrorApproximationAlgorithmAmplitudes} to extend this algorithm to the parameterized setting. Recall that a \emph{braid} is a collection of strands that may cross over and under each other, and must always move from left to right. The \emph{plat closure} of a $2n$-strand braid $b$ is the link formed by connecting pairs of adjacent strands on the left and the right of the braid. We have the following corollary.

\begin{corollary}
    Fix $\epsilon>0$. Let $k=5$ or $k\geq7$ be an integer, and $t=\exp(2\pi i/k)$ its corresponding root of unity. Let $b$ be a braid on $2n$ strands with at most a polynomial in $n$ number of crossings, and let $b^{pl}$ denote its plat closure. Further let $\kappa$ be a parameterization and $f:\mathbb{Z}^+\to\mathbb{Z}^+$ a computable function. Then there is an FPQT algorithm that outputs an approximation to the evaluation of the Jones polynomial $\mathrm{V}_{b^{pl}}(t)$ to within an additive error of $\frac{\epsilon\cdot(2\cos(\pi/k))^n}{f(\kappa(b))\cdot\mathrm{poly}(\abs{x})}$. Furthermore, if $\kappa$ is a parameterization such that $\abs{\mathrm{V}_{b^{pl}}(t)}\geq\frac{(2\cos(\pi/k))^n}{f(\kappa(x))\cdot\mathrm{poly}(\abs{x})}$, then the FPQT algorithm outputs a multiplicative $\epsilon$-approximation to $\mathrm{V}_{b^{pl}}(t)$.
\end{corollary}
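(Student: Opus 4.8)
The plan is to reduce the statement to Theorem~\ref{theorem:AdditiveErrorApproximationAlgorithmAmplitudes} and Corollary~\ref{corollary:MultiplicativeErrorApproximationAlgorithmAmplitudes} by invoking the algorithm of Aharonov, Jones, and Landau~\cite{aharonov2009polynomial}. Recall that the core of that algorithm is a \emph{classical} polynomial-time procedure which, given a braid $b$ on $2n$ strands with $\mathrm{poly}(n)$ crossings, outputs a description of a quantum circuit acting on $p(n)=\mathrm{poly}(n)$ qubits whose $\matrixel{0^{p(n)}}{\cdot}{0^{p(n)}}$ amplitude equals $\mathrm{V}_{b^{pl}}(t)$ at $t=\exp(2\pi i/k)$, divided by a normalisation of the form $(2\cos(\pi/k))^{n}$ and multiplied by an efficiently computable phase (a root of unity coming from the writhe correction that relates the Kauffman bracket to the Jones polynomial). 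Concretely, after folding the fixed path-model state preparation into the circuit, there is a polynomial-time uniform family $\{\mathcal{C}_m\}_{m\in\mathbb{Z}^+}$ such that, writing $q(b)\coloneqq\matrixel{0^{p(\abs{b})}}{\mathcal{C}_{\abs{b}}(b)}{0^{p(\abs{b})}}$, one has $\mathrm{V}_{b^{pl}}(t)=\omega_b\cdot(2\cos(\pi/k))^{n}\cdot q(b)$ for a known modulus-one phase $\omega_b$ computable in time $\mathrm{poly}(\abs{b})$.

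First I would check that this family meets the hypotheses of Theorem~\ref{theorem:AdditiveErrorApproximationAlgorithmAmplitudes}: the circuits act on $p(n)$ qubits, the family is polynomial-time uniform (the AJL translation is classical polynomial time), and the parameterization $\kappa$ together with the computable function $f$ are exactly the free data the theorem permits. Applying the theorem yields an FPQT algorithm that, on input $b$, outputs $\tilde q(b)$ with $\abs{q(b)-\tilde q(b)}\leq\frac{\epsilon}{f(\kappa(b))\cdot\mathrm{poly}(\abs{b})}$ with high probability.

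Next I would rescale. Since $\abs{q(b)}\leq 1$ and $(2\cos(\pi/k))^{n}$ has at most $O(n)$ bits before the binary point, both $(2\cos(\pi/k))^{n}$ and $\omega_b$ can be computed to relative precision $\frac{\epsilon}{f(\kappa(b))\cdot\mathrm{poly}(\abs{b})}$ in time $\mathrm{poly}(\abs{b})$ (the dependence on the fixed integer $k$ is constant). Multiplying $\tilde q(b)$ by these approximations and propagating the errors — using $\abs{\omega_b}=1$, so that multiplication by the approximation of $\omega_b$ is distance non-increasing up to its own approximation error — gives an output within additive error $O\!\left(\frac{\epsilon\cdot(2\cos(\pi/k))^{n}}{f(\kappa(b))\cdot\mathrm{poly}(\abs{b})}\right)$ of $\mathrm{V}_{b^{pl}}(t)$; absorbing the constant into the polynomial yields the claimed bound, and the whole procedure remains an FPQT algorithm with respect to $\kappa$.

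For the multiplicative statement, observe that the hypothesis $\abs{\mathrm{V}_{b^{pl}}(t)}\geq\frac{(2\cos(\pi/k))^{n}}{f(\kappa(b))\cdot\mathrm{poly}(\abs{b})}$ is equivalent to $\abs{q(b)}\geq\frac{1}{f'(\kappa(b))\cdot\mathrm{poly}(\abs{b})}$ for a suitable computable $f'$, which is precisely the hypothesis of Corollary~\ref{corollary:MultiplicativeErrorApproximationAlgorithmAmplitudes}. That corollary then supplies a multiplicative $\epsilon$-approximation of $q(b)$, and multiplying by the (sufficiently precisely approximated) known scalar $\omega_b\cdot(2\cos(\pi/k))^{n}$ preserves multiplicative approximation up to a rescaling of $\epsilon$, giving a multiplicative $\epsilon$-approximation of $\mathrm{V}_{b^{pl}}(t)$. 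The main obstacle is not conceptual but a matter of careful bookkeeping: extracting from~\cite{aharonov2009polynomial} the exact power of $2\cos(\pi/k)$ and the exact phase in the amplitude-to-Jones-polynomial dictionary (conventions differ by factors such as $(2\cos(\pi/k))^{\pm1}$ and by powers of the writhe-correction root of unity), confirming that this scalar is computable to the required precision in time polynomial in $\abs{b}$ and independent of the parameter, and checking that folding the path-model state preparation into the circuit keeps the family polynomial-time uniform.
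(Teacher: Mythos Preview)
Your proposal is correct and follows essentially the same route as the paper: invoke the Aharonov--Jones--Landau encoding of $\mathrm{V}_{b^{pl}}(t)$ (up to the writhe phase and the $(2\cos(\pi/k))^{n-1}$ normalisation) as a quantum amplitude, then apply Theorem~\ref{theorem:AdditiveErrorApproximationAlgorithmAmplitudes} and Corollary~\ref{corollary:MultiplicativeErrorApproximationAlgorithmAmplitudes}. If anything, you are more careful than the paper about the scalar-rescaling step and the bookkeeping on the exact exponent and phase, which the paper simply absorbs into the stated bounds.
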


\begin{proof}
    By a result of Aharonov, Jones, and Landau~\cite{aharonov2009polynomial} there exists a polynomial-time uniform family of quantum circuits $\{\mathcal{C}_{n,k}\}_{n,k\in\mathbb{Z}^+}$ and a polynomial $p\in\mathbb{N}[X]$ such that, for any braid $b$ on $2n$ strands with at most a polynomial in $n$ number of crossings,
    \begin{equation}
        \matrixel{0^{p(\abs{x})}}{C_{n,k}(b)}{0^{p(\abs{x})}} = \frac{e^{\frac{3i\pi(k+1)}{2k}w(b^{pl})}}{(2\cos(\pi/k))^{n-1}}\mathrm{V}_{b^{pl}}(t), \notag
    \end{equation}
    where $w(b^{pl})$ is the \emph{writhe} of $b^{pl}$, which can be computed in polynomial time. By applying Theorem~\ref{theorem:AdditiveErrorApproximationAlgorithmAmplitudes}, we have an FPQT algorithm that outputs an approximation to $\mathrm{V}_{b^{pl}}(t)$ to within an additive error of $\frac{\epsilon\cdot(2\cos(\pi/k))^n}{f(\kappa(b))\cdot\text{poly}(\abs{x})}$. Then by Corollary~\ref{corollary:MultiplicativeErrorApproximationAlgorithmAmplitudes}, if $\kappa$ is a parameterization such that $\abs{\mathrm{V}_{b^{pl}}(t)}\geq\frac{(2\cos(\pi/k))^n}{f(\kappa(x))\cdot\mathrm{poly}(\abs{x})}$, then the FPQT algorithm outputs a multiplicative $\epsilon$-approximation to $\mathrm{V}_{b^{pl}}(t)$. This completes the proof.
\end{proof}

Fenner et al.~\cite{fenner1999determining} showed that the solution to arbitrary problems in \GapP can be efficiently encoded in quantum probability amplitudes. Recall that \GapP is the closure of \SharpP under subtraction. This allows us to establish an FPQT algorithm for approximating the solution to arbitrary problems in \GapP. However, as we shall see later, this can be achieved by a classical FPT algorithm.

\begin{corollary}
    Fix $\epsilon>0$. Let $g$ be a function in \GapP, and $p\in\mathbb{N}[X]$ a polynomial such that, for any input $x$, $g(x)$ takes values in the range $[-2^{p(\abs{x})},\;2^{p(\abs{x})}]$. Further let $\kappa$ be a parameterization and $f:\mathbb{Z}^+\to\mathbb{Z}^+$ a computable function. Then there is an FPQT algorithm that, for any input $x$, outputs an approximation to $g(x)$ to within an additive error of $\frac{\epsilon\cdot2^{p(\abs{x})}}{f(\kappa(x))\cdot\mathrm{poly}(\abs{x})}$. Furthermore, if $\kappa$ is a parameterization such that $\abs{g(x)}\geq\frac{2^{p(\abs{x})}}{f(\kappa(x))\cdot\mathrm{poly}(\abs{x})}$, then the FPQT algorithm outputs a multiplicative $\epsilon$-approximation to $g(x)$.
\end{corollary}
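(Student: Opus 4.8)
The plan is to follow exactly the template of the preceding corollary on Jones polynomials, with the Aharonov--Jones--Landau encoding replaced by the encoding of \GapP functions into quantum amplitudes due to Fenner et al.~\cite{fenner1999determining}, and then to invoke Theorem~\ref{theorem:AdditiveErrorApproximationAlgorithmAmplitudes} and Corollary~\ref{corollary:MultiplicativeErrorApproximationAlgorithmAmplitudes} with the parameterization $\kappa$ and the function $f$ supplied in the hypothesis.

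The first step is to set up the encoding. By the result of Fenner et al.~\cite{fenner1999determining}, since $g\in\GapP$ is bounded by $2^{p(\abs{x})}$ in absolute value, there is a polynomial-time uniform family of quantum circuits $\{\mathcal{C}_n\}_{n\in\mathbb{Z}^+}$, each acting on $\text{poly}(n)$ qubits, such that
\begin{equation}
    q(x) \coloneqq \matrixel{0}{\mathcal{C}_{\abs{x}}(x)}{0} = \frac{g(x)}{2^{p(\abs{x})}}\notag
\end{equation}
for every input $x$, where $\ket{0}$ denotes the all-zeros state on the relevant register; since $g$ is integer-valued, $q(x)$ is real and $\abs{q(x)}\leq1$. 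That $g(x)/2^{p(\abs{x})}$ is realisable as such a quantum amplitude is precisely the content of~\cite{fenner1999determining}; its imaginary part, which is zero here, is in any case handled automatically by the Hadamard test underlying Theorem~\ref{theorem:AdditiveErrorApproximationAlgorithmAmplitudes}.

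Next I would apply Theorem~\ref{theorem:AdditiveErrorApproximationAlgorithmAmplitudes} to the family $\{\mathcal{C}_n\}_{n\in\mathbb{Z}^+}$, with the same $\epsilon$, $\kappa$, and $f$. This produces an FPQT algorithm that on input $x$ outputs $\tilde q(x)$ satisfying $\abs{q(x)-\tilde q(x)}\leq\frac{\epsilon}{f(\kappa(x))\cdot\text{poly}(\abs{x})}$ with high probability. The algorithm then outputs $\tilde g(x)\coloneqq2^{p(\abs{x})}\cdot\tilde q(x)$; since $p$ is a fixed polynomial, multiplying by $2^{p(\abs{x})}$ is a classical bit-shift computable in time polynomial in $\abs{x}$, so the overall algorithm is still FPQT, and
\begin{equation}
    \abs{g(x)-\tilde g(x)} = 2^{p(\abs{x})}\cdot\abs{q(x)-\tilde q(x)} \leq \frac{\epsilon\cdot2^{p(\abs{x})}}{f(\kappa(x))\cdot\text{poly}(\abs{x})},\notag
\end{equation}
which is the additive-error claim. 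For the multiplicative claim, suppose $\abs{g(x)}\geq\frac{2^{p(\abs{x})}}{f(\kappa(x))\cdot\text{poly}(\abs{x})}$; then $\abs{q(x)}=\abs{g(x)}/2^{p(\abs{x})}\geq\frac{1}{f(\kappa(x))\cdot\text{poly}(\abs{x})}$, so Corollary~\ref{corollary:MultiplicativeErrorApproximationAlgorithmAmplitudes} applies to $q$ and yields a multiplicative $\epsilon$-approximation to $q(x)$. Rescaling by the positive quantity $2^{p(\abs{x})}$ preserves relative error, so the corresponding $\tilde g(x)$ is a multiplicative $\epsilon$-approximation to $g(x)$.

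The main obstacle lies entirely in the first step: extracting from~\cite{fenner1999determining} an encoding whose normalisation is exactly $2^{p(\abs{x})}$ for the polynomial $p$ appearing in the hypothesis --- rather than $2^{r(\abs{x})}$ for some larger polynomial $r$ governed by the \GapP-complexity of $g$, which would weaken the error bound to scale like $2^{r(\abs{x})}$ --- while keeping the circuit family polynomial-time uniform and the amplitude located on the all-zeros state. Once this normalisation is pinned down, the remainder is a direct, essentially verbatim, reuse of Theorem~\ref{theorem:AdditiveErrorApproximationAlgorithmAmplitudes} and Corollary~\ref{corollary:MultiplicativeErrorApproximationAlgorithmAmplitudes}, exactly as in the proof of the Jones polynomial corollary.
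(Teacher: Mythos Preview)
Your proposal is correct and follows essentially the same approach as the paper: invoke Fenner et al.\ to encode $g(x)/2^{p(\abs{x})}$ as a quantum amplitude of a polynomial-time uniform circuit family, then apply Theorem~\ref{theorem:AdditiveErrorApproximationAlgorithmAmplitudes} and Corollary~\ref{corollary:MultiplicativeErrorApproximationAlgorithmAmplitudes} and rescale. The normalisation concern you flag is something the paper simply dispatches by citing \cite[Theorem~3.2]{fenner1999determining} as yielding exactly $g(x)/2^{p(\abs{x})}$, so there is no additional argument beyond what you already have.
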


\begin{proof}
    By a result of Fenner et al.~\cite[Theorem 3.2]{fenner1999determining} there exists a polynomial-time uniform family of quantum circuits $\{\mathcal{C}_n\}_{n\in\mathbb{Z}^+}$ and a polynomial $q\in\mathbb{N}[X]$ such that, for all $x$ of length $n$,
    \begin{equation}
        \matrixel{0^{q(\abs{x})}}{\mathcal{C}_{\abs{x}}(x)}{0^{q(\abs{x})}} = \frac{g(x)}{2^{p(\abs{x})}}. \notag
    \end{equation}
    By applying Theorem~\ref{theorem:AdditiveErrorApproximationAlgorithmAmplitudes}, we have an FPQT algorithm that outputs an approximation to $g(x)$ to within an additive error of $\frac{\epsilon\cdot2^{p(\abs{x})}}{f(\kappa(b))\cdot\text{poly}(\abs{x})}$. Then by Corollary~\ref{corollary:MultiplicativeErrorApproximationAlgorithmAmplitudes}, if $\kappa$ is a parameterization such that $\abs{g(x)}\geq\frac{2^{p(\abs{x})}}{f(\kappa(x))\cdot\text{poly}(\abs{x})}$, then the FPQT algorithm outputs a multiplicative $\epsilon$-approximation to $g(x)$. This completes the proof.
\end{proof}

The following theorem is a folklore result in parameterized counting.
\begin{theorem}
    Fix $\epsilon>0$. Let $g$ be a function in \GapP, and $p\in\mathbb{N}[X]$ a polynomial such that, for any input $x$, $g(x)$ takes values in the range $[-2^{p(\abs{x})},\;2^{p(\abs{x})}]$. Further let $\kappa$ be a parameterization and $f:\mathbb{Z}^+\to\mathbb{Z}^+$ a computable function. Then there is an FPT algorithm that, for any input $x$, outputs an approximation to $g(x)$ to within an additive error of $\frac{\epsilon\cdot2^{p(\abs{x})}}{f(\kappa(x))\cdot\mathrm{poly}(\abs{x})}$. Furthermore, if $\kappa$ is a parameterization such that $\abs{g(x)}\geq\frac{2^{p(\abs{x})}}{f(\kappa(x))\cdot\mathrm{poly}(\abs{x})}$, then the FPT algorithm outputs a multiplicative $\epsilon$-approximation to $g(x)$.
\end{theorem}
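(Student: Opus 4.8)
The plan is to carry over the proof of the preceding \GapP\ corollary, replacing the Hadamard test applied to the circuit of Fenner et al.~\cite{fenner1999determining} with a direct classical Monte Carlo estimate over the nondeterministic branches of a machine for $g$. First I would fix a nondeterministic polynomial-time Turing machine $M$ with $g=\mathrm{acc}_M-\mathrm{rej}_M$ and take $p$ to be the polynomial bounding the number of binary nondeterministic choices that $M$ makes (which, as in the preceding corollary and in Fenner et al., is the polynomial furnished by a \GapP-representation of $g$ rather than an arbitrary bound on the range); then $\abs{g(x)}\le 2^{p(\abs{x})}$ automatically, and $g(x)=\sum_{y\in\{0,1\}^{p(\abs{x})}}s(x,y)$ where $s(x,y)\in\{-1,0,+1\}$ is computable in time $\mathrm{poly}(\abs{x})$. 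Hence $s(x,Y)$ for a uniformly random branch $Y$ is a bounded random variable with expectation $g(x)/2^{p(\abs{x})}$.

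Next, for a target polynomial $r$, the algorithm would compute $\kappa(x)$ (in polynomial time) and $f(\kappa(x))$, draw $N\coloneqq O\big(f(\kappa(x))^2\cdot r(\abs{x})\cdot\epsilon^{-2}\big)$ independent uniform branches $y_1,\dots,y_N$, simulate $M$ on each to obtain the values $s(x,y_i)$, and output $\tilde g(x)\coloneqq\frac{2^{p(\abs{x})}}{N}\sum_{i=1}^N s(x,y_i)$. By the Chernoff-Hoeffding bound, with probability $1-e^{-\Omega(r(\abs{x}))}$ one has $\big|\frac{1}{N}\sum_i s(x,y_i)-g(x)/2^{p(\abs{x})}\big|\le \epsilon/(f(\kappa(x))\cdot r(\abs{x}))$, and rescaling by $2^{p(\abs{x})}$ yields $\abs{\tilde g(x)-g(x)}\le \epsilon\cdot 2^{p(\abs{x})}/(f(\kappa(x))\cdot r(\abs{x}))$, as required; the running time is $f(\kappa(x))^2\cdot\mathrm{poly}(\abs{x})$, so this is a randomized FPT algorithm. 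The multiplicative claim then follows exactly as Corollary~\ref{corollary:MultiplicativeErrorApproximationAlgorithmAmplitudes} follows from Theorem~\ref{theorem:AdditiveErrorApproximationAlgorithmAmplitudes}: under the promise $\abs{g(x)}\ge 2^{p(\abs{x})}/(f(\kappa(x))\cdot\mathrm{poly}(\abs{x}))$, a sufficiently large choice of $r$ makes the additive error at most $\epsilon\cdot\abs{g(x)}$.

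The step to be careful about is the normalisation rather than any hard inequality: the stated error is measured against $2^{p(\abs{x})}$, and that is precisely what a sample mean over $2^{p(\abs{x})}$ branches delivers, so $p$ must be read as the exponent coming from a \GapP-representation of $g$ --- the same role it plays, via Fenner et al., in the preceding corollary. This convention is genuinely necessary: once the branch-count exponent $m$ exceeds $p(\abs{x})$ by more than $O(\log\abs{x})$, telling $g(x)=0$ apart from $g(x)=2^{p(\abs{x})}$ requires detecting an exponentially small fraction of branches and hence super-polynomially many branch inspections, so $p$ cannot be chosen much below the branch-count exponent. Everything else is routine --- $M$ and $s$ are fixed independently of $x$, $f$ and $\kappa$ are computable, and the tail bound is standard --- and the only departure from the usual deterministic notion of \FPT\ is the use of randomness, which mirrors the sampling-based Theorem~\ref{theorem:AdditiveErrorApproximationAlgorithmAmplitudes} and cannot be eliminated without unproven derandomisation assumptions.
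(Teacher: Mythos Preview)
Your proposal is correct and follows essentially the same approach as the paper: sample uniformly random computational paths of a nondeterministic machine witnessing $g\in\GapP$, average the $\pm1$ outcomes, rescale by $2^{p(\abs{x})}$, and apply the Chernoff--Hoeffding bound with $N=f(\kappa(x))^2\cdot\mathrm{poly}(\abs{x})\cdot\epsilon^{-2}$ samples. Your discussion of why $p$ must be read as the branch-count exponent is a useful clarification that the paper leaves implicit, but otherwise the arguments coincide.
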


\begin{proof}
    For any instance $x$ of $g$, we evaluate $m$ computational paths uniformly at random from the $2^{p(\abs{x})}$ possible. Let $\{X_i\}_{i=1}^m$ be the set of random variables such that $X_i$ takes the value $+1$ if the $i^\mathrm{th}$ computational path accepts instance $x$ and $-1$ otherwise. We then approximate $g(x)$ by $\tilde{g}(x)=\frac{2^{p(\abs{x})}}{m}\sum_{i=1}^mX_i$, which has expectation value $\mathbb{E}[\tilde{g}(x)]=g(x)$. By taking $m=f(\kappa(x))^2\cdot\text{poly}(\abs{x})\cdot\epsilon^{-2}$, it follows from the Chernoff-Hoeffding bound that
    \begin{equation}
        \textbf{Pr}\left[\abs{g(x)-\tilde{g}(x)} \geq \frac{\epsilon\cdot2^{p(\abs{x})}}{f(\kappa(x))\cdot\text{poly}(\abs{x})}\right] \leq e^{-\text{poly}(\abs{x})}. \notag
    \end{equation}
    Then, with high probability, we have
    \begin{equation}
        \abs{g(x)-\tilde{g}(x)} \leq \frac{\epsilon\cdot2^{p(\abs{x})}}{f(\kappa(x))\cdot\text{poly}(\abs{x})}. \notag
    \end{equation}
    Furthermore, if $\kappa$ is a parameterization such that $\abs{g(x)}\geq\frac{2^{p(\abs{x})}}{f(\kappa(x))\cdot\text{poly}(\abs{x})}$, then we have
    \begin{equation}
        \abs{g(x)-\tilde{g}(x)} \leq \epsilon\abs{g(x)}. \notag
    \end{equation}
    This completes the proof.
\end{proof}

The results of this section concern approximately counting the number of accepting paths when there are a large number of them. However, it is also possible to approximately count the number of accepting paths when there are a small number of them provided we have access to an oracle that solves the decision problem~\cite{meeks2019randomised, dell2020approximately}. We remark that our results also apply to combinatorial structures of fixed-parameter tractable size. In particular, we can approximate the evaluation of Jones polynomials of the plat closure of braids with a fixed-parameter tractable number of crossings.

\subsection{Quantum Merlin Arthur Proofs}
\label{section:QuantumMerlinArthurProofs}

In this section we study the complexity class $\QMA_k$ and its connection to quantum parameterized complexity. Recall that $\QMA_k$ is the complexity class consisting of all languages for there exists a \QMA verification procedure on $k$ witness qubits. Marriott and Watrous~\cite{marriott2005quantum} proved that $\QMA_{\log}=\BQP$, we follow their analysis to establish an \FPQT algorithm for any problem in $\QMA_k$ when parameterized by the witness length $k$.

\begin{theorem}
    Let $L\subseteq\Sigma^*$ be a language in $\QMA_k$ and let $\kappa:\Sigma^*\to\mathbb{Z}^+$ be the parameterization  with $\kappa(x)=k$ for all $x \in L$, then $(L,\kappa)$ is in \FPQT.
\end{theorem}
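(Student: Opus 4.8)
The plan is to observe that the Marriott--Watrous procedure underlying the identity $\QMA_{\log}=\BQP$ in fact runs in time $2^{O(k)}\cdot\abs{x}^{O(1)}$ when applied to a $\QMA_k$ verifier, and that this is precisely the shape of an FPQT running time once the witness length $k=\kappa(x)$ is regarded as a parameter rather than forced to be $O(\log\abs{x})$.

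First I would recast the verifier in the standard operator language. Fix the $\QMA_k$ verification circuit $\mathcal{V}$ for $L$, acting on a $k$-qubit witness register $W$ and a $\mathrm{poly}(\abs{x})$-qubit ancilla register initialised to $\ket{0}$, and set $\Pi_0\coloneqq I_W\otimes\ket{0}\bra{0}$ and $\Pi_1\coloneqq\mathcal{V}^\dagger(\ket{1}\bra{1}_{\mathrm{out}}\otimes I)\mathcal{V}$. The maximum acceptance probability $\max_{\ket\psi}\textbf{Pr}[\mathcal{V}(x,\ket\psi)\text{ accepts}]$ equals the largest eigenvalue of the positive semidefinite operator $Q\coloneqq\Pi_0\Pi_1\Pi_0$ restricted to $\mathrm{im}(\Pi_0)$, a space of dimension $2^k$. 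By the $\QMA_k$ promise (after, if necessary, a round of the Marriott--Watrous in-place gap amplification, which does not increase the witness length), this eigenvalue is either at least $c$ or at most $s$ with $c-s\geq1/\mathrm{poly}(\abs{x})$.

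Next I would prepare $W$ in the maximally mixed state $I_W/2^k$ alongside $\ket{0}$ on the ancilla and run the Marriott--Watrous alternating-measurement estimator for the largest eigenvalue of $Q$. Because the maximally mixed state is a uniform mixture over any eigenbasis of $Q$, a single run processes the optimal-witness eigenvector with probability $2^{-k}$; carrying through the Marriott--Watrous analysis, $2^{O(k)}\cdot\mathrm{poly}(\abs{x})$ runs suffice to estimate $\lambda_{\max}(Q)$ to additive error less than $(c-s)/2$, and hence to decide $x\in L$ with error at most $1/3$. Each run uses $\mathrm{poly}(\abs{x})$ gates drawn from an FPT-uniform family, so the resulting algorithm has size $f(k)\cdot\abs{x}^{O(1)}$ with $f(k)=2^{O(k)}$; taking $\kappa$ to be the constant parameterization $\kappa\equiv k$ (so $\kappa(x)=k$ for every $x$, in particular for every $x\in L$), this is exactly an FPQT algorithm for $(L,\kappa)$, whence $(L,\kappa)\in\FPQT$.

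The step I expect to be the main obstacle is the accounting in the last paragraph: one must check, inside the Marriott--Watrous analysis, that the cost splits \emph{multiplicatively} into a $2^{O(k)}$ factor attributable to the dimension of the witness space (which collapses to $\mathrm{poly}(\abs{x})$ precisely when $k=O(\log\abs{x})$, recovering $\QMA_{\log}=\BQP$) and a $\mathrm{poly}(\abs{x})$ factor attributable to the inverse-polynomial promise gap and the size of $\mathcal{V}$. Verifying that these combine as $f(k)\cdot\abs{x}^{O(1)}$ rather than as $(f(k)\cdot\abs{x})^{O(1)}$ is the only delicate point; the rest is a direct transcription of the $\QMA_{\log}=\BQP$ argument of Marriott and Watrous.
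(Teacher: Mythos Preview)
Your proposal is correct and follows essentially the same Marriott--Watrous route as the paper: feed the maximally mixed $k$-qubit state to the verifier and absorb the resulting $2^{O(k)}$ overhead into $f(\kappa(x))$. The paper sidesteps your flagged concern about the multiplicative cost split by first amplifying in place to soundness $\tfrac{1}{3}2^{-k}$, so that a \emph{single} run on $2^{-k}I_k$ already gives acceptance probabilities $\geq\tfrac{2}{3}2^{-k}$ versus $\leq\tfrac{1}{3}2^{-k}$ via the trace bound, after which $2^{O(k)}$ plain repetitions (no alternating-measurement estimator needed) make the $f(k)\cdot\abs{x}^{O(1)}$ shape immediate.
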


\begin{proof}
    Since $L$ is in $\QMA_k(\frac{2}{3},\frac{1}{3})$, by Ref.~\cite[Theorem 3.3]{marriott2005quantum}, we have that $L$ is in $\QMA_k(c,s)$ with $c=1-\frac{1}{3}2^{-k}$ and $s=\frac{1}{3}2^{-k}$. Let $\mathcal{A}$ be a $k$-qubit verification procedure for $L$ with $n$ workspace qubits and let $\{\Pi_0, \Pi_1\}$ be a measurement defined by the projectors
    \begin{equation}
        \Pi_0 \coloneqq \ketbra{0}{0}\otimes\mathbb{I}_{k+n-1} \quad\text{and}\quad \Pi_1 \coloneqq \ketbra{1}{1}\otimes\mathbb{I}_{k+l-1}, \notag
    \end{equation}
    which decides whether $\mathcal{A}$ accepts or rejects.
    For each $x\in\Sigma^*$, we define an operator $\mathcal{Q}_x$ by
    \begin{equation}
       \mathcal{Q}_x \coloneqq \left(\mathbb{I}_k\otimes\bra{0^n}\right)\mathcal{A}^\dagger(x)\Pi_1\mathcal{A}(x)\left(\mathbb{I}_k\otimes\ket{0^n}\right). \notag
    \end{equation}
    Since $\mathcal{Q}_x$ is positive semidefinite, then there exists a quantum state $\ket{\psi}$, such that, if $x \in L$, then
    \begin{equation}
        \Tr(\mathcal{Q}_x) \geq \expval{\mathcal{Q}_x}{\psi} = \textbf{Pr}[\text{$\mathcal{A}(x,\ket{\psi})$ accepts}] \geq 1-\frac{1}{3}2^{-k} \geq \frac{2}{3}. \notag
    \end{equation}
    Similarly, if $x \notin L$, then
    \begin{equation}
        \Tr(\mathcal{Q}_x) \leq 2^k\expval{\mathcal{Q}_x}{\psi} = 2^k\textbf{Pr}[\text{$\mathcal{A}(x,\ket{\psi})$ accepts}] \leq \frac{1}{3}. \notag
    \end{equation}
    We now establish an FPQT algorithm $\mathcal{B}$ for deciding $(L,\kappa)$. The algorithm $\mathcal{B}$ constructs the maximally mixed state $2^{-k}\mathbb{I}_k$ on $k$ qubits and then runs the verification procedure $\mathcal{A}$ with input $2^{-k}\mathbb{I}_k$. We then have
    \begin{equation}
         \textbf{Pr}[\text{$\mathcal{B}(x)$ accepts}] = 2^{-k}\Tr(\mathcal{Q}_x). \notag
    \end{equation}
    Hence, if $x \in L$, then
    \begin{equation}
        \textbf{Pr}[\text{$\mathcal{B}(x)$ accepts}] \geq \frac{2}{3}2^{-k}. \notag
    \end{equation}
    Otherwise, if $x \notin L$, then
    \begin{equation}
        \textbf{Pr}[\text{$\mathcal{B}(x)$ accepts}] \leq \frac{1}{3}2^{-k}. \notag
    \end{equation}
    Since these probabilities are bounded away from one another by an inverse exponential in $k$, by a simple gap amplification argument, we obtain an FPQT algorithm for $(L,\kappa)$. Hence, $(L,\kappa)$ is in \FPQT.
\end{proof}

\section*{Acknowledgements}

We thank M\'{a}ria Kieferov\'{a}, Kitty Meeks, Ashley Montanaro, Stephen Piddock, and Youming Qiao for helpful discussions. MJB was supported by the Australian Research Council (ARC) Centre of Excellence for Quantum Computation and Communication Technology (CQC2T), project number CE170100012, and in part by the National Science Foundation under Grant No. PHY-1748958 while visiting the KITP. ZJ was supported by the ARC Discovery Project DP200100950. RLM was supported by the QuantERA ERA-NET Cofund in Quantum Technologies implemented within the European Union's Horizon 2020 Programme (QuantAlgo project), EPSRC grants EP/L021005/1, EP/R043957/1, and EP/T001062/1, and the ARC Centre of Excellence for Quantum Computation and Communication Technology (CQC2T), project number CE170100012. MESM was supported by the ARC Centre of Excellence for Quantum Computation and Communication Technology (CQC2T), project number CE170100012 and a scholarship top-up and extension from the Sydney Quantum Academy. ATES was supported by an Australian Government Research Training Program Scholarship, the ARC Centre of Excellence for Quantum Computation and Communication Technology (CQC2T), project number CE170100012, and a scholarship top-up and extension from the Sydney Quantum Academy. No new data were created during this study.

\bibliography{bibliography}

\begin{thebibliography}{22}%
\makeatletter
\providecommand \@ifxundefined [1]{%
 \@ifx{#1\undefined}
}%
\providecommand \@ifnum [1]{%
 \ifnum #1\expandafter \@firstoftwo
 \else \expandafter \@secondoftwo
 \fi
}%
\providecommand \@ifx [1]{%
 \ifx #1\expandafter \@firstoftwo
 \else \expandafter \@secondoftwo
 \fi
}%
\providecommand \natexlab [1]{#1}%
\providecommand \enquote  [1]{``#1''}%
\providecommand \bibnamefont  [1]{#1}%
\providecommand \bibfnamefont [1]{#1}%
\providecommand \citenamefont [1]{#1}%
\providecommand \href@noop [0]{\@secondoftwo}%
\providecommand \href [0]{\begingroup \@sanitize@url \@href}%
\providecommand \@href[1]{\@@startlink{#1}\@@href}%
\providecommand \@@href[1]{\endgroup#1\@@endlink}%
\providecommand \@sanitize@url [0]{\catcode `\\12\catcode `\$12\catcode
  `\&12\catcode `\#12\catcode `\^12\catcode `\_12\catcode `\%12\relax}%
\providecommand \@@startlink[1]{}%
\providecommand \@@endlink[0]{}%
\providecommand \url  [0]{\begingroup\@sanitize@url \@url }%
\providecommand \@url [1]{\endgroup\@href {#1}{\urlprefix }}%
\providecommand \urlprefix  [0]{URL }%
\providecommand \Eprint [0]{\href }%
\providecommand \doibase [0]{https://doi.org/}%
\providecommand \selectlanguage [0]{\@gobble}%
\providecommand \bibinfo  [0]{\@secondoftwo}%
\providecommand \bibfield  [0]{\@secondoftwo}%
\providecommand \translation [1]{[#1]}%
\providecommand \BibitemOpen [0]{}%
\providecommand \bibitemStop [0]{}%
\providecommand \bibitemNoStop [0]{.\EOS\space}%
\providecommand \EOS [0]{\spacefactor3000\relax}%
\providecommand \BibitemShut  [1]{\csname bibitem#1\endcsname}%
\let\auto@bib@innerbib\@empty
\bibitem [{\citenamefont {Downey}\ and\ \citenamefont
  {Fellows}(1999)}]{downey1999parameterized}%
  \BibitemOpen
  \bibfield  {author} {\bibinfo {author} {\bibfnamefont {R.~G.}\ \bibnamefont
  {Downey}}\ and\ \bibinfo {author} {\bibfnamefont {M.~R.}\ \bibnamefont
  {Fellows}},\ }\href {https://doi.org/10.1007/978-1-4612-0515-9} {\emph
  {\bibinfo {title} {Parameterized Complexity}}}\ (\bibinfo  {publisher}
  {Springer New York},\ \bibinfo {year} {1999})\BibitemShut {NoStop}%
\bibitem [{\citenamefont {Flum}\ and\ \citenamefont
  {Grohe}(2006)}]{flum2006parameterized}%
  \BibitemOpen
  \bibfield  {author} {\bibinfo {author} {\bibfnamefont {J.}~\bibnamefont
  {Flum}}\ and\ \bibinfo {author} {\bibfnamefont {M.}~\bibnamefont {Grohe}},\
  }\href {https://doi.org/10.1007/3-540-29953-x} {\emph {\bibinfo {title}
  {Parameterized Complexity Theory}}}\ (\bibinfo  {publisher} {Springer Berlin
  Heidelberg},\ \bibinfo {year} {2006})\BibitemShut {NoStop}%
\bibitem [{\citenamefont {Downey}\ and\ \citenamefont
  {Fellows}(2013)}]{downey2013fundamentals}%
  \BibitemOpen
  \bibfield  {author} {\bibinfo {author} {\bibfnamefont {R.~G.}\ \bibnamefont
  {Downey}}\ and\ \bibinfo {author} {\bibfnamefont {M.~R.}\ \bibnamefont
  {Fellows}},\ }\href {https://doi.org/10.1007/978-1-4471-5559-1} {\emph
  {\bibinfo {title} {Fundamentals of Parameterized Complexity}}},\
  Vol.~\bibinfo {volume} {4}\ (\bibinfo  {publisher} {Springer},\ \bibinfo
  {year} {2013})\BibitemShut {NoStop}%
\bibitem [{\citenamefont {Cygan}\ \emph {et~al.}(2015)\citenamefont {Cygan},
  \citenamefont {Fomin}, \citenamefont {Kowalik}, \citenamefont {Lokshtanov},
  \citenamefont {Marx}, \citenamefont {Pilipczuk}, \citenamefont {Pilipczuk},\
  and\ \citenamefont {Saurabh}}]{cygan2015parameterized}%
  \BibitemOpen
  \bibfield  {author} {\bibinfo {author} {\bibfnamefont {M.}~\bibnamefont
  {Cygan}}, \bibinfo {author} {\bibfnamefont {F.~V.}\ \bibnamefont {Fomin}},
  \bibinfo {author} {\bibfnamefont {{\L}.}~\bibnamefont {Kowalik}}, \bibinfo
  {author} {\bibfnamefont {D.}~\bibnamefont {Lokshtanov}}, \bibinfo {author}
  {\bibfnamefont {D.}~\bibnamefont {Marx}}, \bibinfo {author} {\bibfnamefont
  {M.}~\bibnamefont {Pilipczuk}}, \bibinfo {author} {\bibfnamefont
  {M.}~\bibnamefont {Pilipczuk}},\ and\ \bibinfo {author} {\bibfnamefont
  {S.}~\bibnamefont {Saurabh}},\ }\href
  {https://doi.org/10.1007/978-3-319-21275-3} {\emph {\bibinfo {title}
  {Parameterized Algorithms}}}\ (\bibinfo  {publisher} {Springer},\ \bibinfo
  {year} {2015})\BibitemShut {NoStop}%
\bibitem [{\citenamefont {Chen}\ \emph {et~al.}(2010)\citenamefont {Chen},
  \citenamefont {Kanj},\ and\ \citenamefont {Xia}}]{chen2010improved}%
  \BibitemOpen
  \bibfield  {author} {\bibinfo {author} {\bibfnamefont {J.}~\bibnamefont
  {Chen}}, \bibinfo {author} {\bibfnamefont {I.~A.}\ \bibnamefont {Kanj}},\
  and\ \bibinfo {author} {\bibfnamefont {G.}~\bibnamefont {Xia}},\ }\href
  {https://doi.org/10.1016/j.tcs.2010.06.026} {\bibfield  {journal} {\bibinfo
  {journal} {Theoretical Computer Science}\ }\textbf {\bibinfo {volume}
  {411}},\ \bibinfo {pages} {3736} (\bibinfo {year} {2010})}\BibitemShut
  {NoStop}%
\bibitem [{\citenamefont {Markov}\ and\ \citenamefont
  {Shi}(2008)}]{markov2008simulating}%
  \BibitemOpen
  \bibfield  {author} {\bibinfo {author} {\bibfnamefont {I.~L.}\ \bibnamefont
  {Markov}}\ and\ \bibinfo {author} {\bibfnamefont {Y.}~\bibnamefont {Shi}},\
  }\href {https://doi.org/10.1137/050644756} {\bibfield  {journal} {\bibinfo
  {journal} {SIAM Journal on Computing}\ }\textbf {\bibinfo {volume} {38}},\
  \bibinfo {pages} {963} (\bibinfo {year} {2008})},\ \Eprint
  {https://arxiv.org/abs/quant-ph/0511069} {arXiv:quant-ph/0511069}
  \BibitemShut {NoStop}%
\bibitem [{\citenamefont {Bravyi}\ \emph {et~al.}(2016)\citenamefont {Bravyi},
  \citenamefont {Smith},\ and\ \citenamefont {Smolin}}]{bravyi2016trading}%
  \BibitemOpen
  \bibfield  {author} {\bibinfo {author} {\bibfnamefont {S.}~\bibnamefont
  {Bravyi}}, \bibinfo {author} {\bibfnamefont {G.}~\bibnamefont {Smith}},\ and\
  \bibinfo {author} {\bibfnamefont {J.~A.}\ \bibnamefont {Smolin}},\ }\href
  {https://doi.org/10.1103/physrevx.6.021043} {\bibfield  {journal} {\bibinfo
  {journal} {Physical Review X}\ }\textbf {\bibinfo {volume} {6}},\ \bibinfo
  {pages} {021043} (\bibinfo {year} {2016})},\ \Eprint
  {https://arxiv.org/abs/1506.01396} {arXiv:1506.01396} \BibitemShut {NoStop}%
\bibitem [{\citenamefont {Bravyi}\ and\ \citenamefont
  {Gosset}(2016)}]{bravyi2016improved}%
  \BibitemOpen
  \bibfield  {author} {\bibinfo {author} {\bibfnamefont {S.}~\bibnamefont
  {Bravyi}}\ and\ \bibinfo {author} {\bibfnamefont {D.}~\bibnamefont
  {Gosset}},\ }\href {https://doi.org/10.1103/physrevlett.116.250501}
  {\bibfield  {journal} {\bibinfo  {journal} {Physical Review Letters}\
  }\textbf {\bibinfo {volume} {116}},\ \bibinfo {pages} {250501} (\bibinfo
  {year} {2016})},\ \Eprint {https://arxiv.org/abs/1601.07601}
  {arXiv:1601.07601} \BibitemShut {NoStop}%
\bibitem [{\citenamefont {Bravyi}\ \emph {et~al.}(2019)\citenamefont {Bravyi},
  \citenamefont {Browne}, \citenamefont {Calpin}, \citenamefont {Campbell},
  \citenamefont {Gosset},\ and\ \citenamefont {Howard}}]{bravyi2019simulation}%
  \BibitemOpen
  \bibfield  {author} {\bibinfo {author} {\bibfnamefont {S.}~\bibnamefont
  {Bravyi}}, \bibinfo {author} {\bibfnamefont {D.}~\bibnamefont {Browne}},
  \bibinfo {author} {\bibfnamefont {P.}~\bibnamefont {Calpin}}, \bibinfo
  {author} {\bibfnamefont {E.}~\bibnamefont {Campbell}}, \bibinfo {author}
  {\bibfnamefont {D.}~\bibnamefont {Gosset}},\ and\ \bibinfo {author}
  {\bibfnamefont {M.}~\bibnamefont {Howard}},\ }\href
  {https://doi.org/10.22331/q-2019-09-02-181} {\bibfield  {journal} {\bibinfo
  {journal} {Quantum}\ }\textbf {\bibinfo {volume} {3}},\ \bibinfo {pages}
  {181} (\bibinfo {year} {2019})},\ \Eprint {https://arxiv.org/abs/1808.00128}
  {arXiv:1808.00128} \BibitemShut {NoStop}%
\bibitem [{\citenamefont {Arunachalam}\ \emph {et~al.}(2022)\citenamefont
  {Arunachalam}, \citenamefont {Bravyi}, \citenamefont {Nirkhe},\ and\
  \citenamefont {O'Gorman}}]{arunachalam2022parameterized}%
  \BibitemOpen
  \bibfield  {author} {\bibinfo {author} {\bibfnamefont {S.}~\bibnamefont
  {Arunachalam}}, \bibinfo {author} {\bibfnamefont {S.}~\bibnamefont {Bravyi}},
  \bibinfo {author} {\bibfnamefont {C.}~\bibnamefont {Nirkhe}},\ and\ \bibinfo
  {author} {\bibfnamefont {B.}~\bibnamefont {O'Gorman}},\ }\href@noop {}
  {\bibfield  {journal} {\bibinfo  {journal} {arXiv eprints}\ } (\bibinfo
  {year} {2022})},\ \Eprint {https://arxiv.org/abs/2202.08119}
  {arXiv:2202.08119} \BibitemShut {NoStop}%
\bibitem [{\citenamefont {Bernstein}\ and\ \citenamefont
  {Vazirani}(1997)}]{bernstein1997quantum}%
  \BibitemOpen
  \bibfield  {author} {\bibinfo {author} {\bibfnamefont {E.}~\bibnamefont
  {Bernstein}}\ and\ \bibinfo {author} {\bibfnamefont {U.}~\bibnamefont
  {Vazirani}},\ }\href {https://doi.org/10.1137/S0097539796300921} {\bibfield
  {journal} {\bibinfo  {journal} {SIAM Journal on Computing}\ }\textbf
  {\bibinfo {volume} {26}},\ \bibinfo {pages} {1411} (\bibinfo {year}
  {1997})}\BibitemShut {NoStop}%
\bibitem [{\citenamefont {Watrous}(2000)}]{watrous2000succinct}%
  \BibitemOpen
  \bibfield  {author} {\bibinfo {author} {\bibfnamefont {J.}~\bibnamefont
  {Watrous}},\ }in\ \href {https://doi.org/10.1109/sfcs.2000.892141} {\emph
  {\bibinfo {booktitle} {Proceedings 41st Annual Symposium on Foundations of
  Computer Science}}}\ (\bibinfo {organization} {IEEE},\ \bibinfo {year}
  {2000})\ pp.\ \bibinfo {pages} {537--546},\ \Eprint
  {https://arxiv.org/abs/cs/0009002} {arXiv:cs/0009002} \BibitemShut {NoStop}%
\bibitem [{\citenamefont {Kitaev}\ \emph {et~al.}(2002)\citenamefont {Kitaev},
  \citenamefont {Shen}, \citenamefont {Vyalyi},\ and\ \citenamefont
  {Vyalyi}}]{kitaev2002classical}%
  \BibitemOpen
  \bibfield  {author} {\bibinfo {author} {\bibfnamefont {A.~Y.}\ \bibnamefont
  {Kitaev}}, \bibinfo {author} {\bibfnamefont {A.}~\bibnamefont {Shen}},
  \bibinfo {author} {\bibfnamefont {M.~N.}\ \bibnamefont {Vyalyi}},\ and\
  \bibinfo {author} {\bibfnamefont {M.~N.}\ \bibnamefont {Vyalyi}},\ }\href
  {https://doi.org/10.1090/gsm/047} {\emph {\bibinfo {title} {Classical and
  Quantum Computation}}},\ \bibinfo {series} {Graduate Studies in Mathematics}\
  No.~\bibinfo {number} {47}\ (\bibinfo  {publisher} {American Mathematical
  Society},\ \bibinfo {year} {2002})\BibitemShut {NoStop}%
\bibitem [{\citenamefont {Kempe}\ \emph {et~al.}(2006)\citenamefont {Kempe},
  \citenamefont {Kitaev},\ and\ \citenamefont {Regev}}]{kempe2006complexity}%
  \BibitemOpen
  \bibfield  {author} {\bibinfo {author} {\bibfnamefont {J.}~\bibnamefont
  {Kempe}}, \bibinfo {author} {\bibfnamefont {A.}~\bibnamefont {Kitaev}},\ and\
  \bibinfo {author} {\bibfnamefont {O.}~\bibnamefont {Regev}},\ }\href
  {https://doi.org/10.1137/S0097539704445226} {\bibfield  {journal} {\bibinfo
  {journal} {SIAM Journal on Computing}\ }\textbf {\bibinfo {volume} {35}},\
  \bibinfo {pages} {1070} (\bibinfo {year} {2006})},\ \Eprint
  {https://arxiv.org/abs/quant-ph/0406180} {arXiv:quant-ph/0406180}
  \BibitemShut {NoStop}%
\bibitem [{\citenamefont {Watrous}(2009)}]{watrous2009quantum}%
  \BibitemOpen
  \bibfield  {author} {\bibinfo {author} {\bibfnamefont {J.}~\bibnamefont
  {Watrous}},\ }\href {https://doi.org/10.1007/978-0-387-30440-3_428}
  {\bibfield  {journal} {\bibinfo  {journal} {Encyclopedia of Complexity and
  Systems Science}\ ,\ \bibinfo {pages} {7174}} (\bibinfo {year} {2009})},\
  \Eprint {https://arxiv.org/abs/0804.3401} {arXiv:0804.3401} \BibitemShut
  {NoStop}%
\bibitem [{\citenamefont {Gharibian}\ and\ \citenamefont
  {Sikora}(2018)}]{gharibian2018ground}%
  \BibitemOpen
  \bibfield  {author} {\bibinfo {author} {\bibfnamefont {S.}~\bibnamefont
  {Gharibian}}\ and\ \bibinfo {author} {\bibfnamefont {J.}~\bibnamefont
  {Sikora}},\ }\href {https://doi.org/10.1145/3186587} {\bibfield  {journal}
  {\bibinfo  {journal} {ACM Transactions on Computation Theory}\ }\textbf
  {\bibinfo {volume} {10}},\ \bibinfo {pages} {8} (\bibinfo {year} {2018})},\
  \Eprint {https://arxiv.org/abs/1409.3182} {arXiv:1409.3182} \BibitemShut
  {NoStop}%
\bibitem [{\citenamefont {Marriott}\ and\ \citenamefont
  {Watrous}(2005)}]{marriott2005quantum}%
  \BibitemOpen
  \bibfield  {author} {\bibinfo {author} {\bibfnamefont {C.}~\bibnamefont
  {Marriott}}\ and\ \bibinfo {author} {\bibfnamefont {J.}~\bibnamefont
  {Watrous}},\ }\href {https://doi.org/10.1007/s00037-005-0194-x} {\bibfield
  {journal} {\bibinfo  {journal} {Computational Complexity}\ }\textbf {\bibinfo
  {volume} {14}},\ \bibinfo {pages} {122} (\bibinfo {year} {2005})},\ \Eprint
  {https://arxiv.org/abs/cs/0506068} {arXiv:cs/0506068} \BibitemShut {NoStop}%
\bibitem [{\citenamefont {Freedman}\ \emph {et~al.}(2002)\citenamefont
  {Freedman}, \citenamefont {Larsen},\ and\ \citenamefont
  {Wang}}]{freedman2002modular}%
  \BibitemOpen
  \bibfield  {author} {\bibinfo {author} {\bibfnamefont {M.~H.}\ \bibnamefont
  {Freedman}}, \bibinfo {author} {\bibfnamefont {M.}~\bibnamefont {Larsen}},\
  and\ \bibinfo {author} {\bibfnamefont {Z.}~\bibnamefont {Wang}},\ }\href
  {https://doi.org/10.1007/s002200200645} {\bibfield  {journal} {\bibinfo
  {journal} {Communications in Mathematical Physics}\ }\textbf {\bibinfo
  {volume} {227}},\ \bibinfo {pages} {605} (\bibinfo {year} {2002})},\ \Eprint
  {https://arxiv.org/abs/quant-ph/0001108} {arXiv:quant-ph/0001108}
  \BibitemShut {NoStop}%
\bibitem [{\citenamefont {Aharonov}\ \emph {et~al.}(2009)\citenamefont
  {Aharonov}, \citenamefont {Jones},\ and\ \citenamefont
  {Landau}}]{aharonov2009polynomial}%
  \BibitemOpen
  \bibfield  {author} {\bibinfo {author} {\bibfnamefont {D.}~\bibnamefont
  {Aharonov}}, \bibinfo {author} {\bibfnamefont {V.}~\bibnamefont {Jones}},\
  and\ \bibinfo {author} {\bibfnamefont {Z.}~\bibnamefont {Landau}},\ }\href
  {https://doi.org/10.1007/s00453-008-9168-0} {\bibfield  {journal} {\bibinfo
  {journal} {Algorithmica}\ }\textbf {\bibinfo {volume} {55}},\ \bibinfo
  {pages} {395} (\bibinfo {year} {2009})},\ \Eprint
  {https://arxiv.org/abs/quant-ph/0511096} {arXiv:quant-ph/0511096}
  \BibitemShut {NoStop}%
\bibitem [{\citenamefont {Fenner}\ \emph {et~al.}(1999)\citenamefont {Fenner},
  \citenamefont {Green}, \citenamefont {Homer},\ and\ \citenamefont
  {Pruim}}]{fenner1999determining}%
  \BibitemOpen
  \bibfield  {author} {\bibinfo {author} {\bibfnamefont {S.}~\bibnamefont
  {Fenner}}, \bibinfo {author} {\bibfnamefont {F.}~\bibnamefont {Green}},
  \bibinfo {author} {\bibfnamefont {S.}~\bibnamefont {Homer}},\ and\ \bibinfo
  {author} {\bibfnamefont {R.}~\bibnamefont {Pruim}},\ }\href
  {https://doi.org/10.1098/rspa.1999.0485} {\bibfield  {journal} {\bibinfo
  {journal} {Proceedings of the Royal Society of London. Series A:
  Mathematical, Physical and Engineering Sciences}\ }\textbf {\bibinfo {volume}
  {455}},\ \bibinfo {pages} {3953} (\bibinfo {year} {1999})},\ \Eprint
  {https://arxiv.org/abs/quant-ph/9812056} {arXiv:quant-ph/9812056}
  \BibitemShut {NoStop}%
\bibitem [{\citenamefont {Meeks}(2019)}]{meeks2019randomised}%
  \BibitemOpen
  \bibfield  {author} {\bibinfo {author} {\bibfnamefont {K.}~\bibnamefont
  {Meeks}},\ }\href {https://doi.org/10.1007/s00453-018-0404-y} {\bibfield
  {journal} {\bibinfo  {journal} {Algorithmica}\ }\textbf {\bibinfo {volume}
  {81}},\ \bibinfo {pages} {519} (\bibinfo {year} {2019})},\ \Eprint
  {https://arxiv.org/abs/1509.05572} {arXiv:1509.05572} \BibitemShut {NoStop}%
\bibitem [{\citenamefont {Dell}\ \emph {et~al.}(2020)\citenamefont {Dell},
  \citenamefont {Lapinskas},\ and\ \citenamefont
  {Meeks}}]{dell2020approximately}%
  \BibitemOpen
  \bibfield  {author} {\bibinfo {author} {\bibfnamefont {H.}~\bibnamefont
  {Dell}}, \bibinfo {author} {\bibfnamefont {J.}~\bibnamefont {Lapinskas}},\
  and\ \bibinfo {author} {\bibfnamefont {K.}~\bibnamefont {Meeks}},\ }in\ \href
  {https://doi.org/10.1137/1.9781611975994.135} {\emph {\bibinfo {booktitle}
  {Proceedings of the Fourteenth Annual ACM-SIAM Symposium on Discrete
  Algorithms}}}\ (\bibinfo {organization} {SIAM},\ \bibinfo {year} {2020})\
  pp.\ \bibinfo {pages} {2201--2211},\ \Eprint
  {https://arxiv.org/abs/1907.04826} {arXiv:1907.04826} \BibitemShut {NoStop}%
\end{thebibliography}%

\end{document}